\newcommand\vldbpagestyle{plain} 
\newtheorem{fact}{Fact}
\newtheorem{observation}{Observation}
\newtheorem{definition}{Definition}
\newtheorem{theorem}{Theorem}
\newtheorem{lemma}{Lemma}
\newtheorem{condition}{Condition}
\newtheorem{corollary}{Corollary}
\newcommand{\authorsforshort}{\emph{et al.}}
\newcommand{\alvin}[1]{\textcolor{purple}{Alvin: #1}}
\newcommand{\junhao}[1]{\textcolor{blue}{Junhao: #1}}
\newcommand{\jz}[1]{\textcolor{orange}{JZ: #1}}
\newcommand{\algoagp}{{\em AGP-Static}}
\newcommand{\algosta}{{\em AGP-Static++}}
\newcommand{\algodyn}{{\em AGP-Dynamic}}
\newcommand{\algodpss}{{\em AGP-DPSS}}
\def\cso{c_{\text{so}}}
\def\pr{\text{Pr}}
\def\E{\text{E}}
\def\Var{\text{Var}}
\begin{document}
\title{Approximate Graph Propagation Revisited: \\Dynamic Parameterized Queries, Tighter Bounds and  Dynamic Updates}

\author{Zhuowei Zhao}
\affiliation{%
  \institution{The University of Melbourne}
  }
\email{zhuoweiz1@student.unimelb.edu.au}

\author{Zhuo Zhang}
\affiliation{%
  \institution{The University of Melbourne}
  }
\email{zhuo.zhang@student.unimelb.edu.au}

\author{Hanzhi Wang}
\affiliation{%
  \institution{University of Copenhagen}
  }
\email{hanzhi.hzwang@gmail.com}

\author{Junhao Gan}
\affiliation{%
  \institution{The University of Melbourne}
  }
\email{junhao.gan@unimelb.edu.au}

\author{Zhifeng Bao}
\affiliation{%
  \institution{The University of  Queensland}
  }
\email{zhifeng.bao@uq.edu.au}

\author{Jianzhong Qi}
\affiliation{%
  \institution{The University of Melbourne}
  }
\email{jianzhong.qi@unimelb.edu.au}

\begin{abstract}
We revisit \emph{Approximate Graph Propagation} (AGP), a unified framework that captures various graph propagation tasks, such as PageRank, Personalized PageRank, feature propagation in Graph Neural Networks, and graph-based Retrieval-Augmented Generation. 
Our work focuses on the settings of \emph{dynamic graphs} and \emph{dynamic parameterized queries}, where the underlying graphs evolve over time (updated by edge insertions or deletions) and the input query parameters are specified on the fly to fit application needs. 
Our first contribution is an interesting observation that the SOTA solution, \emph{AGP-Static}, can be adapted to support dynamic parameterized queries; however, several challenges remain unresolved.
Firstly, the query time complexity of AGP-Static is based on an assumption of using an optimal algorithm for its subset sampling. Unfortunately, back to that time, such an algorithm did not exist; without such an optimal algorithm, an extra $O(\log^2 n)$ factor is required in the query complexity, where $n$ is the number of vertices in the graphs.
Secondly, AGP-Static performs poorly on dynamic graphs, taking $O(n\log n)$ time to process each update.
To address these challenges, 
we propose a new algorithm, \emph{AGP-Static++}, which is simpler yet reduces roughly a factor of $O(\log^2 n)$ in the query complexity while preserving the approximation guarantees of AGP-Static.
However, AGP-Static++ still requires $O(n)$ time per update. 
To better support dynamic graphs, we further propose 
\algodyn, which achieves \emph{$O(1)$ amortized time per update}, significantly improving the aforementioned $O(n)$ per-update bound, while still preserving the query complexity and approximation guarantees.
Last, our comprehensive experiments validate the theoretical improvements: 
compared to the baselines, our algorithm achieves speedups of up to $177\times$ on 
update and $10\times$ on query efficiency.

\end{abstract}

\maketitle

\pagestyle{\vldbpagestyle}


\section{Introduction}
Node proximity evaluations, 
such as PageRank~\cite{chung2007heat,jung2017bepi} and Personalized PageRank~\cite{Page1999ThePC},
have shown great usefulness in various graph mining and machine learning tasks, including but not limited to  
link prediction~\cite{Page1999ThePC}, graph representation learning~\cite{wu2019simplifying}, spam detection~\cite{gyongyi2004combating}, and web search~\cite{Page1999ThePC}. 
Recent studies have integrated node proximity with Graph Neural Networks (GNNs) to enhance scalability, simplifying the original GNN architecture by replacing feature transformation across multiple layers with a feature propagation step~\cite{wu2019simplifying}. 
Depending on concrete applications, 
different node proximity models are deployed.  
For example, Personalized PageRank~\cite{Page1999ThePC} is used in PPRGo~\cite{bojchevski2020scaling}, while heat kernel PageRank~\cite{foster2001faster} is adopted in \emph{graph diffusion convolution} (GDC)~\cite{gasteiger2019diffusion}. 
%
Recently, Wang~\authorsforshort~\cite{wang2021approximate}
unify a wide range of node proximity models into the following {\em graph propagation} equation:
\vspace{-1mm}
\begin{equation}
\label{eq:1}
\boldsymbol{\pi} = \sum^{\infty}_{i = 0}w_i \cdot (\mathbf{D^{-a}}\mathbf{A}\mathbf{D^{-b}})^i\cdot \mathbf{x}\,,
\end{equation}
where $\boldsymbol{\pi}$ is the node proximity evaluation result vector, $\mathbf{A}$ and $\mathbf{D}$ are the adjacency matrix and the diagonal degree matrix of the underlying undirected graph $G = \langle V, E\rangle$,  respectively, while $a$, $b$, $w_i$ and $\mathbf{x}$ are query parameters 
specifying the node proximity model.
As shown in Table~\ref{tab:example}, with different parameter settings,  
Equation~\eqref{eq:1} captures a range of graph propagation applications.




In most cases, computing the exact solution to Equation~\eqref{eq:1} can be expensive. 
To achieve better efficiency, 
Wang~\authorsforshort~\cite{wang2021approximate} propose 
the notion of $(\delta, c)$-approximation,
which only aims to provide approximations within a $c$-relative error to entries where $\boldsymbol{\pi}(v) \geq \delta$, while ignoring the errors for entries with small values, i.e., $\boldsymbol{\pi}(v) < \delta$.
They further proposed a state-of-the-art algorithm, called 
{\em Approximate Graph Propagation} (denoted as \algoagp), for computing $(\delta, c)$-approximate results for queries on {\em static graphs with fixed and pre-specified parameters $w_i$'s, $a$ and $b$}. 

\vspace{2mm}
\noindent
{\bf SOTA's Limitation 1: the Static Setting.}
Unfortunately, such {\em static} setting of \algoagp~with static graphs and pre-specified query parameters 
means that \algoagp~can only work on a fixed and pre-specified 
node proximity model on a static graph, which
indeed has  
significantly 
limited its applicability in real-world applications, 
especially, when   
the underlying graph $G$ is frequently updated and the support of varying node proximity evaluations is required, as illustrated in the case study below.

\begin{table}
\centering
\caption{Graph Propagation Applications}
\vspace{-2mm}
\setlength{\tabcolsep}{2pt}
\resizebox{0.48\textwidth}{!}{
\begin{tabular}{c|c|c|c|c}
\hline
\textbf{Algorithm}             & \rule{0pt}{12pt} $a$    \rule[-5pt]{0pt}{0pt}         & $b$             & $w_i$ & $\mathbf{x}$                               \\ \toprule
L-hop transition probability              & $0$             & $1$             &  \rule{0pt}{12pt}$w_i = 0 (i \neq L), W_L = 1$    \rule[-5pt]{0pt}{0pt} &   one-hot vector                                \\ \hline
PageRank~\cite{Page1999ThePC}              & $0$             & $1$             &  $\alpha (1-\alpha)^i$     &      \rule{0pt}{12pt} $\left[ \frac{1}{n}, \frac{1}{n}, \dots, \frac{1}{n} \right]^T$ \rule[-5pt]{0pt}{0pt}                            \\ \hline

Personalized PageRank~\cite{Page1999ThePC} & $0$             & $1$             &  \rule{0pt}{12pt}$\alpha (1-\alpha)^i$   \rule[-5pt]{0pt}{0pt}   & one-hot vector \\ \hline
Single-target PPR~\cite{lofgren2013personalized}     & $1$             & $0$             & \rule{0pt}{12pt}  $\alpha (1-\alpha)^i$ \rule[-5pt]{0pt}{0pt}   & one-hot vector                    \\ \hline
Heat kernel PageRank~\cite{chung2007heat}  & $0$             & $1$             &  \rule{0pt}{12pt} $e^{-t}\cdot {\frac{t^i}{i!}}$  \rule[-5pt]{0pt}{0pt}  & one-hot vector                    \\ \hline
Simplifying graph convolutional networks~\cite{wu2019simplifying}                   & \rule{0pt}{12pt} $\frac{1}{2}$ \rule[-5pt]{0pt}{0pt} & $\frac{1}{2}$ &  $w_i = 0 (i \neq L), W_L = 1$     & the graph signal                  \\ \hline
Approx-personalized propagation ~\cite{gasteiger2018predict}                 & \rule{0pt}{12pt} $\frac{1}{2}$ \rule[-5pt]{0pt}{0pt}& $\frac{1}{2}$ &   $\alpha (1-\alpha)^i$    & the graph signal                  \\ \hline
Graph diffusion convolution~\cite{gasteiger2019diffusion}                   & \rule{0pt}{12pt} $\frac{1}{2}$ \rule[-5pt]{0pt}{0pt} & $\frac{1}{2}$ &  $e^{-t}\cdot {\frac{t^i}{i!}}$     & the graph signal                  \\ \bottomrule

\end{tabular}
}
\vspace{-2mm}
\label{tab:example}
\end{table}

\vspace{2mm}
\noindent
\underline{\em Case Study: A Fully Dynamic Scenario.}
In recent years, Large Language Models (LLMs) such as DeepSeek~\cite{liu2024deepseek}, LLaMA~\cite{grattafiori2024llama}, and GPT~\cite{achiam2023gpt} have 
achieved great success in various applications and have 
become a trending topic in AI research~\cite{li2023large, chang2024survey, kasneci2023chatgpt, thirunavukarasu2023large}.
Along the process, Retrieval-Augmented Generation (RAG), a technique to enhance the trustworthiness of LLMs through the integration of external knowledge~\cite{fan2024survey, lewis2020retrieval, siriwardhana2023improving}, has attracted increasing interest. 
A notable trend is the use of graph-based RAG~\cite{edge2024local, gutierrez2024hipporag, huang2025ket, wang2024knowledge, wu2024medical}, which leverages graph-structured data, namely, {\em knowledge graphs}, as an external knowledge source and has shown promising results. 
Specifically, a knowledge graph not only stores individual entities but also captures the relationships between them through edges. 
When a user asks a question to an LLM, graph-based RAG first maps the question keywords to vertices in the knowledge graph.
Starting from these vertices as sources, the RAG process retrieves vertices that have high relevance to these sources.

For example, when a question ``What causes Alzheimer’s disease?'' is submitted, the AI system will identify one or more vertices in the knowledge graph that best match the query keywords (e.g., the vertices related to ``Alzheimer's disease'').
It then attempts to retrieve the most relevant vertex candidate set from the knowledge graph, measured under a certain node proximity model, e.g., 
Personalized PageRank (PPR), which computes the probability $\pi[v]$ of a random walk from a given source vertex $s$ to each vertex $v$ in the graph.  
A
larger $\pi[v]$ value indicates stronger relevance of $v$ to $s$. 
The PPR values can thus be used to select the top-$k$ most relevant vertices (e.g., ``beta-amyloid plaques'', ``tau protein'' or ``neuro degeneration'' for the Alzheimer’s disease example). 
The retrieved vertices (and possibly their attributes or neighbors) are then compiled into a context document and fed into an LLM to generate an answer based on both the question and the graph-derived context.

This kind of graph-based RAG introduces new challenges for graph propagation. 
First, the knowledge graphs can be updated frequently requiring efficient update support.
Second, different questions may require different types of node proximity evaluations. 
For example, when a user asks, \emph{``Who are the people most affected by Donald Trump?''}, a single-source PPR can be used to model Trump's influence. 
In contrast, for a question like \emph{``Who are the people that most affect Trump?''}, a single-target PPR is needed. 
For question \emph{``What is the importance of Trump?''}, it can be modeled as computing the PageRank of Trump.
Therefore, 
an ideal graph propagation algorithm should be able to support queries (with respect to Equation~\eqref{eq:1}) with parameters $w_i$'s, $a$ and $b$ given on the fly, such that the node proximity model can be specified in queries.

\vspace{2mm}
\noindent
{\bf SOTA's Limitation 2: Looseness on Theoretical Bounds.}
Besides the static setting, 
\algoagp~also suffers from several limitations in their theoretical analysis.
First, in~\cite{wang2021approximate}, the running time analysis of \algoagp~is based on an assumption that 
an optimal algorithm for solving a {\em special parameterized subset sampling problem}
is adopted.
However, such an algorithm did not 
exist back to the time when ~\cite{wang2021approximate} was published until a very recent work~\cite{gan2024optimal}.
Indeed, the actual implementation of \algoagp~adopts a compromising subset sampling algorithm 
which leads to a blow-up factor of $O(\log^2 n)$ in the claimed query time complexity,
where $n$ is the number of vertices in the graph.
Here, a $O(\log n)$ factor in this blow up 
comes from the usage of power-of-two bucketing technique, while the other $O(\log n)$ factor is introduced by the need of generating binomial random variates in their subset sampling algorithm, where according to the best known result~\cite{farach2015exact}, generating each binomial random variate takes $O(\log n)$ expected time\footnote{Whether 
a binomial random variate can be generated in  
$O(1)$ expected time still remains an open problem~\cite{garcia2022binomial,kuhl2017history,kachitvichyanukul1988binomial}.}.

To address the above challenges and limitations, we made the following technical contributions:
\begin{itemize}[leftmargin = *]
\item {\bf A Strengthened Version of \algoagp.} We observe that \algoagp~indeed can support dynamic queries with parameters given on the fly. We strengthen the algorithm to support each graph update (either an edge insertion or deletion) in $O(d_{\max}\log n) = O(n\log n)$ time,
where $d_{\max}$ and $n$ are the maximum degree and the number of vertices in the current graph $G$.  
We further show a precise expected query time complexity of \algoagp, $O(\log^2 n \cdot \frac{L^2}{\delta}\cdot Z + n)$, where $L$ is the number of iterations which is often equal to $O(\log n)$, and $Z$ is an output-size sensitive term. 
\item {\bf Our Solutions for Enhanced Efficiency.} 
We propose \algosta~which adopts a simpler subset sampling algorithm yet is more efficient in practice. With a careful analysis, we prove that 
the expected query time complexity of \algosta~is bounded by  
$O(\log n \cdot \frac{L}{\delta}\cdot Z + n)$, a factor of $O(L \log n) \approx O(\log^2 n)$ improvement over \algoagp~since the first term often dominates the second in the complexity.
Moreover, \algosta~also improves over \algoagp~in terms of update efficiency, though it still  
takes $O(d_{\max}) = O(n)$ time to process each graph update. 


\item {\bf Our Fully-Dynamic Algorithm.} 
To support efficient graph updates, we propose our ultimate algorithm, \algodyn, which 
not only supports each update in $O(1)$ {\em amortized} time, but also achieves 
the same expected query complexity and preserves the same approximation guarantees as \algosta.

\item {\bf Extensive Experimental Study.}
We conduct experiments on nine real-world datasets. 
The experiment results show that our \algodyn~outperforms 
the SOTA \algoagp~ 
by up to $117$ times on update efficiency with different update patterns and insertion-deletion ratios, and by up to $10$ times on query efficiency with small variance on arbitrary input parameters. 
\end{itemize}

\section{Preliminaries}\label{sec:pre}

\subsection{Problem Definition}

Consider an undirected and unweighted graph $G = \langle V, E\rangle$, where $V$ is a set of $n$ vertices and $E$ is a set of $m$ edges. The \emph{neighborhood} of a vertex $u$, denoted as $N[u] = \{ v \in V \mid (u,v) \in E \}$, is the set of all vertices adjacent to $u$. The \emph{degree} of $u$, denoted by $d_u = |N[u]|$, represents the number of neighbors of $u$.
The adjacency matrix and diagonal degree matrix of $G$ are denoted as
$\mathbf{A}$ and $\mathbf{D}$.
~~~We aim to solve the following {\bf Unified Graph Propagation Equation} (Equation~\eqref{eq:1}):
\vspace{-.4em}
\begin{equation*}
     \boldsymbol{\pi} = \sum^{\infty}_{i = 0}w_i \cdot (\mathbf{D^{-a}}\mathbf{A}\mathbf{D^{-b}})^i\cdot \mathbf{x}\,,
 \end{equation*}
where 
the query parameters $a$, $b$, $w_i$ and $\mathbf{x}$ satisfy the following conditions:
\begin{itemize}[leftmargin = *]
\item $a, b \in [0,1]$ are {\em constants} such that $a + b\geq 1$;
\item the weight parameters $w_i$'s satisfy:
\begin{itemize}
\item $w_i \geq 0$ for all $i$, and $\sum_{i=0}^\infty w_i = 1$ 
(
this can be achieved by normalizing $\sum_{i=0}^\infty w_i$ to $1$); 
\item $w_i \leq \lambda^i$ for all $i \geq L_0$, where $L_0 \geq 0$ and $\lambda \geq 1$ are some constants;
\item $w_i$'s are input as an {\em oracle} $\mathcal{O}_w$ with which these information can be returned in $O(1)$ time: (i) the value of $w_i$ for all integer $i\geq 0$, and (ii) the value of $L_{\mathcal{O}_w}(\Delta)$ which is the smallest integer $k$ such that $\sum_{i = k}^\infty w_i \leq \Delta$ for the given $0 \leq \Delta \leq 1$.
\end{itemize}
\item $\mathbf{x}$ is an $n$-dimensional vector such that $\|\mathbf{x}\|_1 = 1$ and $\mathbf{x}_i \geq 0, \forall~i = 1,2,3,\cdots, n $.  
\end{itemize}
It can be verified that all the node proximity models in Table~\ref{tab:example} satisfy the above conditions, optionally after normalization.
In what follows, the node proximity model parameterized by
$a$, $b$, $\mathcal{O}_w$ and $\mathbf{x}$ is called a {\em query} and denoted by
 $q(a, b, \mathcal{O}_w, \mathbf{x})$.

\vspace{2mm}
\noindent{\bf $(\delta, c)$-Approximation}. Given a threshold $\delta >0$ and a constant $c > 0$, 
an estimation of $\boldsymbol{\pi}$, denoted by $\boldsymbol{\hat{\pi}}$, 
is an $(\delta,c)$-approximation of $\boldsymbol{\pi}$ if it satisfies:
$$\text{for all } v \in V \text{  with  }|\boldsymbol{\pi}(v)| > \delta
,\;\;\;
|\boldsymbol{\pi}(v) - \boldsymbol{\hat{\pi}}(v)| \leq c \cdot \boldsymbol{\pi}(v) 
\,.$$
%
We have the following fact:
\begin{fact}\label{fact:delta}
For every query $q(a, b, \mathcal{O}_w, \mathbf{x})$, 
$\mathbf{\hat{\pi}} = \sum^{L}_{i = 0}w_i \cdot (\mathbf{D^{-a}}\mathbf{A}\mathbf{D^{-b}})^i\cdot~\mathbf{x}$
is an $(\delta, c)$-approximation of $\boldsymbol{\pi}$, where
$L = L_{\mathcal{O}_w}(c\cdot \delta) \in O(\log_{\lambda} \frac{1}{c\cdot \delta}) = O(\log \frac{1}{\delta})$.
\end{fact}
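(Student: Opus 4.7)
The plan is to express the truncation error entrywise as the tail of a non-negative weighted series and bound it using the oracle's defining property of $L$.

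First, subtracting gives, entrywise,
\[
\boldsymbol{\pi}(v) - \boldsymbol{\hat{\pi}}(v) \;=\; \sum_{i = L+1}^{\infty} w_i \cdot \bigl[(\mathbf{D^{-a}}\mathbf{A}\mathbf{D^{-b}})^i\, \mathbf{x}\bigr](v).
\]
Write $\mathbf{M} := \mathbf{D^{-a}}\mathbf{A}\mathbf{D^{-b}}$. Since $w_i \geq 0$ and the entries of $\mathbf{M}$ and of $\mathbf{x}$ are all non-negative, every summand is non-negative, so the absolute error equals precisely the tail displayed above.

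Next, the key pointwise estimate is $[\mathbf{M}^i \mathbf{x}](v) \leq 1$ for every $i \geq 0$ and every $v$. This is the one place where the structural assumption $a + b \geq 1$ is essential: combined with $d_u \geq 1$, it allows $\mathbf{M}$ to be dominated entrywise by $\mathbf{M}' := \mathbf{D^{-a'}}\mathbf{A}\mathbf{D^{-b'}}$ for some $a', b'$ with $a' + b' = 1$, because the ``extra'' factor $d_u^{-(a-a')} d_v^{-(b-b')}$ is at most $1$. The operator $\mathbf{M}'$ is itself a diagonal similarity of the column-stochastic random-walk matrix $\mathbf{A}\mathbf{D^{-1}}$, so applied iteratively to the probability vector $\mathbf{x}$ (recall $\|\mathbf{x}\|_1 = 1$ and $\mathbf{x} \geq 0$) it keeps every coordinate in $[0,1]$. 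I expect this normalization argument to be the main subtlety; it is essentially the invariant that makes graph propagation behave like a transported probability mass and is the only step where the hypothesis $a + b \geq 1$ gets used.

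Plugging the pointwise bound into the tail then yields
\[
|\boldsymbol{\pi}(v) - \boldsymbol{\hat{\pi}}(v)| \;\leq\; \sum_{i = L+1}^{\infty} w_i \;\leq\; c\cdot\delta,
\]
where the second inequality is the defining property of $L = L_{\mathcal{O}_w}(c\delta)$. For any $v$ with $\boldsymbol{\pi}(v) > \delta$ this gives $|\boldsymbol{\pi}(v) - \boldsymbol{\hat{\pi}}(v)| \leq c\delta < c\cdot\boldsymbol{\pi}(v)$, exactly the $(\delta, c)$-approximation condition. Finally, for the asymptotic size of $L$, I would use the geometric tail condition on $w_i$: once $i \geq L_0$, the weights are dominated by a geometric series, so the smallest $k$ for which $\sum_{i \geq k} w_i \leq c\delta$ satisfies $k = O(\log_\lambda \tfrac{1}{c\delta}) = O(\log \tfrac{1}{\delta})$, after absorbing the constants $c$ and $\lambda$ into the big-$O$.
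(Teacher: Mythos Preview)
Your approach matches the paper's: both bound the entrywise truncation error by the weight tail $\sum_{i>L}w_i\le c\delta$ after arguing that each coordinate of $(\mathbf{D}^{-a}\mathbf{A}\mathbf{D}^{-b})^i\mathbf{x}$ is at most $1$, invoking $a+b\ge 1$ to compare with the random-walk matrix. Your write-up is in fact cleaner---the paper asserts an equality $(\mathbf{D}^{-a}\mathbf{A}\mathbf{D}^{-b})^i=(\mathbf{A}\mathbf{D}^{-(a+b)})^i$ that is false in general, whereas your entrywise domination $\mathbf{M}\le\mathbf{M}'$ is correct---and you additionally justify the asymptotic $L=O(\log_\lambda 1/(c\delta))$, which the paper leaves implicit.

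One caveat on the step you correctly flagged as the main subtlety: diagonal similarity of $\mathbf{M}'$ to the column-stochastic $\mathbf{A}\mathbf{D}^{-1}$ does not by itself keep coordinates of $(\mathbf{M}')^i\mathbf{x}$ in $[0,1]$, because $\mathbf{D}^{a'}\mathbf{x}$ need not be a probability vector. What actually closes the step is reversibility of the undirected random walk: writing $P=(\mathbf{A}\mathbf{D}^{-1})^i$, one has $P_{vu}=(d_v/d_u)P_{uv}\le d_v/d_u$, and combining this with $P_{vu}\le 1$ gives $d_v^{-a'}P_{vu}d_u^{a'}\le 1$ for every $a'\in[0,1]$. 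Hence every entry of $(\mathbf{M}')^i$ is at most $1$, and the desired bound $[(\mathbf{M}')^i\mathbf{x}](v)\le\|\mathbf{x}\|_1=1$ follows. Neither your sketch nor the paper spells this out.
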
 

\begin{proof}
By the fact that $\mathbf{AD}^{-1}$ has an eigenvalue of $1$, we have:
$\sum^{\infty}_{i = L + 1}w_i \cdot (\mathbf{D^{-a}}\mathbf{A}\mathbf{D^{-b}})^i\cdot \mathbf{x}
 = 
\sum^{\infty}_{i = L + 1} w_i \cdot (\mathbf{A}\mathbf{D}^{-(a+b)})^{i} \cdot \mathbf{x}
\leq 
\sum^{\infty}_{i = L + 1} w_i \cdot (\mathbf{A}\mathbf{D}^{-1})^{i} \cdot \mathbf{x}
\leq 
\sum^{\infty}_{i = L + 1} w_i \cdot \mathbf{x}
\leq \sum^{\infty}_{i = L + 1} w_i \cdot \|\mathbf{x}\|_1 \leq c \cdot \delta$.
\end{proof}

\noindent{\bf Problem Definition.}
Given an undirected graph $G = \langle V, E \rangle$,
the problem of {\emph{Dynamic Approximate Graph Propagation}} ({DAGP}) is to design an algorithm which:
\begin{itemize}[leftmargin = *]
\item for every query $q(a, b, \mathcal{O}_w, \mathbf{x})$, approximation parameters $\delta> 0$ and $c > 0$, 
with at least a constant probability, 
returns an $(\delta, c)$-approximation $\boldsymbol{\hat{\pi}}$, and 
\item supports each graph update (i.e., either an edge insertion or deletion) efficiently. 
\end{itemize}

\subsection{The State-Of-The-Art Algorithm}
\label{sec:sota}

According to Fact~\ref{fact:delta},
one can compute an $(\delta, c)$-approximation $\boldsymbol{\hat{\pi}}$ by running the Power Method~\cite{trefethen2022numerical} for $L$ iterations, which takes $O(m \cdot L)$ time, where $m$ is the number of edges in $G$.
To improve the query efficiency,
Wang~\authorsforshort~\cite{wang2021approximate} proposed the SOTA algorithm, \algoagp, which simply pushes the ``probability mass'' from some vertices $u$ to a random subset of their neighbors in each iteration. 
The pseudocode of \algoagp\ is in Algorithm~\ref{alg:framework}.

\vspace{-2mm}
\begin{small}
\begin{algorithm}
\caption{\emph{\algoagp}}\label{alg:framework}
\DontPrintSemicolon
\SetKwComment{Comment}{/* }{ */}


\KwIn{$q(a, b, \mathcal{O}_w, \mathbf{x})$, parameters $\delta$ and $c$ for $(\delta,c)$-approximation
}
\KwOut{$\boldsymbol{\hat{\pi}}$, an $(\delta, c)$-approximation of $\boldsymbol{\pi}$}

$L \leftarrow L_{\mathcal{O}_w}(c \cdot \delta)$;
$\varepsilon \leftarrow \frac{c^2 \delta}{2 L^2}$;\;
$\mathbf{\hat{r}}^{(0)} \leftarrow \mathbf{x}$;\; 
\For{$i = 0$ to $L -1 $}
{
    \For{each $u \in V$ with $\mathbf{\hat{r}}^{(i)}(u) \neq 0$}
    {   
        Scan the sorted neighborhood\;
        \For{each $v \in N[u]$ and $d_v \leq \left(\frac{1}{\varepsilon} \cdot \frac{Y_{i+1}}{Y_i} \cdot \frac{\mathbf{\hat{r}^{(i)}(u)}}{d_u^b} \right)^{\frac{1}{a}}$}
        {
            $\mathbf{\hat{r}}^{(i+1)}(v) \leftarrow \mathbf{\hat{r}}^{(i+1)}(v) + \frac{Y_{i+1}}{Y_i}\cdot \frac{\mathbf{\hat{r}}^{(i)}(u)}{d_v^a\cdot d_u^b};$ \;
        
        }
        
        Sample each remaining neighbor $v \in N[u]$ with probability $p_{u,v} = \frac{1}{\varepsilon} \cdot \frac{Y_{i+1}}{Y_i} \cdot \frac{\mathbf{\hat{r}^{(i)}(u)}}{d_v^ad_u^b}$\;
        \For{each $v$ sampled}{
            $\mathbf{\hat{r}}^{(i+1)}(v) \leftarrow \mathbf{\hat{r}}^{(i+1)}(v) + \varepsilon;$ \;
        }
        $\mathbf{\hat{q}}^{(i)}(u) \leftarrow \mathbf{\hat{q}}^{(i)}(u) + \frac{w_i}{Y_i}\cdot \mathbf{\hat{r}}^{(i)}(u);$ 
    }
    $\boldsymbol{\hat{\pi}} \leftarrow \boldsymbol{\hat{\pi}} + \mathbf{\hat{q}}^{(i)};$\;
}
$\mathbf{\hat{q}}^{(L)} = \frac{w_L}{Y_L} \cdot \mathbf{\hat{r}}^{(L)};$ and
$\boldsymbol{\hat{\pi}} \leftarrow \boldsymbol{\hat{\pi}} + \mathbf{\hat{q}}^{(L)};$\;
\Return $\boldsymbol{\hat{\pi}}$

\end{algorithm}
\end{small}

Given a query $q(a, b, \mathcal{O}_w, \mathbf{x})$,
\algoagp\ introduces the notions of {\em residue} and {\em reserve vector} at the $i^{\text{th}}$ iteration (where $i = 0 , 1, \ldots, L$) to
facilitate iterative computation:
\begin{itemize}
    \item {\em residue} $\mathbf{r^{(i)}}= Y_i\cdot (\mathbf{D^{-a}}\mathbf{A}\mathbf{D^{-b}})^i\cdot \mathbf{x}$, where $Y_i = \sum^\infty_{k=i}w_k$; 
    \item {\em reserve vector} $\mathbf{q^{(i)}} = \frac{w_i}{Y_i}\cdot \mathbf{r^{(i)}}= w_i\cdot (\mathbf{D^{-a}}\mathbf{A}\mathbf{D}^{-b})^i\cdot \mathbf{x}$.
\end{itemize}
Moreover, the residue $\mathbf{r}^{(i+1)}$ in the next step (i.e., iteration) can be computed from the previous step by: 
$$\mathbf{r}^{(i+1)}= \frac{Y_{i+1}}{Y_i}\cdot (\mathbf{D^{-a}}\mathbf{A}\mathbf{D}^{-b})\cdot \mathbf{r}^{(i)}\,,$$ 
and so does {reserve vector} $\mathbf{q}^{(i)}$. %
An $(\delta,c)$-approximation $\boldsymbol{\hat{\pi}}$ can thus be obtained by 
$\boldsymbol{\hat{\pi}} = \sum^{L}_{i = 0}\mathbf{q}^{(i)} = \sum^{L}_{i = 0}w_i \cdot (\mathbf{D^{-a}}\mathbf{A}\mathbf{D^{-b}})^i\cdot \mathbf{x}$.

Before answering any query, 
\algoagp\ needs to construct two data structures on the neighborhood $N[u]$ for 
each vertex $u \in V$:
(i) a sorted list of $N[u]$ by their degrees, and
(ii) a specific data structure (which is introduced below) to support efficient {\em subset sampling} from $N[u]$.
To answer a query, \algoagp\  performs $L$ iterations; 
in the $i^{\text{th}}$ iteration, for each vertex $v$ in the neighborhood $N[u]$:
\begin{itemize}[leftmargin = *]
    \item by scanning the sorted neighbor list: if $d_v$ is small (Line 5)
    , propagate the exact amount of probability mass with respect to $\mathbf{\hat{r}}^{(i)}(u)$ to $\mathbf{\hat{r}}^{(i+1)}(v)$ (Line 6);
    \item by the subset sampling structure:  sample each remaining neighbor $v$ with a certain probability (Line 7)
    , and if $v$ is sampled, a probability mass of $\varepsilon = O(\frac{\delta}{L^2})$ is propagated to $\mathbf{\hat{r}}^{(i+1)}(v)$ 
    (Lines 9 - 10).
\end{itemize}



\noindent
\textbf{The Subset Sampling Data Structure in \algoagp.} 
To sample the neighbors efficiently,
for each $u \in V$, \algoagp\ constructs a data structure $B[u]$, which consists of $\lfloor \log_2 n \rfloor + 1$ {\em power-of-two} buckets. 
Each bucket is denoted as $B[u]_j$, where $j$ is the bucket index. 
Each neighbor $v \in N[u]$ is assigned to the bucket $B[u]_j$ such that $2^{j} \leq d_v < 2^{j+1}$.
The structure $B[u]$ is implemented as a sorted list of all the non-empty buckets and the vertices (i.e., $u$'s neighbors) therein. 
Therefore, it takes $O(d_u)$ space for each vertex and $O(n + m)$ for the entire graph to store this structure.

To decide whether a neigbhor $v \in N[u]$ is sampled,
\algoagp\ performs a subset sampling on $N[u]$. 
For each bucket $B[u]_j \in B[u]$,
\vspace{-2mm}
\begin{itemize}[leftmargin = *]
\item generate a {\em binomial} random variate $X_j \sim \text{Bin}(|B[u]_j|, p^*)$, where $p^* = \frac{1}{\varepsilon} \cdot \frac{Y_{i+1}}{Y_i} \cdot \frac{\mathbf{\hat{r}}^{(i)}(u)}{(2^{j})^a d_u^b}$, and $X_j$ is the target number of distinct vertices in $B[u]_j$ to be sampled with probability $p^*$;
\item uniformly at random sample $X_j$ {\em distinct} vertices in $B[u]_j$;
\item each sampled neighbor $v$ is accepted with probability 
$\frac{p_{u,v}}{p^*}$, where
$p_{u, v} = \frac{1}{\varepsilon} \cdot \frac{Y_{i+1}}{Y_i} \cdot \frac{\mathbf{\hat{r}}^{(i)}(u)}{d_v^ad_u^b}$. 
\end{itemize}

\noindent\textbf{Analysis.}
According to the SOTA result~\cite{farach2015exact}, generating a binomial random variate with probability $p^*$ takes $O(\log n)$ expected time.
As there can be $O(\log n)$ buckets in $B[u]$,
and by the fact that each neighbor taken from the bucket will be accepted by probability at least $\frac{p_{u,v}}{p^*} =(\frac{2^{j}}{d_v})^a \geq \frac{1}{2^a} \in \Omega(1)$,   
the neighbor subset sampling for each vertex $u \in V$ takes $O(\log^2 n + \mu_u)$ expected time, where $\mu_u$ is the expected size of the sample set.

\vspace{2mm}
\begin{observation}
While \algoagp~was originally proposed~\cite{wang2021approximate} for queries with parameters $a$ and $b$ given in advance and fixed,
it can support queries with  
$a$ and $b$ given on the fly.
\end{observation}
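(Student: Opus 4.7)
The plan is to inspect the algorithm in two parts: the data that must be built \emph{before} any query arrives (the pre-processing) and the data that is consumed \emph{during} a query, and to argue that only the latter depends on $a$ and $b$. Concretely, I would first enumerate the two structures AGP-Static maintains per vertex $u$: (i) the list of $N[u]$ sorted by $d_v$, and (ii) the power-of-two bucketing $B[u]$ in which neighbor $v$ lands in $B[u]_j$ iff $2^j \le d_v < 2^{j+1}$. Both structures are indexed purely by the degrees $d_v$ of the neighbors, quantities that are properties of the graph $G$ alone, not of the query. Hence the pre-processing can be performed once, without any knowledge of $a$, $b$, $\mathcal{O}_w$ or $\mathbf{x}$.

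Next, I would walk through Algorithm~\ref{alg:framework} line by line and verify that every quantity that \emph{does} depend on $a$ or $b$ is computed from information that is available at query time: the scanning threshold $\bigl(\tfrac{1}{\varepsilon}\cdot\tfrac{Y_{i+1}}{Y_i}\cdot\tfrac{\hat{\mathbf{r}}^{(i)}(u)}{d_u^b}\bigr)^{1/a}$ on Line~5 only requires $a,b$, the iteration index $i$, $\mathcal{O}_w$, and the currently maintained residue; the propagated amount on Lines~6 and~9--10 uses the same inputs plus $d_v$, which is read from the sorted neighborhood; and the subset-sampling probability $p_{u,v}$ together with its bucket-wise upper bound $p^\star$ used for the binomial variate again depends only on $a,b$, $i$, $d_u$ and the bucket index $j$ (which yields $2^j$). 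Since the sorted list is in non-decreasing order of $d_v$, the prefix of neighbors satisfying the Line~5 condition can still be traversed in time proportional to its length for any $a,b$ provided at query time.

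Having established that the algorithm is \emph{executable} under on-the-fly $a,b$, I would then invoke the analysis already given for AGP-Static: the approximation guarantee of $\hat{\boldsymbol{\pi}}$ in Fact~\ref{fact:delta} holds for any fixed $a,b$ satisfying $a+b \ge 1$, and the argument about the acceptance probability $p_{u,v}/p^\star = (2^j/d_v)^a \in \Omega(1)$ uses only $a \in [0,1]$ (constant), so the per-vertex sampling cost $O(\log^2 n + \mu_u)$ is preserved regardless of which admissible $(a,b)$ pair is supplied. This immediately yields the claim that a query $q(a,b,\mathcal{O}_w,\mathbf{x})$ with parameters chosen at query time can be answered with the same correctness and complexity as in the pre-specified setting.

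The main obstacle I anticipate is purely presentational rather than mathematical: carefully separating the \emph{invariant} part of the pre-processing (sorted neighborhoods and power-of-two buckets, which depend only on the degree sequence of $G$) from the \emph{query-dependent} per-iteration computations, and pointing out that the only structural property exploited by the subset sampler, namely that all degrees within $B[u]_j$ lie in $[2^j, 2^{j+1})$, is geometric in $d_v$ and therefore gives a constant-factor acceptance ratio for \emph{every} $a \in [0,1]$. Once this separation is made explicit, the observation follows without any modification to the data structures or to the analysis of AGP-Static.
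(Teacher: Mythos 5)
Your proposal is correct and matches the paper's (largely implicit) justification: the paper likewise relies on the fact that the sorted neighbor lists and the power-of-two buckets $B[u]$ are indexed solely by neighbor degrees, so all $a$- and $b$-dependent quantities ($p_{u,v}$, the bucket bound $p^\star$, the scan threshold, and the acceptance ratio $(2^j/d_v)^a\in\Omega(1)$) can be evaluated at query time without touching the preprocessed structures. Your explicit separation of graph-dependent preprocessing from query-dependent computation is exactly the reasoning the paper intends (and is the same property that later fails for \algodpss, whose element weights bake $a$ into the structure), so there is nothing to add.
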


\section{Our \algosta\ Algorithm}\label{sec:staticpp}

In a nutshell, our \algosta~follows the algorithm framework of \algoagp.
We also group the neighbors of each vertex $u$ into a sorted list of power-of-two buckets, $B[u]$, by their degrees. 
However, there are also substantial differences that are the key to our practical and theoretical improvements:
\vspace{-1mm}
\begin{itemize}[leftmargin = *]
    \item \underline{\em Eliminating the Sorted Lists:} 
We propose a new subset sampling algorithm that removes the need of a sorted list in each neighborhood.
    
    \item \underline{\em Improved Parameter Setting:} In \algosta,  the parameter $\varepsilon$ is set to $O(\frac{\delta}{L})$, reducing a factor of $O(L)$
from that in \algoagp. 
With our tighter analysis (Section~\ref{sec:var_ana}), we  
show that \algosta\ 
achieves the same approximation guarantees as \algoagp~does.
This refinement improves the query time complexity by a factor of $O(L)$ compared to \algoagp.
    
    \item \underline{\em Optimized Subset Sampling Technique:} We employ a different subset sampling method that achieves a complexity of $O(\log n + \mu)$, where $\mu$ denotes the expected output size. This removes a $O(\log n)$ factor from the query complexity of \algoagp.
    

%

\end{itemize}

\subsection{The Subset Sampling Algo's in \algosta}\label{sec:ss}

 \algosta\  consolidates the propagation procedures within the neighborhood of vertex $u$ into a single subset sampling process (Lines 3–6). This eliminates the need of the sorted list, 
which not only simplifies the algorithm in the static case 
but is also essential for handling graph updates efficiently, 
as maintaining the sorted list can be costly (e.g., $O(\log n)$ per update). 
Our proposed subset sampling method operates in $O(\mu + \log n)$ expected time, where $\mu$ is the expected output size. 
Specifically, 
it works as follows.

\begin{small}
\begin{algorithm}
\caption{\algosta}\label{alg:imp_agp}
\DontPrintSemicolon
\SetKwComment{Comment}{/* }{ */}
\SetKwInOut{Optional}{Optional Input}
\SetKwBlock{preprocess}{Preprocess Procedure:}{end}
\SetKwBlock{query}{Query Procedure:}{end}


   
\KwIn{$q(a, b, \mathcal{O}_w, \mathbf{x})$, parameters $\delta$ and $c$ for $(\delta,c)$-approximation
}
\KwOut{$\boldsymbol{\hat{\pi}}$, an $(\delta, c)$-approximation of $\boldsymbol{\pi}$}

$L \leftarrow L_{\mathcal{O}_w}(c \cdot \delta)$;
$\varepsilon \leftarrow \frac{c^2 \delta}{2(L + 1)}$;
$\mathbf{\hat{r}}^{(0)} \leftarrow \mathbf{x}$;\; 

\For{$i = 0$ to $L -1 $}
    {
        \For{each $u \in V$ with $\mathbf{\hat{r}}^{(i)}(u) \neq 0$}
        {   
            obtain a subset $T$ sampled from neighbors in $N[u]$ such that each neighbor is sampled independently with probability $p_{u,v} = \min \left\{ 1, \frac{1}{\varepsilon} \cdot \frac{Y_{i+1}}{Y_i} \cdot \frac{\mathbf{\hat{r}}^{(i)}(u)}{d_v^ad_u^b}\right\}$;\;
            \For{each $v \in T$}{
                $\mathbf{\hat{r}}^{(i+1)}(v) \leftarrow \mathbf{\hat{r}}^{(i+1)}(v) + \max\left\{ \varepsilon, \frac{Y_{i+1}}{Y_i} \cdot \frac{\mathbf{\hat{r}}^{(i)}(u)}{d_v^a d_u^b} \right\}$ \;
            }
    
            $\mathbf{\hat{q}}^{(i)}(u) \leftarrow \mathbf{\hat{q}}^{(i)}(u) + \frac{w_i}{Y_i}\cdot \mathbf{\hat{r}}^{(i)}(u);$
        }
        $\boldsymbol{\hat{\pi}} \leftarrow \boldsymbol{\hat{\pi}} + \mathbf{\hat{q}}^{(i)};$\;
    }
    $\mathbf{\hat{q}}^{(L)} = \frac{w_L}{Y_L} \cdot \mathbf{\hat{r}}^{(L)};$ and
    $\boldsymbol{\hat{\pi}} \leftarrow \boldsymbol{\hat{\pi}} + \mathbf{\hat{q}}^{(L)};$\;
    \Return $\boldsymbol{\hat{\pi}}$
\end{algorithm}
\end{small}

In the $i^{\text{th}}$ iteration, for each neighborhood $N[u]$ with $\mathbf{\hat{r}}^{(i)}(u) \neq 0$:
\vspace{-2mm}
\begin{itemize}[leftmargin=*]
    \item Compute a shifting factor $s^{(i)}_u = \frac{1}{\varepsilon} \cdot \frac{Y_{i+1}}{Y_i} \cdot \frac{\mathbf{\hat{r}}^{(i)}(u)}{d_u^b}$. 
   
\item For each bucket $B[u]_j$ in $B[u]$, 
        let  $p^* = \min\left\{1, \frac{s_u^{(i)}}{(2^{j})^a}\right\}$. 
If $p^* = 1$, all the vertices in $B[u]_j$ are taken in $T$; 
otherwise, 
sample each neighbor independently with probability $p^*$: 
        \begin{itemize}[leftmargin = *]
        \item initialize $y \leftarrow 0$;
            \item generate $j$ from the bounded geometric distribution \\$Geo(p^*, |B[u]_j|)$, and set $y \leftarrow y + j$;  
            \item if $y > |B[u]_j|$,  stop; otherwise, take the $y^{\text{th}}$ vertex in $B[u]_j$ and repeat by generating another $j$ from the second step. 
        \end{itemize}
\end{itemize}

\begin{itemize}[leftmargin=*]
    \item For each $v \in T$, accept $v$ with probability $p_{\text{ac}} = \frac{p_{u,v}}{p^*}$ 
    and propagate a probability mass of 
    $\max\left\{ \varepsilon, \frac{Y_{i+1}}{Y_i} \cdot \frac{\mathbf{\hat{r}}^{(i)}(u)}{d_v^a d_u^b} \right\}$ to $\mathbf{r}^{(i+1)}(v)$.
\end{itemize}

The correctness of the above subset sampling method with geometric distribution comes from the fact that 
$j \sim Geo(p^*, |B[u]_j|)$
follows the same distribution of the first index of success in a Bernoulli trial with success probability $p^*$, which, in turn, is equivalent to independently flipping a coin of with success probability of $p^*$ for each neighbor in the bucket.

\noindent
{\bf Analysis.} 
It is known that a random variate from a bounded geometric distribution $j \sim Geo(p^*, N)$ can be generated in $O(1)$ expected time in the Word RAM model~\cite{bringmann2013exact}. 
The probability $p^*$  used to sample in each bucket is at most $2^a \in O(1)$ times the required probability $p_{u,v}$ -- recall that $p^* = s_u\cdot \frac{1}{(2^{j})^a}$, $p_{u,v} = s_u\cdot\frac{1}{d_v^a}$ and $2^{j} \leq d_v < 2\cdot 2^{j}$. Given any vertex $u$, there are at most $O(\log n)$ buckets in $B[u]$,
hence the total time complexity for the subset sampling within each neighborhood is bounded by $\sum_k O(\mu_k + 1) = O(\mu + \log n)$, where $\mu_k$ is the expected output size for the $k^{\text{th}}$ bucket in $B[u]$, and $\mu$ is the expected sample size from $N[u]$.

\section{Theoretical Analysis}\label{sec:ana}

\subsection{Query Running Time Analysis}
\label{sec:query_ana}

We now analyze the complexity of \algoagp\ and  \algosta.
Without loss of generality, we denote the subset sampling cost within the neighborhood of a vertex $u$ by $O(\mu_u + \cso)$, where $\mu_u$ is the expected sample size and $\cso$ is the {\em non-chargeable} overhead. 
In particular, $\cso = O(\log^2 n)$ for \algoagp, 
while
$\cso = O(\log n)$ for our \algosta. 
%
\begin{theorem}
\label{theorem:query}
    The expected query time complexity of  AGP algorithms (e.g., \algoagp\ and \algosta) is bounded by
    $$O\left(\cso \cdot \frac{1}{\varepsilon}\sum_{i=1}^L \|Y_i(\mathbf{D}^{-a}\mathbf{A}\mathbf{D}^{-b})^i\cdot \mathbf{x}\|_1 + \E[C_{\text{init}}]\right)\,,$$
where $E[C_{\text{init}}]$ is the expected cost of initialization, i.e., reading the query input $q(a, b, \mathcal{O}_w, \mathbf{x})$.
\end{theorem}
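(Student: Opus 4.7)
The plan is to decompose the total running time into (a) the initialization cost $\E[C_{\text{init}}]$ for reading the query input $q(a, b, \mathcal{O}_w, \mathbf{x})$ and preparing $\hat{\mathbf{r}}^{(0)}\leftarrow \mathbf{x}$, and (b) the cost of the $L$ iterations of subset sampling. Term (a) appears verbatim in the bound, so the proof concentrates on term (b). For iteration $i$, only vertices $u$ with $\hat{\mathbf{r}}^{(i)}(u)\neq 0$ trigger work, and each such vertex costs $O(\mu_u^{(i)} + \cso)$ in expectation, where $\mu_u^{(i)}$ is the expected sample size drawn from $N[u]$ in that iteration.

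First I would bound $\sum_u \mu_u^{(i)}$ by opening the sampling probability. Since each $v\in N[u]$ is sampled with probability at most $\frac{1}{\varepsilon}\cdot\frac{Y_{i+1}}{Y_i}\cdot\frac{\hat{\mathbf{r}}^{(i)}(u)}{d_v^a d_u^b}$, summing over $u$ and swapping the order of summation gives
$$\sum_{u} \mu_u^{(i)} \;\leq\; \frac{1}{\varepsilon}\sum_{u}\sum_{v\in N[u]} \frac{Y_{i+1}}{Y_i}\cdot\frac{\hat{\mathbf{r}}^{(i)}(u)}{d_v^a d_u^b} \;=\; \frac{1}{\varepsilon}\left\|\frac{Y_{i+1}}{Y_i}\,\mathbf{D}^{-a}\mathbf{A}\mathbf{D}^{-b}\,\hat{\mathbf{r}}^{(i)}\right\|_1.$$
Taking expectations and using the unbiasedness $\E[\hat{\mathbf{r}}^{(i)}] = \mathbf{r}^{(i)} = Y_i(\mathbf{D}^{-a}\mathbf{A}\mathbf{D}^{-b})^i\mathbf{x}$ together with the recurrence $\mathbf{r}^{(i+1)} = \tfrac{Y_{i+1}}{Y_i}(\mathbf{D}^{-a}\mathbf{A}\mathbf{D}^{-b})\mathbf{r}^{(i)}$, this collapses to $\frac{1}{\varepsilon}\bigl\|Y_{i+1}(\mathbf{D}^{-a}\mathbf{A}\mathbf{D}^{-b})^{i+1}\mathbf{x}\bigr\|_1$.

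Next I would handle the per-vertex overhead $\cso$. The crucial observation is that $\hat{\mathbf{r}}^{(i+1)}(v)$ becomes nonzero only when $v$ is actually sampled by some $u$ in iteration $i$, so the support size of $\hat{\mathbf{r}}^{(i+1)}$ is at most the total number of samples produced in iteration $i$, whose expectation is at most $\sum_u \mu_u^{(i)}$. Consequently, the total $\cso$-overhead across iterations $1,\dots,L-1$ is bounded by $\cso \cdot \sum_{i=0}^{L-2}\E[\sum_u \mu_u^{(i)}]$, while the support at iteration $0$ equals $|\mathrm{supp}(\mathbf{x})|$ and is absorbed into $\E[C_{\text{init}}]$. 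Combining with the $\mu_u$-term (which carries no $\cso$ factor but is dominated by $\cso\geq 1$ times the same quantity) and reindexing yields
$$\E[C_{\text{init}}] \;+\; O(\cso)\sum_{i=1}^{L}\frac{1}{\varepsilon}\bigl\|Y_i(\mathbf{D}^{-a}\mathbf{A}\mathbf{D}^{-b})^i\mathbf{x}\bigr\|_1,$$
which is the claimed bound.

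The main obstacle is the careful accounting of the support sizes: showing that the expected support of $\hat{\mathbf{r}}^{(i+1)}$ never exceeds the expected sample volume of iteration $i$ (so that the $\cso$ per-vertex overhead shares the same complexity as the $\mu_u$ output-size term) and handling the boundary iteration $i=0$ cleanly so that its $|\mathrm{supp}(\mathbf{x})|\cdot\cso$ contribution can be folded into $\E[C_{\text{init}}]$ without double counting. Everything else reduces to a summation swap and invoking the linear recurrence that defines $\mathbf{r}^{(i)}$, both of which are routine.
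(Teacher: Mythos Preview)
Your proposal is correct and follows essentially the same approach as the paper. The paper likewise splits the cost into (i) propagation/sample cost $c^{(i)}(u,v)$, (ii) the per-vertex subset-sampling overhead $c^{(i)}_{\text{so}}(u)$, and (iii) $\E[C_{\text{init}}]$; it bounds (i) via the sampling probability plus unbiasedness $\E[\hat{\mathbf{r}}^{(i)}]=\mathbf{r}^{(i)}$, and bounds (ii) by a union bound $\Pr[\hat{\mathbf{r}}^{(i)}(u)>0]\le \sum_{v\in N[u]} p_{v,u}$, which is exactly your ``support size $\le$ total samples in the previous iteration'' argument written vertex-by-vertex. The boundary iteration $i=0$ is also dropped from the overhead sum in the paper (the sum there runs from $i=1$), so your treatment of that case is at least as careful.
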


\begin{proof}

Besides the initialization cost ($C_{\text{init}}$), 
the query running time mainly consists of these two costs: 
\begin{itemize}[leftmargin = *]
    \item the propagation cost on each edge at each iteration, denoted as $c^{(i)}(u,v)$, meaning the cost to propagate the probability mass from $u$ to $v$ at the $i^{\text{th}}$ iteration;
    \item the subset sampling non-chargeable overhead cost at each neighborhood at each iteration, denoted as $c^{(i)}_{\text{so}}(u)$, meaning the non-chargeable subset sampling overhead cost for $N[u]$ at the $i^{\text{th}}$ iteration. 
Here, we only need to consider the non-chargeable overhead as the output cost and the chargeable overhead can be charged to $c^{(i)}(u,v)$ for those neighbors $v$ sampled in $N[u]$ at step $i$.  
\end{itemize}  

We first analyze 
the expected cost of $c^{(i)}(u,v)$. 
According to Line 4 in Algorithm~\ref{alg:imp_agp},
each neighbor is sampled independently by probability $\min\{1, p_{u,v}\}$, and each propagation cost is $O(1)$. 
Thus,
the expected cost of $c^{(i)}(u,v)$
conditioned on 
$\mathbf{\hat{r}}^{(i-1)}(u)$,
i.e.,   
$\E[c^{(i)}(u,v)|\mathbf{\hat{r}}^{(i-1)}(u)] = \min \{1, p_{u,v}\} \leq p_{u,v}$, where $ p_{u,v} =\frac{1}{\varepsilon} \cdot \frac{Y_i}{Y_{i-1}}\cdot \frac{\mathbf{\hat{r}}^{(i-1)}(u)}{d_v^a d_u^b}$. 

It is known that the following fact holds: 
\begin{fact}
[\cite{wang2021approximate}]
\label{fact:unbias}
$\mathbf{\hat{r}}$ is an unbiased estimation of $\mathbf{r}$, i.e., $\E[\mathbf{\hat{r}}^{(i-1)}(u)] = \mathbf{r}^{(i-1)}(u)$. 
\end{fact}
\noindent
By Fact~\ref{fact:unbias}, we then have:
\begin{small}
\begin{align*}
    \E[c^{(i)}(u,v)] &= \E[\E[c^{(i)}(u,v)|\mathbf{\hat{r}}^{(i-1)}(u)]]\\
    &= \sum_j \pr[\mathbf{\hat{r}}^{(i-1)}(u)_j]\cdot \E[c^{(i)}(u,v)|\mathbf{\hat{r}}^{(i-1)}(u)_j]\\
    &\leq \sum_j \pr[\mathbf{\hat{r}}^{(i-1)}(u)_j]\cdot \frac{1}{\varepsilon} \cdot \frac{Y_i}{Y_{i-1}}\cdot \frac{\mathbf{\hat{r}}^{(i-1)}(u)_j}{d_v^a d_u^b}\\ &=\frac{1}{\varepsilon} \cdot \frac{Y_i}{Y_{i-1}}\cdot \frac{\sum_j \pr[\mathbf{\hat{r}}^{(i-1)}(u)_j]\cdot \mathbf{\hat{r}}^{(i-1)}(u)_j}{d_v^a d_u^b}\\
    &\leq \frac{1}{\varepsilon} \cdot \frac{Y_i}{Y_{i-1}}\cdot \frac{\E[\mathbf{\hat{r}}^{(i-1)}(u)]}{d_v^a d_u^b} =\frac{1}{\varepsilon} \cdot \frac{Y_i}{Y_{i-1}}\cdot \frac{\mathbf{r}^{(i-1)}(u)}{d_v^a d_u^b}\,.
\end{align*}
\end{small}

Next, we analyze $\E[c^{(i)}_{\text{so}}(u)]$. 
To facilitate our analysis, we define $I^{(i)}(u)$ as the indicator 
of whether $\mathbf{\hat{r}}^{(i)}(u) > 0$ is true or not.
Specifically, $I^{(i)}(u) = 1$ when $\mathbf{\hat{r}}^{(i)}(u) > 0$, and 
$I^{(i)}(u) = 0$ otherwise.
When $\hat{\mathbf{r}}^{(i)}(u) > 0$, only if 
at the $(i-1)^{\text{th}}$ step, at least one of $u$'s neighbors has propagated to $u$.
In other words, $u$ must have been sampled from the neighborhood $N[v]$ of at least one of its neighbors $v \in N[u]$. 
Since $u$ is sampled independently from each of its neighbors $v$ with probability
$\min\{1, p_{v,u}\}$,
by Union Bound, when $i \geq 1$, $\pr[I^{(i)}(u) = 1] \leq \sum_{v\in N[u]}\frac{1}{\varepsilon} \cdot \frac{Y_i}{Y_{i-1}}\cdot\frac{\mathbf{\hat{r}}^{(i-1)}(v)}{d_u^a d_v^b}$. 
Therefore, we have 
$\E\left[I^{(i)}(u)\right] = \E\left[\E\left[I^{(i)}(u)\big|\mathbf{\hat{r}}^{(i-1)}\right]\right] \leq \sum_{v \in N[u]} \frac{1}{\varepsilon}\cdot\frac{Y_i}{Y_{i-1}}\cdot\frac{\E[\mathbf{ \hat{r}}^{(i-1)}(v)]}{d_u^a d_v^b}= \frac{1}{\varepsilon}\cdot \mathbf{r}^{(i)}(u)$; recall that $\mathbf{\hat{r}}$ is an unbiased estimation. 

Putting these together, the expected overall cost can be computed as follows: 

\begin{small}
\begin{align*}
\E_{\text{cost}} &= \sum_{i=1}^L\sum_{v\in V}\sum_{u \in N[v]} \E[c^{(i)}(u,v)] + \sum_{i=1}^{L-1}\sum_{v \in V} \E[c_{so}^{(i)}(v)] + \E[C_{\text{init}}]\\
 &\leq \sum_{i=1}^L\sum_{v\in V}\sum_{u \in N[v]} \frac{1}{\varepsilon} \cdot \frac{Y_i}{Y_{i-1}}\cdot \frac{\mathbf{r}^{(i-1)}(u)}{d_v^a d_u^b} \\
& \;\;\;\;\;+ \sum_{i=1}^{L-1}\sum_{v \in V} \E[I^{(i)}(v)]\cdot \cso + \E[C_{\text{init}}]\\
& \leq \frac{1}{\varepsilon}\sum_{v\in V}\sum_{i=1}^L\mathbf{r}^{(i)}(v) + \cso \cdot \frac{1}{\varepsilon}\sum_{v \in V}\sum_{i=1}^{L-1}\mathbf{r}^{(i)}(v) + \E[C_{\text{init}}]\\
& = O\left(\cso \cdot \frac{1}{\varepsilon}\sum_{i=1}^L \|Y_i(\mathbf{D}^{-a}\mathbf{A}\mathbf{D}^{-b})^i\cdot \mathbf{x}\|_1 + \E[C_{\text{init}}]\right)\,,
\end{align*}
\end{small}
where the last bound follows the definition of
$\mathbf{r}^{(i)}(u)$.
\end{proof}

By substituting the values of $\cso$ and $\varepsilon$
of each algorithm into Theorem~\ref{theorem:query}, the concrete expected query time complexity 
follows.
\begin{corollary}
The expected query time complexity of \algoagp\ is bounded by
    $$O\left(\log^2 n \cdot \frac{L^2}{\delta}\sum_{i=1}^L \|Y_i(\mathbf{D}^{-a}\mathbf{A}\mathbf{D}^{-b})^i\cdot \mathbf{x}\|_1 + \E[C_{\text{init}}]\right)\,.$$
\end{corollary}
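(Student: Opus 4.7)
The plan is to derive this corollary as a direct instantiation of Theorem~\ref{theorem:query}, which already gives the generic expected query time $O\left(\cso\cdot \frac{1}{\varepsilon}\sum_{i=1}^L \|Y_i(\mathbf{D}^{-a}\mathbf{A}\mathbf{D}^{-b})^i\cdot \mathbf{x}\|_1 + \E[C_{\text{init}}]\right)$ in terms of the per-neighborhood non-chargeable subset-sampling overhead $\cso$ and the propagation parameter $\varepsilon$. All that remains is to substitute the values of these two quantities that are specific to \algoagp\ and simplify.

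First, I would pin down $\cso$ for \algoagp. Recall from the discussion in Section~\ref{sec:sota} that, in each neighborhood $N[u]$, \algoagp\ processes $O(\log n)$ power-of-two buckets, and for each bucket it generates a binomial random variate $X_j \sim \text{Bin}(|B[u]_j|, p^*)$ which, according to the SOTA result~\cite{farach2015exact}, costs $O(\log n)$ expected time. The remaining work (sampling the $X_j$ distinct vertices and the rejection step) is chargeable to the output, since each drawn neighbor is accepted with constant probability $\tfrac{p_{u,v}}{p^*} \geq 2^{-a} = \Omega(1)$. Hence the non-chargeable overhead per neighborhood is $\cso = O(\log^2 n)$.

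Second, I would read off $\varepsilon$ from Line~1 of Algorithm~\ref{alg:framework}, which sets $\varepsilon = \tfrac{c^2 \delta}{2L^2}$, so $\tfrac{1}{\varepsilon} = O\!\left(\tfrac{L^2}{\delta}\right)$ treating $c$ as a constant. Plugging $\cso = O(\log^2 n)$ and $\tfrac{1}{\varepsilon} = O(L^2/\delta)$ into Theorem~\ref{theorem:query} yields
\[
O\!\left(\log^2 n \cdot \frac{L^2}{\delta} \sum_{i=1}^L \|Y_i(\mathbf{D}^{-a}\mathbf{A}\mathbf{D}^{-b})^i\cdot \mathbf{x}\|_1 + \E[C_{\text{init}}]\right),
\]
which is exactly the claimed bound.

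There is essentially no technical obstacle here, since the corollary is a mechanical specialization of Theorem~\ref{theorem:query}. The only subtle point worth flagging is the separation of the subset-sampling cost into the chargeable part (absorbed into the sum $\sum_i \|Y_i(\mathbf{D}^{-a}\mathbf{A}\mathbf{D}^{-b})^i\cdot \mathbf{x}\|_1$ via the expected output size $\mu_u$) and the non-chargeable part (captured by $\cso$); once this decomposition matches the accounting used in the proof of Theorem~\ref{theorem:query}, the substitution is immediate.
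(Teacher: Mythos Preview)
Your proposal is correct and matches the paper's approach exactly: the paper simply states that the corollary follows by substituting the values of $\cso$ and $\varepsilon$ specific to \algoagp\ into Theorem~\ref{theorem:query}, and you have carried out precisely that substitution with the right values ($\cso = O(\log^2 n)$ from the bucket-plus-binomial analysis in Section~\ref{sec:sota}, and $\varepsilon = \tfrac{c^2\delta}{2L^2}$ from Line~1 of Algorithm~\ref{alg:framework}).
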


\begin{corollary}
The expected query time complexity of \algosta\ is bounded by
$$O\left(\log n \cdot \frac{L}{\delta}\sum_{i=1}^L \|Y_i(\mathbf{D}^{-a}\mathbf{A}\mathbf{D}^{-b})^i\cdot \mathbf{x}\|_1 + \E[C_{\text{init}}]\right)\,.$$
\end{corollary}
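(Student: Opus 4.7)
The plan is to derive this corollary as a direct specialization of Theorem~\ref{theorem:query}, which already establishes the general expected query-time bound
$O\bigl(\cso \cdot \tfrac{1}{\varepsilon}\sum_{i=1}^L \|Y_i(\mathbf{D}^{-a}\mathbf{A}\mathbf{D}^{-b})^i\cdot \mathbf{x}\|_1 + \E[C_{\text{init}}]\bigr)$
for any AGP-style algorithm, parameterized by its subset sampling non-chargeable overhead $\cso$ and the mass-per-sample threshold $\varepsilon$. All I need to do is identify the correct values of these two parameters for \algosta{} and substitute.

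First, I would recall from the specification of \algosta{} (Algorithm~\ref{alg:imp_agp}, Line~1) that $\varepsilon = \tfrac{c^2 \delta}{2(L+1)}$. Since $c$ is a constant by the problem setup, this gives $\tfrac{1}{\varepsilon} = \Theta(L/\delta)$, which accounts for the factor $L/\delta$ in the claimed bound.

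Second, I would appeal to the subset-sampling analysis in Section~\ref{sec:ss} to justify $\cso = O(\log n)$ for \algosta. The argument there notes that (i) within any single bucket of $B[u]$, each geometric variate is drawn in $O(1)$ expected time in the Word RAM model~\cite{bringmann2013exact} and each sampled vertex is accepted with constant probability (because $p^*/p_{u,v} \leq 2^a = O(1)$), so the per-bucket cost is $O(1 + \mu_k)$; and (ii) there are at most $O(\log n)$ buckets per neighborhood. Summing over buckets isolates the chargeable part $\mu = \sum_k \mu_k$ (already accounted for in the propagation term of Theorem~\ref{theorem:query}) from a non-chargeable $O(\log n)$ overhead, yielding $\cso = O(\log n)$.

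Plugging $\cso = O(\log n)$ and $\tfrac{1}{\varepsilon} = O(L/\delta)$ into Theorem~\ref{theorem:query} immediately produces the claimed bound. I do not anticipate a genuine obstacle here since both ingredients are already established above; the only mild care needed is to make explicit that the constant $c$ is absorbed into the $O(\cdot)$ and that the per-bucket acceptance probability is bounded below by an absolute constant, so that $\cso$ does not hide any dependence on $a$, $\delta$, or $L$.
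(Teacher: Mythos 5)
Your proposal is correct and follows exactly the paper's route: the paper also obtains this corollary by substituting $\cso = O(\log n)$ (from the bucket-based subset sampling analysis of \algosta) and $\varepsilon = \frac{c^2\delta}{2(L+1)} = O(\frac{\delta}{L})$ into Theorem~\ref{theorem:query}. Your added remarks on the constant acceptance probability and the absorption of $c$ into the $O(\cdot)$ are consistent with the paper's analysis and introduce no deviation.
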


\noindent\textbf{Incorporating DPSS.}
\label{ap:dpss}
The subset sampling cost in \algosta~can be further improved to $O(1 + \mu)$ in expectation by 
the {\em Dynamic Parameterized Subset Sampling} (DPSS) technique~\cite{gan2024optimal} which 
was proposed to solve the problem defined as follows.
Consider a set $S$ containing $n$ elements allowing element insertions or deletions.
Each element $e \in S$ is associated with a {\em non-negative integer} weight $w(e)$.
Given a pair of non-negative rational parameters $(\alpha, \beta)$,
the objective is to return a subset of $S$ such that 
each element $e \in S$ is independently sampled with probability $\min\left\{ \frac{w(e)}{\alpha \sum_{e\in S} w(e) + \beta}, 1 \right\}$.
Such a subset sample can be obtained with DPSS in $O(1 + \mu)$ expected time.
To apply DPSS to the subset sampling for \algosta, for each vertex $u$, the element set $S$ is $N[u]$, 
where each vertex $v \in N[u]$ is an element with weight $w(v) = \lceil \frac{n_{\max}}{d^a_v} \rceil$, with $n_{\max}$ being the maximum possible $n$. 
A subset sample $T$ from $N[u]$, where each $v \in N[u]$ is sampled independently  with probability $p_{u,v}$,  can be obtained 
by as follows:
\begin{itemize}[leftmargin =*]
\item perform a DPSS query of 
 $\alpha = 0$ and $\beta = \frac{n_{\max}}{s_u}$ to obtain a candidate set $S' \subseteq S$;\vspace{-1mm}
\item for each $v \in S'$, accept $v$ with probability $\frac{p_{u,v}}{w(v)/\beta}\in \Omega(1)$.
\end{itemize}
Denote the abobe implementation with DPSS of \algosta~by \algodpss.
While the subset sampling time for each $N[u]$ 
is bounded by $O(1 + \mu_u)$ in expectation, where $\mu_u$ is the expected subset sample size in $N[u]$, 
\algodpss~cannot support the parameter $a$ given on the fly 
as $a$ is used to build the data structure, i.e., the weight of each element.
Substituting $\cso = O(1)$ and $\varepsilon = O(\frac{\delta}{L})$ to Theorem~\ref{theorem:query}, we have the following corollary: 
\begin{corollary}
The expected query time complexity of \algodpss\ is bounded by
$$O\left(\frac{L}{\delta}\sum_{i=1}^L \|Y_i(\mathbf{D}^{-a}\mathbf{A}\mathbf{D}^{-b})^i\cdot \mathbf{x}\|_1 + \E[C_{\text{init}}]\right)\,.$$
\end{corollary}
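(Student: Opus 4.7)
My plan is to deduce this corollary directly from Theorem~\ref{theorem:query} by instantiating the two algorithm-specific quantities, $\cso$ and $\varepsilon$, that specialize the generic bound to \algodpss. The corollary is essentially a substitution, so the bulk of the work lies in verifying these two values rather than in any fresh probabilistic analysis.

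First, I would establish that $\cso = O(1)$ for \algodpss. As described in the paragraph preceding this corollary, \algodpss~reduces the subset sampling over $N[u]$ to a single DPSS query with parameters $\alpha = 0$ and $\beta = n_{\max}/s_u^{(i)}$, followed by a rejection step that accepts each candidate with probability at least $\Omega(1)$. By the DPSS guarantee of~\cite{gan2024optimal}, that query returns the candidate set $S'$ in expected time $O(1 + |S'|)$. The $O(|S'|)$ term is chargeable to the candidate output size, and since each candidate is turned into an accepted sample with constant probability, the expected number of candidates per accepted neighbor is also $O(1)$. Hence the only cost per neighborhood that is not chargeable to an output sample is the constant launch overhead of DPSS, giving $\cso = O(1)$, as opposed to $O(\log n)$ for \algosta.

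Second, I would confirm that $\varepsilon = \Theta(\delta/L)$ continues to hold for \algodpss. Since \algodpss~only replaces the subset-sampling subroutine of \algosta~and inherits the rest of Algorithm~\ref{alg:imp_agp} verbatim, the setting $\varepsilon = c^2\delta/(2(L+1))$ from Line 1 is unchanged, so $1/\varepsilon = O(L/\delta)$. Plugging $\cso = O(1)$ and $1/\varepsilon = O(L/\delta)$ into the bound of Theorem~\ref{theorem:query} yields exactly the claimed $O\!\left(\frac{L}{\delta}\sum_{i=1}^L \|Y_i(\mathbf{D}^{-a}\mathbf{A}\mathbf{D}^{-b})^i\cdot \mathbf{x}\|_1 + \E[C_{\text{init}}]\right)$.

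The only step that deserves genuine care, and therefore the most likely obstacle, is the DPSS accounting: one must ensure that the rejection sampling does not inflate either the chargeable cost per accepted sample or the non-chargeable overhead by more than a constant factor. This reduces to checking that the acceptance probability $p_{u,v}/(w(v)/\beta)$ is $\Omega(1)$, which follows from the definition $w(v) = \lceil n_{\max}/d_v^a\rceil$ and the choice $\beta = n_{\max}/s_u^{(i)}$ together with $p_{u,v} = s_u^{(i)}/d_v^a$ up to the ceiling; this is essentially the same geometric-bucket-style bound that controls the analogous cost in \algosta~and requires no new argument beyond a short constant-factor calculation. Once this is discharged, the corollary follows immediately from Theorem~\ref{theorem:query}.
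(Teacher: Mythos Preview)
Your proposal is correct and matches the paper's approach exactly: the corollary is obtained by substituting $\cso = O(1)$ and $\varepsilon = O(\delta/L)$ into Theorem~\ref{theorem:query}, and the paper's justification for $\cso = O(1)$ is precisely the DPSS-plus-constant-probability-rejection argument you outline. Your extra care about the acceptance probability being $\Omega(1)$ is exactly the point the paper asserts (without the detailed calculation) in the sentence preceding the corollary.
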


\subsection{Tighter Variance and Error Bound}
\label{sec:var_ana}

We now analyze the variance of $\boldsymbol{\hat{\pi}}$ returned by our \algosta.
%
%
We define $\left\{\mathbf{\hat{r}}^{(l)}\right\}$ as a union of $\mathbf{\hat{r}}^{(i)}$ for $i = 0 \cdots l$, and $p^{(i)}(u,v) = \mathbf{e}^{T}_v \cdot (\mathbf{D}^{-a}\mathbf{A}\mathbf{D}^{-b})^{i}\cdot \mathbf{e}_u$, which is the $i^{\text{th}}$ normalized transition probability from vertex $u$ to $v$, where $\mathbf{e}_v$ is a one-hot vector with $\mathbf{e}_v(v) = 1$.  
In particular, $p^{(0)}(u,v) = 1$ if and only if $u = v$; otherwise, $p^{(0)}(u,v) = 0$. 
Furthermore, $p^{(1)}(u,v) = \frac{1}{d_v^a \cdot d_u^b}$ if $(u,v) \in E$; otherwise, $p^{(1)}(u,v) = 0$.
We first prove Lemma~\ref{lemma:epe} and Lemma~\ref{lemma:var} regarding the expectation and the variance of $\mathbf{\hat{r}}^{(i)}$.

\begin{lemma}
\begin{small}
\label{lemma:epe}
     $\E\left[\mathbf{\hat{r}}^{(l)}(v)\big|\left\{\mathbf{\hat{r}}^{(l-1)}\right\}\right] 
    = \sum_{u \in N[v]} \frac{Y_{l}}{Y_{l-1}}\cdot p^{(1)}(u,v) \cdot \mathbf{\hat{r}}^{(l-1)}(u)$.
    %
\end{small}
\end{lemma}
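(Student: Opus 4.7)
The plan is to decompose $\mathbf{\hat{r}}^{(l)}(v)$ into a sum of independent random contributions, one from each neighbor $u \in N[v]$, and then take expectation term by term using linearity. Concretely, given the conditioning on $\{\mathbf{\hat{r}}^{(l-1)}\}$, each neighbor $u$ with $\mathbf{\hat{r}}^{(l-1)}(u)\neq 0$ independently attempts to propagate to $v$ in the $(l-1)^{\text{th}}$ iteration: it succeeds with probability $p_{u,v}=\min\bigl\{1,\tfrac{1}{\varepsilon}\cdot\tfrac{Y_l}{Y_{l-1}}\cdot\tfrac{\mathbf{\hat{r}}^{(l-1)}(u)}{d_v^a d_u^b}\bigr\}$ (Line 4 of Algorithm~\ref{alg:imp_agp}), and when it succeeds, contributes exactly $\max\bigl\{\varepsilon,\tfrac{Y_l}{Y_{l-1}}\cdot\tfrac{\mathbf{\hat{r}}^{(l-1)}(u)}{d_v^a d_u^b}\bigr\}$ (Line 6). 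Contributions from distinct neighbors $u$ are conditionally independent, so the conditional expectation is just the sum of $p_{u,v}$ times the corresponding contribution over $u\in N[v]$.

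The core of the argument is a short case analysis showing that the product $p_{u,v}\cdot\max\{\varepsilon,\cdot\}$ collapses to a single clean expression, regardless of whether the sampling probability or the propagated mass is being clipped. Let $\eta=\tfrac{Y_l}{Y_{l-1}}\cdot\tfrac{\mathbf{\hat{r}}^{(l-1)}(u)}{d_v^a d_u^b}$. When $\eta\leq \varepsilon$, we have $p_{u,v}=\eta/\varepsilon$ and $\max\{\varepsilon,\eta\}=\varepsilon$, so the product equals $\eta$. When $\eta>\varepsilon$, we have $p_{u,v}=1$ and $\max\{\varepsilon,\eta\}=\eta$, so the product again equals $\eta$. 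In both sub-cases the expected per-neighbor contribution is exactly $\tfrac{Y_l}{Y_{l-1}}\cdot\tfrac{\mathbf{\hat{r}}^{(l-1)}(u)}{d_v^a d_u^b}=\tfrac{Y_l}{Y_{l-1}}\cdot p^{(1)}(u,v)\cdot \mathbf{\hat{r}}^{(l-1)}(u)$, matching the statement of the lemma once we sum over $u\in N[v]$.

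Finally, I would combine these per-neighbor expectations by linearity of expectation. For neighbors $u$ with $\mathbf{\hat{r}}^{(l-1)}(u)=0$, the outer loop of Algorithm~\ref{alg:imp_agp} skips $u$ entirely, so the corresponding term is $0$, which is consistent with the formula since $\mathbf{\hat{r}}^{(l-1)}(u)=0$ makes the term vanish anyway. It is also worth checking that the subset sampling step of Section~\ref{sec:ss}, although implemented via bucketing and rejection with probabilities $p^*$ and acceptance probability $p_{u,v}/p^*$, ultimately samples each $v\in N[u]$ independently with probability exactly $p_{u,v}$; this marginal independence property, already justified in Section~\ref{sec:ss}, is what licenses treating each neighbor as a Bernoulli trial.

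The main obstacle is not the computation itself but ensuring that the conditioning on $\{\mathbf{\hat{r}}^{(l-1)}\}$ is handled carefully: once we condition, the quantities $p_{u,v}$ and the propagated mass become deterministic functions of known data, and the only remaining randomness is the Bernoulli indicator of whether $u$ samples $v$ at step $l-1$. I would state this explicitly before the case analysis, so that the collapse $p_{u,v}\cdot\max\{\varepsilon,\eta\}=\eta$ directly yields the claimed closed form without any further bookkeeping.
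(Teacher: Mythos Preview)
Your proposal is correct and follows essentially the same approach as the paper: the paper defines the per-neighbor contribution $X^{(l)}(u,v)$, does exactly the same two-case split on whether $\tfrac{Y_l}{Y_{l-1}}\cdot\tfrac{\mathbf{\hat r}^{(l-1)}(u)}{d_v^a d_u^b}$ exceeds $\varepsilon$, shows the conditional expectation equals $\tfrac{Y_l}{Y_{l-1}}\cdot\tfrac{\mathbf{\hat r}^{(l-1)}(u)}{d_v^a d_u^b}$ in both cases, and then sums over $u\in N[v]$ by linearity. Your framing via $\eta$ and the product $p_{u,v}\cdot\max\{\varepsilon,\eta\}$ is just a cosmetic repackaging of the same computation.
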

    
\begin{proof}
     As according to Fact~\ref{fact:unbias}, $\boldsymbol{\hat{\pi}}(v) = \sum_{i=0}^L\mathbf{\hat{q}}^{(i)}(v) = \sum_{i=0}^L\frac{w_i}{Y_i}\cdot \mathbf{\hat{r}}^{(i)}(v)$ is an unbiased estimation, we have $\E[\boldsymbol{\hat{\pi}}(v)] = \sum_{i=0}^L\mathbf{q}^{(i)}(v)$.

First, we use $X^{(i)}(u,v)$ to denote the value propagated from $u$ to $v$ at step $i$, which is formally defined as below:
\begin{small}
\begin{align*}
X^{(i)}(u,v)=
\begin{cases}
\frac{Y_{i}}{Y_{i-1}}\cdot \frac{\mathbf{\hat{r}}^{(i-1)}(u)}{d_v^a\cdot d_u^b} \;\;\;\;\;\;\;\;\;\;\;\;\;\;\;\;\;\;\;\;\;\;\;\;\;\;\;\text{when} \; \frac{Y_{i}}{Y_{i-1}}\cdot \frac{\mathbf{\hat{r}}^{(i-1)}(u)}{d_v^a\cdot d_u^b} \geq \varepsilon \\
\begin{cases}
\varepsilon  & \text{w.p.} \; \frac{1}{\varepsilon}\cdot \frac{Y_{i}}{Y_{i-1}}\cdot \frac{\mathbf{\hat{r}}^{(i-1)}(u)}{d_v^a\cdot d_u^b}\\
0 & \text{w.p.} \; 1-\frac{1}{\varepsilon}\cdot\frac{Y_{i}}{Y_{i-1}}\cdot \frac{\mathbf{\hat{r}}^{(i-1)}(u)}{d_v^a\cdot d_u^b}
\end{cases} \;\text{otherwise}
\end{cases}
\end{align*}
\end{small}

\noindent
By the definition of $X^{(l)}(u,v)$, we have:
\\$E\left[X^{(l)}(u,v)\big|\left\{\mathbf{\hat{r}}^{(l-1)}\right\}\right] = \frac{Y_{l}}{Y_{l-1}}\cdot \frac{\mathbf{\hat{r}}^{(l-1)}(u)}{d_v^a\cdot d_u^b}$. This is because, when $\frac{Y_{i}}{Y_{i-1}}\cdot \frac{\mathbf{\hat{r}}^{(i-1)}(u)}{d_v^a\cdot d_u^b} \geq \varepsilon$, $\;\;X^{(l)}(u,v) = \frac{Y_{l}}{Y_{l-1}}\cdot \frac{\mathbf{\hat{r}}^{(l-1)}(u)}{d_v^a\cdot d_u^b}$;
otherwise \\$E[X^{(l)}(u,v)] = \varepsilon \cdot \frac{1}{\varepsilon}\frac{Y_{i}}{Y_{i-1}}\cdot \frac{\mathbf{\hat{r}}^{(i-1)}(u)}{d_v^a\cdot d_u^b} + 0 \cdot (1- \frac{1}{\varepsilon}\frac{Y_{i}}{Y_{i-1}}\cdot \frac{\mathbf{\hat{r}}^{(i-1)}(u)}{d_v^a\cdot d_u^b}) = \frac{Y_{i}}{Y_{i-1}}\cdot \frac{\mathbf{\hat{r}}^{(i-1)}(u)}{d_v^a\cdot d_u^b}$. As $X^{(l)}(u,v)$ denotes the value propagated from $u$ to $v$ at step $l$, hence $\mathbf{\hat{r}}^{(l)}(v) = \sum_{u \in N[v]}X^{(l)}(u,v)$. Therefore, we have $E\left[\mathbf{\hat{r}}^{(l)}(v)\big|\left\{\mathbf{\hat{r}}^{(l-1)}\right\}\right] = E\left[\sum_{u \in N[v]}X^{(l)}(u,v)\big|\left\{\mathbf{\hat{r}}^{(l-1)}\right\}\right] = \sum_{u \in N[v]}\frac{Y_{l}}{Y_{l-1}}\cdot \frac{\mathbf{\hat{r}}^{(l-1)}(u)}{d_v^a\cdot d_u^b}$.
\end{proof}

\begin{lemma}
\label{lemma:var}
    $\Var\left[\mathbf{\hat{r}}^{(l)}(v)|\left\{\mathbf{\hat{r}}^{(l-1)}\right\}\right] 
    \leq \sum_{u \in N[v]}\varepsilon \cdot \frac{Y_{l}}{Y_{l-1}}\cdot p^{(1)}(u,v) \cdot \mathbf{\hat{r}}^{(l-1)}(u)$.
    
\end{lemma}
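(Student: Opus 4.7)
The plan is to follow exactly the template of the proof of Lemma~\ref{lemma:epe}, decomposing $\mathbf{\hat{r}}^{(l)}(v)$ into the sum $\sum_{u \in N[v]} X^{(l)}(u,v)$ of the contributions from individual neighbors, and then bounding the variance of each summand. The crucial observation to set up is that, conditioned on $\{\mathbf{\hat{r}}^{(l-1)}\}$, the random variables $X^{(l)}(u,v)$ for different $u \in N[v]$ are independent: the sampling of $v$ at step $l$ is done independently inside each neighborhood $N[u]$ according to Line~4 of Algorithm~\ref{alg:imp_agp}, so the randomness driving $X^{(l)}(u,v)$ is separate for each $u$. Consequently, the variance of the sum equals the sum of the variances, i.e., $\Var[\mathbf{\hat{r}}^{(l)}(v)\mid \{\mathbf{\hat{r}}^{(l-1)}\}] = \sum_{u \in N[v]} \Var[X^{(l)}(u,v)\mid \{\mathbf{\hat{r}}^{(l-1)}\}]$.

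Next, I would bound $\Var[X^{(l)}(u,v)\mid \{\mathbf{\hat{r}}^{(l-1)}\}]$ by splitting into the two cases that appear in the definition of $X^{(l)}(u,v)$. In the first case, when $\frac{Y_{l}}{Y_{l-1}}\cdot \frac{\mathbf{\hat{r}}^{(l-1)}(u)}{d_v^a d_u^b} \geq \varepsilon$, the value of $X^{(l)}(u,v)$ is deterministic given $\{\mathbf{\hat{r}}^{(l-1)}\}$, so its conditional variance is $0$, which trivially satisfies the desired bound. In the second case, $X^{(l)}(u,v)$ is a scaled Bernoulli that takes value $\varepsilon$ with probability $p = \frac{1}{\varepsilon}\cdot \frac{Y_{l}}{Y_{l-1}}\cdot \frac{\mathbf{\hat{r}}^{(l-1)}(u)}{d_v^a d_u^b}$ and $0$ otherwise; thus its variance equals $\varepsilon^2 \cdot p(1-p) \leq \varepsilon^2 \cdot p = \varepsilon \cdot \frac{Y_{l}}{Y_{l-1}}\cdot \frac{\mathbf{\hat{r}}^{(l-1)}(u)}{d_v^a d_u^b}$. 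In both cases the bound $\varepsilon\cdot \frac{Y_{l}}{Y_{l-1}}\cdot p^{(1)}(u,v)\cdot \mathbf{\hat{r}}^{(l-1)}(u)$ holds, after rewriting $\frac{1}{d_v^a d_u^b} = p^{(1)}(u,v)$ using the fact that $(u,v)\in E$.

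Summing the per-neighbor bound over $u \in N[v]$ yields the claimed inequality. The proof is thus essentially a direct variance calculation once the conditional independence across neighbors is pinned down. I do not expect any genuine obstacle here; the only subtle point to articulate carefully is that conditioning on the whole history $\{\mathbf{\hat{r}}^{(l-1)}\}$ makes each $X^{(l)}(u,v)$ depend only on fresh, independent coin flips inside the subset-sampling subroutine on $N[u]$, which is what legitimizes turning the variance of the sum into the sum of variances.
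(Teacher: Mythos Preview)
Your proposal is correct and follows essentially the same approach as the paper: decompose $\mathbf{\hat{r}}^{(l)}(v)=\sum_{u\in N[v]}X^{(l)}(u,v)$, observe the deterministic case contributes zero variance, bound the Bernoulli case by $\varepsilon\cdot\frac{Y_l}{Y_{l-1}}\cdot p^{(1)}(u,v)\cdot\mathbf{\hat{r}}^{(l-1)}(u)$, and sum over neighbors. The only cosmetic differences are that you make the conditional-independence step explicit and use the exact Bernoulli variance $\varepsilon^2 p(1-p)\le\varepsilon^2 p$, whereas the paper bounds via the second moment $\E[(X^{(l)}(u,v))^2]$; both reach the same per-term bound.
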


\begin{proof}
     Observe that when $\frac{Y_{l}}{Y_{l-1}}\cdot \frac{\mathbf{\hat{r}}^{(l-1)}(u)}{d_v^a\cdot d_u^b} \geq \varepsilon$, $X^{(l)}(u,v)$ is deterministic, 
    hence this case has no variance. 
    It thus suffices to focus on the case that  
    $\frac{Y_{l}}{Y_{l-1}}\cdot \frac{\mathbf{\hat{r}}^{(l-1)}(u)}{d_v^a\cdot d_u^b} < \varepsilon$, where $X^{(l)}(u,v) \leq \varepsilon$.
    We have 
    $\Var\left[X^{(l)}(u,v)\big|\left\{\mathbf{\hat{r}}^{(l-1)}\right\}\right] = 
    \E\left[\left(X^{(l)}(u,v)\right)^2\big|\left\{\mathbf{\hat{r}}^{(l-1)}\right\}\right] - \left(\E\left[X^{(l)}(u,v)\big|\left\{\mathbf{\hat{r}}^{(l-1)}\right\}\right]\right)^2 
    \leq 
    \E\left[\left(X^{(l)}(u,v)\right)^2\big|\left\{\mathbf{\hat{r}}^{(l-1)}\right\}\right] 
    \leq \varepsilon^2 \cdot \frac{1}{\varepsilon} \cdot \frac{Y_{l}}{Y_{l-1}}\cdot \frac{\mathbf{\hat{r}}^{(l-1)}(u)}{d_v^a\cdot d_u^b} = \varepsilon \cdot \frac{Y_{l}}{Y_{l-1}}\cdot \frac{\mathbf{\hat{r}}^{(l-1)}(u)}{d_v^a\cdot d_u^b}$.
    Again, by the fact that \\$\mathbf{\hat{r}}^{(l)}(v) = \sum_{u \in N[v]}X^{(l)}(u,v)$, we then have $\Var\left[\mathbf{\hat{r}}^{(l)}(v)|\left\{\mathbf{\hat{r}}^{(l-1)}\right\}\right] = \Var\left[\sum_{u \in N[v]}X^{(l)}(u,v)|\left\{\mathbf{\hat{r}}^{(l-1)}\right\}\right] \leq \sum_{u \in N[v]}\varepsilon \cdot \frac{Y_{l}}{Y_{l-1}}\cdot \frac{\mathbf{\hat{r}}^{(l-1)}(u)}{d_v^a\cdot d_u^b}
    =\sum_{u \in N[v]}\varepsilon \cdot \frac{Y_{l}}{Y_{l-1}}\cdot p^{(1)}(u,v) \cdot \mathbf{\hat{r}}^{(l-1)}(u)$.
\end{proof} 
    
\begin{lemma}
\label{lemma:var_ind} 
For $k = 0...L-1$, we have:
\begin{small}
    \begin{align*}
        &\;\;\;\;\Var\left[\sum_{j=0}^k\frac{w_{L-k+j}}{Y_{L-k}} \sum_{u \in V} p^{(j)}(u,v) \cdot \mathbf{\hat{r}}^{(L-k)}(u) + \sum_{i=0}^{L-k-1}\frac{w_i}{Y_i}\mathbf{\hat{r}}^{(i)}(v)\right]\\
        &=\varepsilon\cdot\sum_{j=0}^{k}\mathbf{q}^{(L-k+j)}(v) + Var\Bigg[\sum_{j=0}^{k+1}\frac{w_{L-(k+1)+j}}{Y_{L-(k+1)}} \sum_{u \in V} p^{(j)}(u,v) \cdot \mathbf{\hat{r}}^{(L-(k+1))}(u)\\
        &\;\;\;\;\;\;\;\;\;\;\;\;\;\;\;\;\;\;\;\;\;\;\;\;\;\;\;\;\;\;\;\;\;\;\;\;\;\;\;\;\;\;\;\;\; + \sum_{i=0}^{L-(k+1)-1}\frac{w_i}{Y_i}\mathbf{\hat{r}}^{(i)}(v)\Bigg]
\,.
    \end{align*}
\end{small}
\end{lemma}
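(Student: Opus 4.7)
The plan is to apply the law of total variance by conditioning on the sigma-algebra $\mathcal{F}_{L-k-1}$ generated by $\{\mathbf{\hat{r}}^{(l)}\}_{l \leq L-k-1}$. Write the LHS inner random variable as $Z_k = Z_A + Z_B$, where $Z_A := \sum_{j=0}^k \frac{w_{L-k+j}}{Y_{L-k}} \sum_u p^{(j)}(u,v)\, \mathbf{\hat{r}}^{(L-k)}(u)$ and $Z_B := \sum_{i=0}^{L-k-1} \frac{w_i}{Y_i}\, \mathbf{\hat{r}}^{(i)}(v)$. Since $Z_B$ is $\mathcal{F}_{L-k-1}$-measurable, $\Var[Z_k \mid \mathcal{F}_{L-k-1}] = \Var[Z_A \mid \mathcal{F}_{L-k-1}]$, so the target splits into two subclaims: (i) $\E[Z_k \mid \mathcal{F}_{L-k-1}]$ coincides with the inner argument of the variance on the RHS (call it $Z_{k+1}$); (ii) $\E[\Var[Z_A \mid \mathcal{F}_{L-k-1}]]$ matches $\varepsilon \sum_{j=0}^k \mathbf{q}^{(L-k+j)}(v)$.

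For (i), I would apply Lemma~\ref{lemma:epe} together with the matrix-power identity $\sum_u p^{(j)}(u,v)\, p^{(1)}(w,u) = p^{(j+1)}(w,v)$, which follows from $(\mathbf{D}^{-a}\mathbf{A}\mathbf{D}^{-b})^{j+1} = (\mathbf{D}^{-a}\mathbf{A}\mathbf{D}^{-b})^j \cdot (\mathbf{D}^{-a}\mathbf{A}\mathbf{D}^{-b})$. The $Y_{L-k}$ factors cancel, the summation index shifts $j \mapsto j+1$, and the newly introduced $j=0$ term (produced via $p^{(0)}(u,v) = \mathbb{1}[u=v]$) merges with the $i = L-k-1$ term of $Z_B$ to reconstitute $Z_{k+1}$ exactly. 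For (ii), decompose $\mathbf{\hat{r}}^{(L-k)}(u) = \sum_{w \in N[u]} X^{(L-k)}(w,u)$. By the sampling design of Algorithm~\ref{alg:imp_agp} (each neighbor is sampled independently within a source's neighborhood, and different sources act independently), the entire family $\{X^{(L-k)}(w,u)\}_{w \in V,\, u \in N[w]}$ is mutually independent given $\mathcal{F}_{L-k-1}$. Rewriting $Z_A = \sum_{w,\, u \in N[w]} c_k(u)\, X^{(L-k)}(w,u)$ with $c_k(u) := \sum_{j=0}^k \frac{w_{L-k+j}}{Y_{L-k}}\, p^{(j)}(u,v)$, independence yields $\Var[Z_A \mid \mathcal{F}_{L-k-1}] = \sum_{w,u} c_k(u)^2\, \Var[X^{(L-k)}(w,u) \mid \mathcal{F}_{L-k-1}] \leq \varepsilon \sum_{w,u} c_k(u)^2\, \E[X^{(L-k)}(w,u) \mid \mathcal{F}_{L-k-1}]$, reusing the case analysis from the proof of Lemma~\ref{lemma:var} ($X$ is deterministic in one case and $X \in \{0,\varepsilon\}$ in the other).

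Taking the unconditional expectation, applying Fact~\ref{fact:unbias} to replace $\E[\mathbf{\hat{r}}^{(L-k-1)}]$ by $\mathbf{r}^{(L-k-1)}$, and re-applying the transition identity collapses the sum to $\varepsilon \sum_u c_k(u)^2\, \mathbf{r}^{(L-k)}(u)$. Separately, expanding $\mathbf{q}^{(L-k+j)}(v) = w_{L-k+j} \sum_z p^{(L-k+j)}(z,v)\, \mathbf{x}(z)$ and splitting the $(L-k+j)$-th matrix power as the product of the $j$-th and $(L-k)$-th powers yields $\varepsilon \sum_{j=0}^k \mathbf{q}^{(L-k+j)}(v) = \varepsilon \sum_u c_k(u)\, \mathbf{r}^{(L-k)}(u)$. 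Thus the claim reduces to $c_k(u)^2 \leq c_k(u)$, i.e.\ $c_k(u) \leq 1$. The main obstacle is exactly this last step: $\sum_{j=0}^k \frac{w_{L-k+j}}{Y_{L-k}} \leq 1$ is immediate from the definition of $Y_{L-k}$, so it suffices to bound the matrix-power entries $p^{(j)}(u,v) \leq 1$. This holds automatically in the $a=0$ or $b=0$ (row- or column-stochastic) settings and for $j \in \{0,1\}$ under $d_u, d_v \geq 1$, but in the general $a+b \geq 1$ regime it requires invoking the spectral structure of $\mathbf{D}^{-a}\mathbf{A}\mathbf{D}^{-b}$. A secondary remark is that, since Lemma~\ref{lemma:var} is an inequality, the statement's ``$=$'' is best read as ``$\leq$''.
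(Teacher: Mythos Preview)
Your proposal is correct and follows essentially the same route as the paper. The paper organizes the argument as an induction on $k$ (Lemma~\ref{lemma:base} for $k=0$, Lemma~\ref{lemma:mathind} for the step), but the inductive hypothesis is not used in any substantive way: each case is proved from scratch by the same law-of-total-variance decomposition you describe, conditioning on $\{\mathbf{\hat{r}}^{(l)}\}_{l\le L-k-1}$, identifying the variance-of-conditional-expectation with the RHS variance (your step~(i), their Term~(4)/(7)), and bounding the expectation-of-conditional-variance by the $\varepsilon$-sum (your step~(ii), their Term~(3)/(6)). The paper likewise invokes the bound $p^{(j)}(u,v)\le 1$ without further justification and tacitly treats the ``$=$'' as ``$\le$'', so the two issues you flag are present in the original as well rather than gaps peculiar to your argument.
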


Next, we prove Lemma~\ref{lemma:var_ind} by \textbf{mathematical induction}.
We first prove the \textbf{base case} where $k = 0$, which is Lemma~\ref{lemma:base}. 
    \begin{lemma}
    \label{lemma:base}
        \begin{small}
            \begin{align*}       &~~~~\underbrace{\Var\left[\sum_{j=0}^0\frac{w_{L+j}}{Y_{L}} \sum_{u \in V} p^{(j)}(u,v) \cdot \mathbf{\hat{r}}^{(L)}(u) + \sum_{i=0}^{L-1}\frac{w_i}{Y_i}\cdot \mathbf{\hat{r}}^{(i)}(v)\right]}_{(2)} \\
                &= \varepsilon \cdot \mathbf{q}^{(L)}(v) + \Var\Bigg[\sum_{j=0}^1\frac{w_{L-1+j}}{Y_{L-1}} \sum_{u \in V} p^{(j)}(u,v) \cdot \mathbf{\hat{r}}^{(L-1)}(u) \\
                &\;\;\;\;\;\;\;\;\;\;\;\;\;\;\;\;\;\;\;\;\;\;\;\;\;\;\;\;\;\;\;\;\;\;\; + \sum_{i=0}^{L-2}\frac{w_i}{Y_i}\mathbf{\hat{r}}^{(i)}(v)\Bigg]
            \end{align*}
        \end{small}
    \end{lemma}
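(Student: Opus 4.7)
The plan is to apply the law of total variance, conditioning on the history $\left\{\mathbf{\hat{r}}^{(L-1)}\right\}$. Under this conditioning, every term $\frac{w_i}{Y_i}\mathbf{\hat{r}}^{(i)}(v)$ with $i \leq L-1$ becomes deterministic, and the $j=0$ summand in the first sum equals $\frac{w_L}{Y_L}\mathbf{\hat{r}}^{(L)}(v)$ (since $p^{(0)}(u,v) = \mathbf{1}[u=v]$), so all randomness is concentrated in this single term. Writing the total variance as $\Var[X] = \E\bigl[\Var(X \mid \{\mathbf{\hat{r}}^{(L-1)}\})\bigr] + \Var\bigl(\E[X \mid \{\mathbf{\hat{r}}^{(L-1)}\}]\bigr)$ naturally partitions the LHS into the $\varepsilon \mathbf{q}^{(L)}(v)$ piece and the RHS variance piece.

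For the inner conditional-variance term, I would invoke Lemma~\ref{lemma:var} on $\mathbf{\hat{r}}^{(L)}(v)$, then take outer expectation and apply Fact~\ref{fact:unbias} to replace $\E[\mathbf{\hat{r}}^{(L-1)}(u)]$ by $\mathbf{r}^{(L-1)}(u)$. Unwinding the definitions, $\sum_{u \in N[v]} p^{(1)}(u,v)\, \mathbf{r}^{(L-1)}(u)$ is exactly one application of the propagation matrix $\mathbf{D}^{-a}\mathbf{A}\mathbf{D}^{-b}$ to $\mathbf{r}^{(L-1)}$ evaluated at $v$, which equals $\frac{Y_{L-1}}{w_L}\cdot\mathbf{q}^{(L)}(v)$ after multiplying by the appropriate ratios. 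Combining with the prefactor $\left(\frac{w_L}{Y_L}\right)^2 \cdot \varepsilon \cdot \frac{Y_L}{Y_{L-1}}$ collapses to at most $\varepsilon\cdot \mathbf{q}^{(L)}(v)$, matching the first summand on the RHS.

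For the outer variance of the conditional mean, Lemma~\ref{lemma:epe} lets me rewrite $\frac{w_L}{Y_L}\cdot\E[\mathbf{\hat{r}}^{(L)}(v)\mid \{\mathbf{\hat{r}}^{(L-1)}\}]$ as $\frac{w_L}{Y_{L-1}} \sum_{u \in V} p^{(1)}(u,v)\,\mathbf{\hat{r}}^{(L-1)}(u)$. The key algebraic move is then to re-express the leftover $i=L-1$ contribution $\frac{w_{L-1}}{Y_{L-1}}\mathbf{\hat{r}}^{(L-1)}(v)$ as $\frac{w_{L-1}}{Y_{L-1}} \sum_{u \in V} p^{(0)}(u,v)\,\mathbf{\hat{r}}^{(L-1)}(u)$ using $p^{(0)}(u,v) = \mathbf{1}[u=v]$, so that the two pieces merge into the $j = 0, 1$ double sum appearing on the RHS with the common denominator $Y_{L-1}$. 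The remaining $\sum_{i=0}^{L-2}\frac{w_i}{Y_i}\mathbf{\hat{r}}^{(i)}(v)$ carries through unchanged.

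The main obstacle is the bookkeeping in this consolidation: one has to verify that the ratios of the cumulative weights $Y_L, Y_{L-1}$ telescope cleanly and that the seemingly mismatched index shifts (from $\mathbf{\hat{r}}^{(L)}$ normalized by $Y_L$ to a linear combination of $\mathbf{\hat{r}}^{(L-1)}$ terms normalized by $Y_{L-1}$) align so the two $p^{(0)}$ and $p^{(1)}$ summands can be packaged as $\sum_{j=0}^{1}\frac{w_{L-1+j}}{Y_{L-1}} \sum_{u \in V} p^{(j)}(u,v)\,\mathbf{\hat{r}}^{(L-1)}(u)$. Once this identity is checked, combining the inner and outer pieces immediately yields the claimed decomposition.
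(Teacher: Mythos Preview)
Your proposal is correct and follows essentially the same route as the paper: apply the law of total variance conditioning on $\{\mathbf{\hat{r}}^{(L-1)}\}$, bound the inner conditional-variance piece via Lemma~\ref{lemma:var} and Fact~\ref{fact:unbias} (together with $w_L \leq Y_L$) to get $\varepsilon\cdot \mathbf{q}^{(L)}(v)$, and rewrite the outer variance piece via Lemma~\ref{lemma:epe} plus the $p^{(0)}(u,v)=\mathbf{1}[u=v]$ identity to consolidate the $j=0,1$ terms over the common denominator $Y_{L-1}$. The only remark is that, as you already note with ``at most,'' the first piece is an upper bound rather than an equality, so the displayed ``$=$'' in the lemma statement should really be read as ``$\leq$'' (the paper's own proof also only establishes the inequality).
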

    \begin{proof}
        By the law of Total Variance on Term (2), we have:
        \begin{small}
        \begin{align*}
            \text{Term (2)} &= \Var\left[\sum_{i=0}^L\frac{w_i}{Y_i}\cdot \mathbf{\hat{r}}^{(i)}(v)\right] \\
            &=\underbrace{\E\left[\Var\left[\sum_{i=0}^L\frac{w_i}{Y_i}\cdot \mathbf{\hat{r}}^{(i)}(v)\Bigg|\left\{\mathbf{\hat{r}}^{(L-1)}\right\}\right]\right]}_{(3)} \\
            &\;\;\;\;\;+ 
        \underbrace{\Var\left[\E\left[\sum_{i=0}^L\frac{w_i}{Y_i}\cdot 
        \mathbf{\hat{r}}^{(i)}(v)\Bigg|\left\{\mathbf{\hat{r}}^{(L-1)}\right\} \right]\right]}_{(4)}\;.
        \end{align*}
        \end{small}
        
        We first look into Term (3).
        As $\left\{\mathbf{\hat{r}}^{(L-1)}\right\}$ is given, the variance only comes from $\mathbf{\hat{r}}^{(L)}$. By Lemma~\ref{lemma:var} and the fact that $\mathbf{\hat{r}}$ is an unbiased estimation of $\mathbf{r}$: 
        \begin{small}
        \begin{align*}
            \text{Term (3)} &= \E\left[\Var\left[\frac{w_L}{Y_L}\cdot\mathbf{\hat{r}}^{(L)}(u) \middle| \left\{\mathbf{\hat{r}}^{(L-1)}\right\} \right]\right]\\
            &\leq
        \E\left[(\frac{w_L}{Y_{L}})^2 \cdot \varepsilon \cdot \mathbf{\hat{r}}^{(L)}(v) \right] 
         \leq 
        \varepsilon \cdot \frac{w_L}{Y_L} \cdot \mathbf{r}^{(L)}(v) = \varepsilon \cdot \mathbf{q}^{(L)}(v)\,, 
        \end{align*}
        \end{small}
        
        where the last inequality follows from $w_L \leq \sum_{j= L}^\infty w_j = Y_L$.
        
        Next, we analyze Term (4), again as $\left\{\mathbf{\hat{r}}^{(L-1)}\right\}$ is given, and by Lemma~\ref{lemma:epe}:
        \begin{small}
            \begin{align*}
                \text{Term (4)} &=\Var\left[\E\left[\frac{w_L}{Y_L}\cdot \mathbf{\hat{r}}^{(L)}(v) + \sum_{i=0}^{L-1}\frac{w_i}{Y_i}\cdot \mathbf{\hat{r}}^{(i)}(v)\Bigg|\left\{\mathbf{\hat{r}}^{(L-1)}\right\} \right]\right]\\ 
                &=\Var\left[\E\left[\frac{w_L}{Y_L}\cdot \mathbf{\hat{r}}^{(L)}(v)\Bigg|\left\{\mathbf{\hat{r}}^{(L-1)}\right\}\right] + \sum_{i=0}^{L-1}\frac{w_i}{Y_i}\mathbf{\hat{r}}^{(i)}(v)\right]\\
        &=\Var\left[\frac{w_L}{Y_{L}} \sum_{u \in N[v]}\frac{Y_{L}}{Y_{L-1}}\cdot p^{(1)}(u,v)\cdot \mathbf{\hat{r}}^{(L-1)}(u) + \sum_{i=0}^{L-1}\frac{w_i}{Y_i}\mathbf{\hat{r}}^{(i)}(v)\right]\\
                &=\Var\Bigg[\sum_{u \in N[v]}\frac{w_{L}}{Y_{L-1}}\cdot p^{(1)}(u,v)\cdot \mathbf{\hat{r}}^{(L-1)}(u) + \frac{w_{L-1}}{Y_{L-1}}\mathbf{\hat{r}}^{(L-1)}(v)\\
                &\;\;\;\;\;\;\; + \sum_{i=0}^{L-2}\frac{w_i}{Y_i}\mathbf{\hat{r}}^{(i)}(v)\Bigg]\,.
            \end{align*}
        \end{small}
        
        By the definition of $p^{(j)}(u,v)$, Term (4) can be  rewritten as:\\
        $ \Var\left[\sum_{j=0}^1\frac{w_{L-1+j}}{Y_{L-1}} \sum_{u \in V} p^{(j)}(u,v) \cdot \mathbf{\hat{r}}^{(L-1)}(u) + \sum_{i=0}^{L-2}\frac{w_i}{Y_i}\mathbf{\hat{r}}^{(i)}(v)\right]$.
        
        Summing Term (3) and Term (4) up, Lemma~\ref{lemma:base} follows.
    \end{proof} 
   
\noindent 
    We then prove the \textbf{inductive case} as stated in Lemma~\ref{lemma:mathind}.
    
    \begin{lemma}
    \label{lemma:mathind}
        If Lemma~\ref{lemma:var_ind} holds for $k = l$, then it is also true for $k = l + 1$ for $0 \leq l < L$.
    \end{lemma}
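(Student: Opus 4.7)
The plan is to mirror the argument used in the base case (Lemma~\ref{lemma:base}), merely shifting all indices by $(l+1)$. First I would denote the LHS at $k = l+1$ as $\Var[A+B]$, where $A = \sum_{j=0}^{l+1}\frac{w_{L-l-1+j}}{Y_{L-l-1}} \sum_{u \in V} p^{(j)}(u,v) \cdot \mathbf{\hat r}^{(L-l-1)}(u)$ collects everything depending on the most recent residue and $B = \sum_{i=0}^{L-l-2}\frac{w_i}{Y_i}\mathbf{\hat r}^{(i)}(v)$ depends only on earlier residues. Then I would apply the Law of Total Variance conditioning on $\{\mathbf{\hat r}^{(L-l-2)}\}$. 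Since $B$ is fully determined by this conditioning, the decomposition collapses to $\E[\Var[A \mid \{\mathbf{\hat r}^{(L-l-2)}\}]] + \Var[\E[A \mid \{\mathbf{\hat r}^{(L-l-2)}\}] + B]$, exactly in parallel with Terms (3) and (4) in the base case.

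Next, for the conditional-variance (``Term (3)'') piece, I would rewrite $A$ as a linear combination of the atomic per-edge contributions $X^{(L-l-1)}(w,u)$, noting that conditioned on $\{\mathbf{\hat r}^{(L-l-2)}\}$ these are mutually independent across all $(w,u)$ pairs (since different neighborhoods use independent samplers and within one neighborhood each neighbor is sampled by an independent Bernoulli). Additivity of variance then reduces the bound to a term-wise application of Lemma~\ref{lemma:var}; combining this with $w_k \le Y_k$ and Fact~\ref{fact:unbias} to replace $\E[\mathbf{\hat r}^{(L-l-2)}]$ by $\mathbf{r}^{(L-l-2)}$, I would obtain the contribution $\varepsilon \cdot \sum_{j=0}^{l+1}\mathbf{q}^{(L-l-1+j)}(v)$.

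For the variance-of-conditional-expectation (``Term (4)'') piece, I would apply Lemma~\ref{lemma:epe} to substitute $\E[\mathbf{\hat r}^{(L-l-1)}(u) \mid \{\mathbf{\hat r}^{(L-l-2)}\}] = \sum_{w \in N[u]} \tfrac{Y_{L-l-1}}{Y_{L-l-2}}\, p^{(1)}(w,u)\, \mathbf{\hat r}^{(L-l-2)}(w)$, then interchange summations and invoke the matrix-product identity $\sum_{u \in V} p^{(j)}(u,v)\, p^{(1)}(w,u) = p^{(j+1)}(w,v)$ to re-index $j \mapsto j+1$. The ratio $\tfrac{Y_{L-l-1}}{Y_{L-l-2}}$ absorbs into the denominator to produce the coefficient $\tfrac{w_{L-l-2+j'}}{Y_{L-l-2}}$ at the new index $j' = j+1$. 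The $j' = 0$ slot—missing from this shift—is supplied by peeling off the highest term $\tfrac{w_{L-l-2}}{Y_{L-l-2}}\mathbf{\hat r}^{(L-l-2)}(v)$ from $B$ using $p^{(0)}(w,v) = \mathbb{1}\{w=v\}$; the remainder of $B$ is exactly $\sum_{i=0}^{L-l-3}\tfrac{w_i}{Y_i}\mathbf{\hat r}^{(i)}(v)$. The resulting expression matches the LHS of the claim at $k = l+2$, completing the induction.

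The main obstacle is the variance step, because Lemma~\ref{lemma:var} is stated for $\mathbf{\hat r}^{(l)}(v)$ at a single vertex, whereas $A$ blends $\mathbf{\hat r}^{(L-l-1)}(u)$ across many $u$. The subtle point to get right is the independence of the atomic $X^{(L-l-1)}(w,u)$'s given $\{\mathbf{\hat r}^{(L-l-2)}\}$ so that the cross-covariances vanish and Lemma~\ref{lemma:var} can be applied edge-wise; once this is justified, the bookkeeping of coefficients $\tfrac{w_{L-l-1+j}}{Y_{L-l-1}}\, p^{(j)}(u,v)$ is routine.
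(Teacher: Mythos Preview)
Your proposal is correct and follows essentially the same route as the paper: apply the Law of Total Variance to the LHS at $k=l+1$ conditioning on $\{\mathbf{\hat r}^{(L-l-2)}\}$, bound the expected conditional variance via Lemma~\ref{lemma:var} and the unbiasedness Fact~\ref{fact:unbias}, and handle the variance-of-conditional-expectation via Lemma~\ref{lemma:epe} plus the Chapman--Kolmogorov style identity $\sum_{u} p^{(j)}(u,v)\,p^{(1)}(w,u)=p^{(j+1)}(w,v)$ and the peel-off of the top term of $B$. The only cosmetic difference is that you descend to the atomic edge variables $X^{(L-l-1)}(w,u)$ to justify additivity of the conditional variance, whereas the paper stays at the level of $\mathbf{\hat r}^{(L-l-1)}(u)$ and implicitly uses their conditional independence across $u$; both are equivalent. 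One small remark: in the variance step the paper uses not only a weight bound but also $p^{(j)}(u,v)\le 1$ to pass from the squared coefficient $\bigl(\sum_{j}\tfrac{w_{L-l-1+j}}{Y_{L-l-1}}\,p^{(j)}(u,v)\bigr)^2$ to its first power, and the weight inequality actually needed is $\sum_{j=0}^{l+1} w_{L-l-1+j}\le Y_{L-l-1}$ (the partial tail sum), which is slightly stronger than the pointwise ``$w_k\le Y_k$'' you cite; make sure to state both ingredients explicitly when you write it up.
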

    
    \begin{proof}
        When $k = l$, by Lemma~\ref{lemma:var_ind} we have
            \begin{small}
            \begin{align*}
                &\;\;\;\;\;\Var\left[\sum_{j=0}^l\frac{w_{L-l+j}}{Y_{L-l}} \sum_{u \in V} p^{(j)}(u,v) \cdot \mathbf{\hat{r}}^{(L-l)}(u) + \sum_{i=0}^{L-l-1}\frac{w_i}{Y_i}\mathbf{\hat{r}}^{(i)}(v)\right]\\
                &=\varepsilon\cdot\sum_{j=0}^{l}\mathbf{q}^{(L-l+j)}(v) \\
                &\;\;\;\;\;+ \underbrace{%
                \begin{array}{l}
            \displaystyle\Var\Bigg[\sum_{j=0}^{l+1}\frac{w_{L-(l+1)+j}}{Y_{L-(l+1)}} \sum_{u \in V} p^{(j)}(u,v) \cdot \mathbf{\hat{r}}^{(L-(l+1))}(u) \\
                \displaystyle \;\;\;\;\;\;\;\;+ \sum_{i=0}^{L-(l+1)-1}\frac{w_i}{Y_i}\mathbf{\hat{r}}^{(i)}(v)\Bigg]\end{array}}_{(5)}\,.
            \end{align*} 
        \end{small}

        Again, we apply the law of Total Variance on Term (5):
        
        \begin{small}
        \begin{align*}
                &\text{Term (5)}\\
                 = &\underbrace{%
                 \begin{array}{l}
                 \displaystyle \E\Bigg[\Var\Bigg[ \sum_{j=0}^{l+1}\frac{w_{L-(l+1)+j}}{Y_{L-(l+1)}}\cdot \sum_{u \in V} p^{(j)}(u,v) \cdot \mathbf{\hat{r}}^{(L-(l+1))}(u)\\
                 \displaystyle \;\;\;\;\;\;\;\;\;\;\;\;\;+ \sum_{i=0}^{L-(l+1)-1}\frac{w_i}{Y_i}\mathbf{\hat{r}}^{(i)}(v)\Bigg|\left\{\mathbf{\hat{r}}^{(L-(l+1)-1)}\right\}\Bigg]\Bigg]
                 \end{array}}_{(6)} \\
                & + \underbrace{
                \begin{array}{l}
                 \displaystyle \Var\Bigg[\E\Bigg[ \sum_{j=0}^{l+1}\frac{w_{L-(l+1)+j}}{Y_{L-(l+1)}}\cdot \sum_{u \in V} p^{(j)}(u,v) \cdot \mathbf{\hat{r}}^{(L-(l+1))}(u)\\
                 \displaystyle \;\;\;\;\;\;\;\;\;\;\;\;\;\; + \sum_{i=0}^{L-(l+1)-1}\frac{w_i}{Y_i}\mathbf{\hat{r}}^{(i)}(v)\Bigg|\left\{\mathbf{\hat{r}}^{(L-(l+1)-1)}\right\}\Bigg]\Bigg]\end{array}}_{(7)}\;.
            \end{align*}
        \end{small}
        We first analyze  Term (6). As $\left\{\mathbf{\hat{r}}^{(L-(l+1)-1)}\right\}$ is given, hence
        \begin{small}
            \begin{align*}
                &\;\;\;\;\;\text{Term (6)} \\
                &= \E\left[\Var\left[ \sum_{j=0}^{l+1}\frac{w_{L-(l+1)+j}}{Y_{L-(l+1)}}\cdot \sum_{u \in V} p^{(j)}(u,v) \cdot \mathbf{\hat{r}}^{(L-(l+1))}(u)\Bigg|\left\{\mathbf{\hat{r}}^{(L-(l+1)-1)}\right\}\right]\right]\,. 
            \end{align*}
        \end{small}
        
        By Lemma~\ref{lemma:var},  $\Var\left[\mathbf{\hat{r}}^{(L-(l+1))}(u)\big|\left\{\mathbf{\hat{r}}^{(L-(l+1)-1)}\right\}\right]  \leq \sum_{w \in N[u]}\varepsilon \cdot \frac{Y_{L-(l+1)}}{Y_{L-(l+1)-1}}\cdot p^{(1)}(w,u) \cdot \mathbf{\hat{r}}^{(L-(l+1)-1)}(w)$, and by the definition of variance, we can further have:
        \begin{small}
            \begin{align*}
                \text{Term (6)} &\leq \E\Bigg[ \sum_{u \in V} \left(\sum_{j=0}^{l+1}\frac{w_{L-(l+1)+j}}{Y_{L-(l+1)}}\cdot  p^{(j)}(u,v)\right)^2 \cdot \sum_{w\in N[u]} \varepsilon \cdot \frac{Y_{L-(l+1)}}{Y_{L-(l+1)-1}} \\
                &\;\;\;\;\;\;\;\;\;\;\;\;\;\cdot p^{(1)}(w,u) \cdot \mathbf{\hat{r}}^{(L-(l+1)-1)}(w)\Bigg]\,.\\
            \end{align*}
        \end{small}
        
By the fact that $p^{(j)}(u,v) \leq 1$ and 
        $\sum_{j=0}^{l+1}\frac{w_{L-(l+1)+j}}{Y_{L-(l+1)}} 
        \leq 
        \frac{Y_{L - (l + 1)}}{Y_{L-(l+1)}}  \leq 1$,
        Term (6) can be bounded as:
        
        \begin{small}
        \begin{align*}
            \text{Term (6)} &\leq \E\Bigg[\varepsilon\cdot \sum_{j=0}^{l+1}\frac{w_{L-(l+1)+j}}{Y_{L-(l+1)}}\cdot \sum_{u \in V} p^{(j)}(u,v) \cdot \sum_{w\in N[u]} \frac{Y_{L-(l+1)}}{Y_{L-(l+1)-1}} \\
            &\;\;\;\;\;\;\;\;\;\;\;\;\;\cdot p^{(1)}(w,u) \cdot \mathbf{\hat{r}}^{(L-(l+1)-1)}(w)\Bigg]\\
            &=\E\left[\varepsilon\cdot \sum_{j=0}^{l+1}\frac{w_{L-(l+1)+j}}{Y_{L-(l+1)}}\cdot \sum_{u \in V} p^{(j)}(u,v) \cdot  \mathbf{\hat{r}}^{(L-(l+1))}(u)\right]\,.
        \end{align*}
        \end{small}
        
        According to  the definition of
        $\mathbf{\hat{r}}^{(i)}(v)$, we have:
\\
$\sum_{u \in V} p^{(j)}(u,v) \cdot \mathbf{\hat{r}}^{(L-(l+1))}(u) = \mathbf{\hat{r}}^{(L-(l+1)+j)}(v) \cdot \frac{Y_{L-(l+1)}}{Y_{L-(l+1)+j}}$. Intuitively, it processes $j$ more steps of propagation to $v$. 
        As a result, Term (6) $\leq \varepsilon \cdot \sum_{j=0}^{l+1}\mathbf{q}^{(L-(l+1)+j)}(v)$. 
        
        We now consider Term (7). Again, as $\left\{\mathbf{\hat{r}}^{(L-(l+1)-1)}\right\}$ is given, and by Lemma~\ref{lemma:epe}, 
        \begin{small}
        \begin{align*}
        &\;\;\;\;\text{Term (7)} \\
        &= \Var\Bigg[\E\Bigg[ \sum_{j=0}^{l+1}\frac{w_{L-(l+1)+j}}{Y_{L-(l+1)}} \sum_{u \in V} p^{(j)}(u,v) \cdot \mathbf{\hat{r}}^{(L-(l+1))}(u)
        \big|\left\{\mathbf{\hat{r}}^{(L-(l+1)-1)}\right\}
        \Bigg]\\
        &\;\;\;\;\;\;\;\;\;\;\;\;\;\;\;\;+ \sum_{i=0}^{L-(l+1)-1}\frac{w_i}{Y_i}\mathbf{\hat{r}}^{(i)}(v)\Bigg]\\
        &=\Var\Bigg[\sum_{j=0}^{l+1}\frac{w_{L-(l+1)+j}}{Y_{L-(l+1)}} \sum_{u \in V} p^{(j)}(u,v) \sum_{w\in N[u]} \frac{Y_{L-(l+1)}}{Y_{L-(l+1)-1}} \cdot p^{(1)}(w,u)\\
        &\;\;\;\;\;\;\;\;\;\;\;\;\;\;\;\cdot \mathbf{\hat{r}}^{(L-(l+1)-1)}(w)
        + \sum_{i=0}^{L-(l+1)-1}\frac{w_i}{Y_i}\mathbf{\hat{r}}^{(i)}(v)\Bigg]\,.
        \end{align*}
        \end{small}
        
        \noindent With $\sum_{w\in N[u]} \frac{Y_{L-(l+1)}}{Y_{L-(l+1)-1}} \cdot p^{(1)}(w,u) \cdot \mathbf{\hat{r}}^{(L-(l+1)-1)}(w) = \mathbf{\hat{r}}^{(L-(l+1))}(u)$, we can rewrite $\sum_{j=0}^{l+1}\frac{w_{L-(l+1)+j}}{Y_{L-(l+1)}} \sum_{u \in V} p^{(j)}(u,v) \sum_{w\in N[u]} \frac{Y_{L-(l+1)}}{Y_{L-(l+1)-1}} \cdot p^{(1)}(w,u) \cdot \mathbf{\hat{r}}^{(L-(l+1)-1)}(w)$ as $\sum_{j=1}^{l+2}\frac{w_{L-(l+2)+j}}{Y_{L-(l+2)}} \sum_{u \in V} p^{(j)}(u,v) \cdot \mathbf{\hat{r}}^{(L-(l+2))}(u)$. We can then split $\frac{w_{L-(l+1)-1}}{Y_{L-(l+1)-1}}\mathbf{\hat{r}}^{(L-(l+1) -1)}(v)$\\$ = \sum_{j=0}^{0}\frac{w_{L-(l+2)+j}}{Y_{L-(l+2)}} \sum_{u \in V} p^{(j)}(u,v) \cdot \mathbf{\hat{r}}^{(L-(l+2))}(u)$ from the right term, $\sum_{i=0}^{L-(l+1)-1}\frac{w_i}{Y_i}\mathbf{\hat{r}}^{(i)}(v)$, and adds it to the left term to get:
        \begin{small}
        \begin{align*}
         \text{Term (7)} = \Var\Bigg[&\sum_{j=0}^{l+2}\frac{w_{L-(l+2)+j}}{Y_{L-(l+2)}} \sum_{u \in V} p^{(j)}(u,v) \cdot \mathbf{\hat{r}}^{(L-(l+2))}(u)\\
         &+ \sum_{i=0}^{L-(l+2)-1}\frac{w_i}{Y_i}\mathbf{\hat{r}}^{(i)}(v)\Bigg]\,.
        \end{align*}
        \end{small}
        
\noindent 
        Therefore, we have $\text{Term (5)} = \text{Term (6)} + \text{Term (7)}$ \\ $ \leq \varepsilon \cdot \sum_{j=0}^{l+1}\mathbf{q}^{(L-(l+1)+j)}(v) + \Var\Bigg[\sum_{j=0}^{l+2}\frac{w_{L-(l+2)+j}}{Y_{L-(l+2)}} \sum_{u \in V} p^{(j)}(u,v) \cdot \mathbf{\hat{r}}^{(L-(l+2))}(u)  + \sum_{i=0}^{L-(l+2)-1}\frac{w_i}{Y_i}\mathbf{\hat{r}}^{(i)}(v)\Bigg]$. 

This completes the proof of Lemma~\ref{lemma:mathind}.
\end{proof} 
   
By  
Lemma~\ref{lemma:base} (the base case when $k = 0$) and Lemma~\ref{lemma:mathind} (the inductive case), the mathmetical induction proof for Lemma~\ref{lemma:var_ind} establishes.
%
Finally, by Lemma~\ref{lemma:var_ind}, we can bound the variance:
\begin{small}
\begin{align*}
    \Var[\boldsymbol{\hat{\pi}}(v)] &=\Var\left[\sum_{i=0}^L\frac{w_i}{Y_i}\cdot \mathbf{\hat{r}}^{(i)}(v)\right]\\
    &=\Var\left[\sum_{j=0}^0\frac{w_{L+j}}{Y_{L}} \sum_{u \in V} p^{(j)}(u,v) \mathbf{\hat{r}}^{(L)}(u) + \sum_{i=0}^{L-1}\frac{w_i}{Y_i} \mathbf{\hat{r}}^{(i)}(v)\right]\\
    &=\varepsilon  \sum_{j=0}^0\mathbf{q}^{(L-1+j)}(v)+  \cdots +\varepsilon \sum_{j=0}^{L-1}\mathbf{q}^{(1+j)}(v)\\
    &\;\;\;\;\; +\Var\left[\sum_{j=0}^Lw_j  \sum_{u \in V} p^{(j)}(u,v) \mathbf{\hat{r}}^{(0)}(u)\right] \\
    &= \varepsilon  \sum_{k=0}^{L}\sum_{j=0}^{k}\mathbf{q}^{(L-k+j)}(v) + \Var\left[\sum_{j=0}^Lw_j  \sum_{u \in V} p^{(j)}(u,v) \mathbf{\hat{r}}^{(0)}(u)\right]\;.
\end{align*}
\end{small}


Since $\varepsilon \cdot \sum_{k=0}^{L}\sum_{j=0}^{k}\mathbf{q}^{(L-k+j)}(v) \leq \varepsilon \cdot L\sum_{j=0}^{L}\mathbf{q}^{(j)}(v) \leq \varepsilon \cdot L \cdot \boldsymbol{\pi}(v)$. And by the definition of variance, we have:
\begin{small}
\begin{align*}
    \Var\left[\sum_{j=0}^Lw_j \cdot \sum_{u \in V} p^{(j)}(u,v)\cdot \mathbf{\hat{r}}^{(0)}(u)\right] 
    = \sum_{u \in V}\left(\sum_{j=0}^Lw_j \cdot p^{(j)}(u,v)\right)^2\cdot \Var[\mathbf{\hat{r}}^{(0)}(u)]\,.
\end{align*} 
\end{small}

\noindent
Therefore, 
\begin{small}
\begin{align*}
\Var[\boldsymbol{\hat{\pi}}(v)] \leq \varepsilon \cdot L \cdot \boldsymbol{\pi}(v)  +\sum_{u \in V}\left(\sum_{j=0}^Lw_j \cdot p^{(j)}(u,v)\right)^2\cdot~\Var[\mathbf{\hat{r}}^{(0)}(u)]\,.
\end{align*}
\end{small}

\noindent
Here,
$\Var[\mathbf{\hat{r}}^{(0)}(u)]$ is related to the initialization method. 
When 
$\mathbf{\hat{r}}^{(0)}$ is initialized deterministically (e.g., Line 1 in Algorithm~\ref{alg:imp_agp}), 
$\Var[\boldsymbol{\hat{r}}^{(0)}(u)] = 0$, and thus,
$\Var[\boldsymbol{\hat{\pi}}(v)] \leq \varepsilon \cdot L \cdot \boldsymbol{\pi}(v)$.

Moreover, 
in Section~\ref{sec:rand},
we propose a randomized initialization algorithm for some special cases when the input $\mathbf{x}$ satisfies certain conditions.
We prove that, with our randomized initialization, 
$\Var[\boldsymbol{\hat{\pi}}(v)] \leq \varepsilon \cdot (L + 1) \cdot \boldsymbol{\pi}(v)$ holds.

Therefore, in either initialization way, by Chebyshev's Inequality~\cite{ross2020first},
when $\varepsilon = \frac{c^2 \delta}{2(L + 1) } = O(\frac{\delta}{L})$, we have:
\begin{small}
\begin{align*}
    \pr[|\boldsymbol{\pi}(v) - \boldsymbol{\hat{\pi}}(v)| > c\cdot \boldsymbol{\pi}(v)] 
    &\leq \frac{\Var[\boldsymbol{\hat{\pi}}(v)]}{c^2\cdot \boldsymbol{\pi}(v)^2} \leq \frac{\varepsilon  (L+1)}{c^2 \cdot \boldsymbol{\pi} (v)} 
    \leq \frac{\varepsilon (L+1)}{c^2\delta} =  O(1)\,.
\end{align*}
\end{small} 

\noindent
Together with Fact~\ref{fact:delta} and Algorithm~\ref{alg:imp_agp}, 
the correctness of our \algosta\ follows and is summarized in the theorem below:

\vspace{2mm}
\begin{theorem}
For any query $q(a, b, \mathcal{O}_w, \mathbf{x})$, 
our \algosta\ can return an $(\delta, c)$-approximation
$\boldsymbol{\hat{\pi}}$ 
with at least a constant probability. 
\end{theorem}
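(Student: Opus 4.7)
The plan is to bound the total error $|\boldsymbol{\pi}(v) - \boldsymbol{\hat{\pi}}(v)|$ at every vertex $v$ with $\boldsymbol{\pi}(v) > \delta$ by combining a deterministic truncation bound with a probabilistic concentration bound that uses the variance calculations developed in Section~\ref{sec:var_ana}. Concretely, I would write $\boldsymbol{\pi}_L := \sum_{i=0}^L w_i \cdot (\mathbf{D}^{-a}\mathbf{A}\mathbf{D}^{-b})^i \cdot \mathbf{x}$ and split the error via the triangle inequality into $|\boldsymbol{\pi}(v) - \boldsymbol{\pi}_L(v)| + |\boldsymbol{\pi}_L(v) - \boldsymbol{\hat{\pi}}(v)|$. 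The first (deterministic) term is immediately controlled by Fact~\ref{fact:delta}: setting $L = L_{\mathcal{O}_w}(c\cdot\delta)$ in Line~1 of Algorithm~\ref{alg:imp_agp} bounds it by $c\cdot\delta$, which is at most $c\cdot\boldsymbol{\pi}(v)$ whenever $\boldsymbol{\pi}(v) > \delta$. (If one wants to absorb both error sources under the same constant $c$, I would redo the argument with $L = L_{\mathcal{O}_w}(c\delta/2)$ so the truncation contributes at most $c\boldsymbol{\pi}(v)/2$; this only changes constants.)

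Next I would control the stochastic term $|\boldsymbol{\pi}_L(v) - \boldsymbol{\hat{\pi}}(v)|$. By Fact~\ref{fact:unbias}, $\E[\boldsymbol{\hat{\pi}}(v)] = \boldsymbol{\pi}_L(v)$, so this is exactly the deviation of $\boldsymbol{\hat{\pi}}(v)$ from its mean. I would invoke the chain of Lemmas~\ref{lemma:epe}--\ref{lemma:mathind} to obtain
\[
\Var[\boldsymbol{\hat{\pi}}(v)] \;\leq\; \varepsilon\cdot L\cdot \boldsymbol{\pi}(v) \;+\; \sum_{u \in V}\Bigl(\sum_{j=0}^{L}w_j\cdot p^{(j)}(u,v)\Bigr)^2 \cdot \Var[\mathbf{\hat{r}}^{(0)}(u)].
\]
For the deterministic initialization $\mathbf{\hat{r}}^{(0)} = \mathbf{x}$ used in Algorithm~\ref{alg:imp_agp}, the last term vanishes, yielding $\Var[\boldsymbol{\hat{\pi}}(v)] \leq \varepsilon\cdot L \cdot \boldsymbol{\pi}(v)$; for the randomized initialization of Section~\ref{sec:rand} an extra additive $\varepsilon \cdot \boldsymbol{\pi}(v)$ appears, so the bound becomes $\varepsilon \cdot (L+1) \cdot \boldsymbol{\pi}(v)$. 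Plugging $\varepsilon = \frac{c^2\delta}{2(L+1)}$ into Chebyshev's inequality and using $\boldsymbol{\pi}(v) > \delta$ then gives
\[
\Pr\bigl[|\boldsymbol{\pi}_L(v) - \boldsymbol{\hat{\pi}}(v)| > c\cdot \boldsymbol{\pi}(v)\bigr] \;\leq\; \frac{\varepsilon(L+1)}{c^2\cdot \boldsymbol{\pi}(v)} \;\leq\; \frac{\varepsilon(L+1)}{c^2\delta} \;\leq\; \tfrac{1}{2},
\]
which is a constant failure probability.

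The main obstacle, in my view, is not this final Chebyshev step but the careful bookkeeping needed in the inductive variance recursion of Lemma~\ref{lemma:var_ind}: at every iteration one must condition on the history $\{\mathbf{\hat{r}}^{(l-1)}\}$, apply the law of total variance, re-express the conditional expectation via the one-step transition $p^{(1)}(u,v)$, and fold the resulting term back into the propagation sum so that the telescoping structure is preserved. Handling the two regimes of $X^{(i)}(u,v)$ (the deterministic regime when the mass already exceeds $\varepsilon$ and the Bernoulli regime when it does not) is also delicate, because only the second regime contributes variance and the bound $\Var \leq \varepsilon \cdot \E$ is what eventually produces the single factor of $\varepsilon$ in front of $\boldsymbol{\pi}(v)$. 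Once that recursion is in hand, combining the truncation bound, the variance bound, and Chebyshev's inequality as above closes the proof.
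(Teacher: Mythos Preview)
Your proposal is correct and follows essentially the same route as the paper: the variance recursion of Lemmas~\ref{lemma:epe}--\ref{lemma:mathind} gives $\Var[\boldsymbol{\hat{\pi}}(v)] \leq \varepsilon(L+1)\boldsymbol{\pi}(v)$, Chebyshev converts this into a constant failure probability for the stochastic error, and Fact~\ref{fact:delta} handles the truncation. If anything, your explicit triangle-inequality split between $|\boldsymbol{\pi}(v)-\boldsymbol{\pi}_L(v)|$ and $|\boldsymbol{\pi}_L(v)-\boldsymbol{\hat{\pi}}(v)|$ is slightly more careful than the paper's presentation, which applies Chebyshev directly to $|\boldsymbol{\pi}(v)-\boldsymbol{\hat{\pi}}(v)|$ and only afterward invokes Fact~\ref{fact:delta} to absorb the truncation bias.
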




\section{AGP on Dynamic Graphs}
\label{sec:dyna_agp}

 While our \algosta~can achieve good query time complexity, it does not work well with graph updates.
%
To see this, suppose there is an update (either an insertion or a deletion) of an edge $(u,w)$;
due to this  
update, the degree $d_u$ (resp., $d_w$) changes, and hence, the sampling probabilities of $u$ within the neighborhoods $N[v]$ of all $u$'s neighbors $v$ change accordingly.
As a result, while the subset sampling structures in \algosta~
can be updated efficiently in $O(1)$ time,
the overall cost for each graph update incurs $O(d_u)$ (resp., $O(d_w)$) time, which can be as large as $O(n)$ in the worst case.
This limits the applications of \algosta\ to scenarios where the underlying graph $G$ is updated frequently.

To address this challenge, we introduce a strengthened version of \algosta, called \algodyn, which can support each edge insertion or deletion in $O(1)$ amortized time (substantially improving the aforementioned $O(n)$ update time bound). 
The core idea of \algodyn~is an observation that
it is indeed {\em  unnecessary} to update the sampling probability for $u$ within the neighborhoods of $u$'s neighbors for {\em every} graph update related to $u$, as long as 
our algorithm can still have a {\em good overestimate} $p^*$ of $p_{v,u}$
such that $p_{\text{ac}} = \frac{p_{v,u}}{p^*} \in \Omega(1)$.
If this is the case, 
one can still first sample $u$ with probability $p^*$ and set the accept probability $p_{\text{ac}}$ with respect to $u$'s current degree.
According to our theoretical analysis, 
the subset sampling complexity within each neighborhood would remain the same, and so as the overall expected query time bound.

\vspace{2mm}
\noindent
{\bf Performing Updates.}
To achieve this, \algodyn~maintains a {\em reference degree}, denoted by $\tilde{d}_u$, for each vertex $u \in V$, 
which records the degree value of $u$ when the {\em last} update for $u$ within the neighborhoods of $u$'s neighbors, and initially, $\tilde{d}_u = d_u$.
The detailed implementation is shown in Algorithm~\ref{alg:update}.

\begin{small}
        \begin{algorithm}
        \caption{\algodyn}\label{alg:update}
        \DontPrintSemicolon
        \SetKwComment{Comment}{/* }{ */}
        \KwIn{an edge update $(u, w)$}
        \lIf{$(u,w)$ is a deletion}{
        remove $u$ from $B[w]$ and
        $d_u \leftarrow d_u - 1;$}   
        \lElse{
        insert $u$ to $B[w]$ and
            $d_u \leftarrow d_u + 1;$}
        
        \If{$d_u < \frac{1}{2}\cdot \tilde{d}_u$ or $d_u > 2\cdot \tilde{d}_u$}
        {
            \For{each $v \in N[u]$}
            {
                update $B[v]$ based on  current $d_u$;
            }
            $\tilde{d}_u \leftarrow d_u$;
        }
        handle $w$ symmetrically
        
        \end{algorithm}
        \end{small}

\begin{theorem}
\label{theorem:update}
    Algorithm~\ref{alg:update} can process each graph update in $O(1)$ amortized time. 
\end{theorem}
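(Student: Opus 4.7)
The plan is a standard amortized analysis via the accounting (token) method, exploiting the doubling/halving rule that governs when a rebuild fires. I first separate the per-update work into a ``cheap'' part that is unconditionally $O(1)$ and a ``rebuild'' part (Lines 5--7 of Algorithm~\ref{alg:update}) that costs $O(d_u)$ but fires only occasionally; the whole task then reduces to amortizing the rebuilds.

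First I would argue that the cheap part is genuinely $O(1)$ per update. An edge update $(u,w)$ changes only $d_u$ and $d_w$ and inserts/removes a single entry in $B[w]$ and $B[u]$. Since the bucket index for a vertex $v$ is $\lfloor \log_2 d_v\rfloor$ and can be computed in $O(1)$ in the Word RAM model, and each bucket supports $O(1)$ insertion and deletion (e.g., with a doubly-linked list plus a pointer back from each element), this part contributes $O(1)$ per update.

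Next I would bound the cost of a rebuild at $u$. When the predicate on Line~4 fires, we touch $B[v]$ for every $v\in N[u]$, costing $\Theta(d_u)$ time. The key observation is that at the moment the rebuild fires, $d_u = \Theta(\tilde{d}_u)$: since the check is performed after every single update and each update changes $d_u$ by exactly one, we have $d_u \in \{\lfloor\tilde{d}_u/2\rfloor-1,\ 2\tilde{d}_u+1\}$. Moreover, between two consecutive rebuilds at $u$, the degree must move from its value $\tilde{d}_u$ at the previous rebuild to outside the window $[\tilde{d}_u/2,\ 2\tilde{d}_u]$, which requires at least $\tilde{d}_u/2$ edge updates incident to $u$. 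I would then give each edge update a constant number $c$ of tokens per endpoint; between consecutive rebuilds at $u$, this accumulates at least $c\cdot \tilde{d}_u/2 = \Omega(d_u)$ tokens at $u$, which is enough to pay for the $\Theta(d_u)$ rebuild when $c$ is chosen large enough. Since a single edge update $(u,w)$ can trigger rebuilds at both endpoints, the total amortized cost per update is still $O(1)$.

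The main obstacle is the careful bookkeeping for the growth case: in principle $d_u$ could be much larger than $\tilde{d}_u$ when the rebuild triggers, which would break the token argument. This is why it is essential that the predicate is tested after every single update (so that $d_u$ is at most $2\tilde{d}_u+1$ at the trigger) and that the reset $\tilde{d}_u \leftarrow d_u$ happens inside the rebuild block. Once these invariants are made explicit, the shrinking and growing cases are symmetric, and the $O(1)$ amortized bound of Theorem~\ref{theorem:update} follows.
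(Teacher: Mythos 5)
Your proposal is correct and follows essentially the same argument as the paper: the unconditional bucket maintenance is $O(1)$ per update, and the occasional $O(d_u)$ rebuild of $B[v]$ for $v\in N[u]$ is charged to the $\Omega(\tilde{d}_u)$ updates incident to $u$ that must occur before $d_u$ leaves the window $[\tfrac{1}{2}\tilde{d}_u,\,2\tilde{d}_u]$. Your token-based bookkeeping (and the explicit handling of the growth case and of both endpoints) is just a more formal rendering of the paper's charging argument, so no gap remains.
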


\begin{proof}
It is known that, the sorted linked list $B[u]$ can be maintained using $O(|B[u]|)$ space, and supports each insertion or deletion of a bucket into or from $B[u]$
in $O(1)$ time~\cite{gan2024optimal}. 
With that, removing or inserting an element in $B[u]$ (Lines 1 and 2) takes $O(1)$. 
Therefore, updating $u$ in $B[v]$ within the neighborhoods of $u$'s neighbors (Lines 4-5) takes $O(d_u)$ time.
However, according to the condition in Line 3, such an update cost only incurs when there are $\Omega(d_u)$ updates related to $u$ occurred, because, otherwise, $d_u$ would not have been changed more than a factor of $2$ from $\tilde{d}_u$.
As a result, this $O(d_u)$ update cost can be charged to these $\Omega(d_u)$ updates, and thus, the amortized update cost becomes $O(1)$.   
\end{proof}

\vspace{2mm}
\noindent
{\bf Performing Queries.}
Observe that 
subset sampling structure within the neighborhood $N[v]$ for each $v \in V$ is built on the reference degrees $\tilde{d}_u$ of $v$'s neighbors $u \in N[v]$. 
As a result, the previous overestimate probability $p^*$ of $p_{v,u}$ used in \algosta, 
which is computed based on the bucket index storing $u$ in $B[v]$, 
is no longer guaranteed to be feasible. 
This is because the current degree $d_u$ can now be smaller than $\tilde{d}_u$, making the actual sampling probability $p_{v,u} > p^*$.
Fortunately, 
one crucial observation here is that, according to Line 3 in Algorithm~\ref{alg:update}, $d_u \geq \frac{1}{2}\tilde{d}_u$ always holds.
Therefore,
$p_{v, u} \leq 2^a \cdot p^*$ holds, and thus, ${p^{*}}' = 2^a \cdot p^*$
can be used as a good overestimate of $p_{v,u}$ for the subset sampling within $N[v]$ for $u$.
Hence, the correctness of the query result of \algodyn~is guaranteed.
Moreover, as discussed earlier, 
the expected query time complexity of \algodyn~is the same as that of \algosta.

\section{Randomized Initialization}
\label{sec:rand}

In \algoagp, $\mathbf{\hat{r}}^{(0)}$ is 
initialized to $\mathbf{x}$ taking $O(n)$ time, which is not guaranteed to be dominated by $\cso \cdot \frac{1}{\varepsilon} \cdot \sum_{i=1}^{L} \big\| Y_i (D^{-a} A D^{-b})^i \mathbf{x} \big\|_1 $. In certain applications, $\mathbf{x}$ admits a {\em compact representation}, making a full read unnecessary. A notable example is the  PageRank~\cite{Page1999ThePC}. We exploit this property and propose a sub-linear time randomized algorithm for initialization, when  
 $\mathbf{x}$ satisfies Condition~\ref{con:x}.

\begin{condition}\label{con:x}
    $\mathbf{x}$ satisfies following conditions and can be represented as $P$ and $S$:
    \begin{itemize}[leftmargin=*]
        \item $\mathbf{x}$ can be partitioned into $O(\frac{1}{\varepsilon})$ index groups denoted as $P$, where $P_i \subseteq \{1,2,\cdots,n\}$ indicates the $i^{th}$ group;
        \item for each $P_i$, the indexes are continuative;
        \item for any two indexes $j, k \in P_i$, we have $\mathbf{x}_j = \mathbf{x}_k$ stored at $S_i$.
\end{itemize}
\end{condition}

As shown in Algorithm~\ref{alg:initil}, our randomized initialization algorithm works as follows. When 
$\frac{\mathbf{x}(v)}{\varepsilon} \geq 1$, the exact value of $\mathbf{x}(v)$ is assigned to 
$\mathbf{\hat{r}}^{(0)}(v)$. Otherwise, subset sampling with uniform probability is performed on 
partition $P_i$, where each index is sampled with probability 
$\frac{S_i}{\varepsilon}$. Once an index $j$ is selected, 
$\varepsilon$ is assigned to $\mathbf{x}_j$.
Trivially, this algorithm assures that $E[\mathbf{\hat{r}}^{(0)}(v)] = \mathbf{x}(v)$ for any $v \in V$, while it introduces additional variance to $Var[\boldsymbol{\hat{\pi}}]$ for the whole algorithm. However, as will be shown next, this does not affect the overall error bound and query complexity bound of \algosta.

\begin{small}
\begin{algorithm}
\caption{\emph{Randomized Initialization}}\label{alg:initil}
\DontPrintSemicolon
\SetKwComment{Comment}{/* }{ */}

\KwIn{$P$, $S$, and a parameter $\varepsilon$}
\KwOut{$\mathbf{\hat{r}}^{(0)}$}
\For{$i = 0$ to $|P|$}{
    \If{$\frac{S_i}{\varepsilon} \geq 1$}{
        \For{$j \in P_i$}{
            $\mathbf{\hat{r}}^{(0)}_j \leftarrow S_i;$ \;
        }
    }
    \Else{
        $p = \frac{S_i}{\varepsilon};$\;
        $K = SubsetSample(p, |P|);$\;
        \For{$j \in K$}{
            $\mathbf{\hat{r}}^{(0)}_j \leftarrow \varepsilon;$ \;
        }
    }
}
\Return $\mathbf{\hat{r}}^{(0)}$

\end{algorithm}
\end{small}

\noindent
\textbf{Variance Analysis.}
When Algorithm~\ref{alg:initil} is applied, the variance comes from only when $\frac{\mathbf{x}(u)}{\varepsilon} < 1$, as it is a deterministic term when $\frac{\mathbf{x}(u)}{\varepsilon} \geq 1$.  Therefore, $\Var[\mathbf{\hat{r}}^{(0)}(u)] \leq E[(\mathbf{\hat{r}}^{(0)})^2] \leq \varepsilon \cdot \mathbf{{r}}^{(0)}(u)$, where $E[\mathbf{\hat{r}}^{(0)}(u)] = \mathbf{x}(u)$. 
Moreover, since $\Sigma_{j=0}^L w_j \leq 1$ and $p^{(j)}(u,v) \leq 1$, we have 
\begin{small}
\begin{align*}
    &\;\;\;\;\;\;\sum_{u \in V}\left(\sum_{j=0}^Lw_j \cdot p^{(j)}(u,v)\right)^2\cdot \Var[\mathbf{\hat{r}}^{(0)}(u)]\\
    &\leq \varepsilon \cdot\sum_{u \in V}\left(\sum_{j=0}^Lw_j \cdot p^{(j)}(u,v)\right)\cdot  \mathbf{r}^{(0)}(u)
    = \varepsilon\cdot \boldsymbol{\pi}(v)
\end{align*}
\end{small}

\vspace{1mm}
\noindent
\textbf{Complexity Analysis.} There are at most $O(\frac{1}{\varepsilon})$ number of values in $\mathbf{x}$ where $\frac{\mathbf{x}(v)}{\varepsilon} \geq 1$, otherwise their sum already exceeds 1 -- recall that $\|\mathbf{x}\|_1 = 1$.  When running the subset sampling for each partition, the expected time complexity is $O(\mu_i + 1)$, where $\mu_i$ is the expected output size, and we have $\mu_i = |P_i| \cdot \frac{S_i}{\varepsilon}$. Hence, the total expected time complexity for all buckets is bounded by $O(\sum_{i}\mu_i + \frac{1}{\varepsilon}) = O(\frac{1}{\varepsilon}\cdot (\sum_i|P_i|\cdot S_i + 1)) = O(\frac{1}{\varepsilon})$. 
Therefore, the complexity of Algorithm~\ref{alg:initil} is bounded by $O(\frac{1}{\varepsilon})$. 

\section{Experiments}\label{sec:exp}

\begin{table}
\centering
\caption{Dataset Statistics}
\vspace{-4mm}
\scalebox{0.8}{%
\begin{tabular}{l|r|r|r|c}
\toprule
\multicolumn{1}{c|}{\textbf{Datasets}}         & $\mathbf{n\ (\times10^6)}$     & $\mathbf{m\ (\times10^6)}$   & \multicolumn{1}{c|}{$\boldsymbol{\Bar{d}}$} & \textbf{Domain} \\\hline
\texttt{soc-Slashdot0811} (\textbf{Sl}) & 0.08 & 0.47 &  12.13 & Social network\\\hline
\texttt{web-NotreDame}  (\textbf{ND})  & 0.33  & 1.09 & 6.69 & Website hyperlink\\\hline
\texttt{web-Google}    (\textbf{Go})   & 0.88  & 4.32 & 9.86 & Website hyperlink\\\hline
\texttt{wiki-Topcats}   (\textbf{To})  & 1.79 & 25.44 & 28.38 & Website hyperlink \\\hline
\texttt{soc-Pokec}    (\textbf{Po})    & 1.63 & 22.30 & 27.36 & Social network \\\hline
\texttt{as-Skitter}   (\textbf{Sk})    & 1.70 & 11.10 & 13.06 & Traceroute graph\\\hline
\texttt{wiki-Talk}   (\textbf{Ta})     & 2.39 & 4.66 & 3.90& Interaction graph\\\hline
\texttt{soc-Orkut}    (\textbf{Or})        & 3.07 & 117.19  & 76.22 & Social network\\\hline
\texttt{soc-LiveJournal1} (\textbf{LJ}) & 4.85 & 42.85 & 17.69 & Social Network\\\bottomrule
\end{tabular}}
\label{tab:dataset}
\vspace{-3mm}
\end{table}

\noindent\textbf{Datasets.}
We use nine real-world datasets~\cite{snapnets}, 
\texttt{soc-Slashdot0811} (\textbf{Sl}), 
\texttt{web-NotreDame}  (\textbf{ND}), 
\texttt{web-Google} (\textbf{Go}),
\texttt{wiki-Topcats}   (\textbf{To}),  
\texttt{soc-Pokec}    (\textbf{Po}),   
\texttt{as-Skitter}   (\textbf{Sk}),    
\texttt{wiki-Talk}   (\textbf{Ta}),     
\texttt{soc-Orkut}    (\textbf{Or}), 
and       
\texttt{soc-LiveJournal1} (\textbf{LJ}) 
 with up to 4.8 million vertices and 117 million edges, whose statistics can be found in Table~\ref{tab:dataset}. We focus on query, update, and initialization efficiency, which are detailed next. Our source code can be found online~\cite{sourceCode}.


\begin{figure}
\centering
\begin{subfigure}{0.8\linewidth}
    \includegraphics[width=1\textwidth]{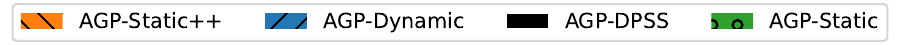}
\end{subfigure}\\
\begin{subfigure}{0.8\linewidth}
    \includegraphics[width=1\textwidth]{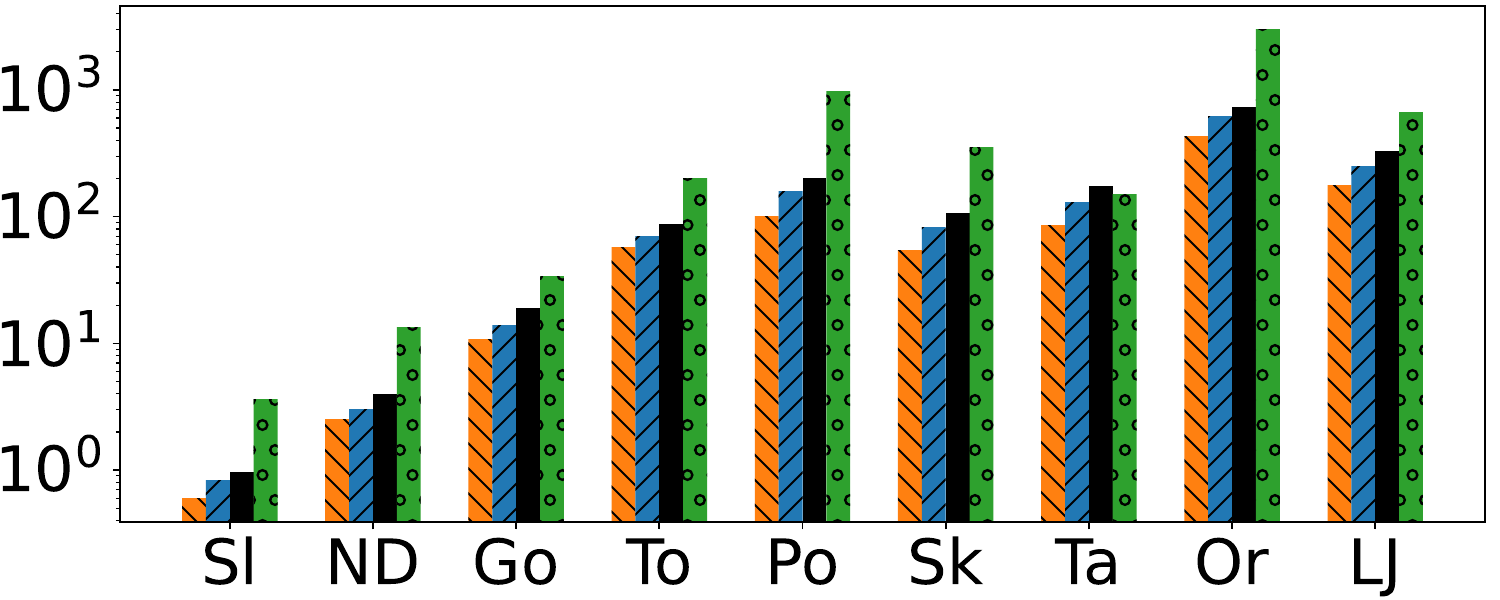}
\end{subfigure}
\vspace{-4mm}
\caption{Average query processing time (second)}
    \label{fig:query}
    \vspace{-4mm}
\end{figure}

\subsection{Query Efficiency}
\noindent\textbf{Baseline.}  We evaluate our algorithms, \algosta\ and \algodyn, against \algoagp~\cite{wang2021approximate} and \algodpss\ (Section~\ref{sec:query_ana}). 

\begin{figure*}
    \centering
    \begin{subfigure}{0.5\linewidth}
        \includegraphics[width=\textwidth]{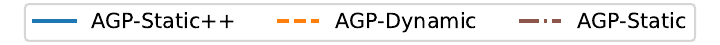}
    \end{subfigure}\\
        \begin{subfigure}{0.17\linewidth}
            \includegraphics[width=\textwidth]{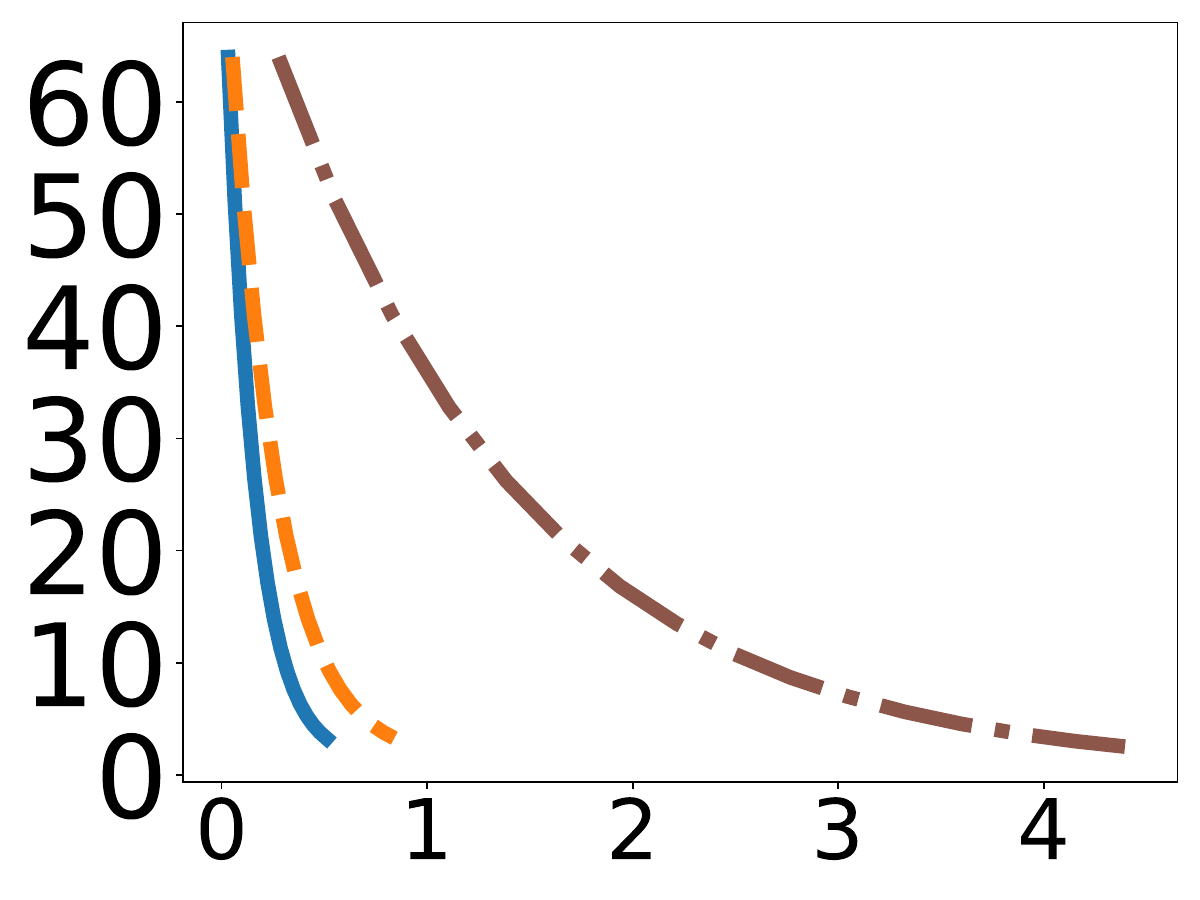}
            \vspace{-5mm}
            \caption{\textbf{Sl}}
        \end{subfigure}
            \begin{subfigure}{0.17\linewidth}
            \includegraphics[width=\textwidth]{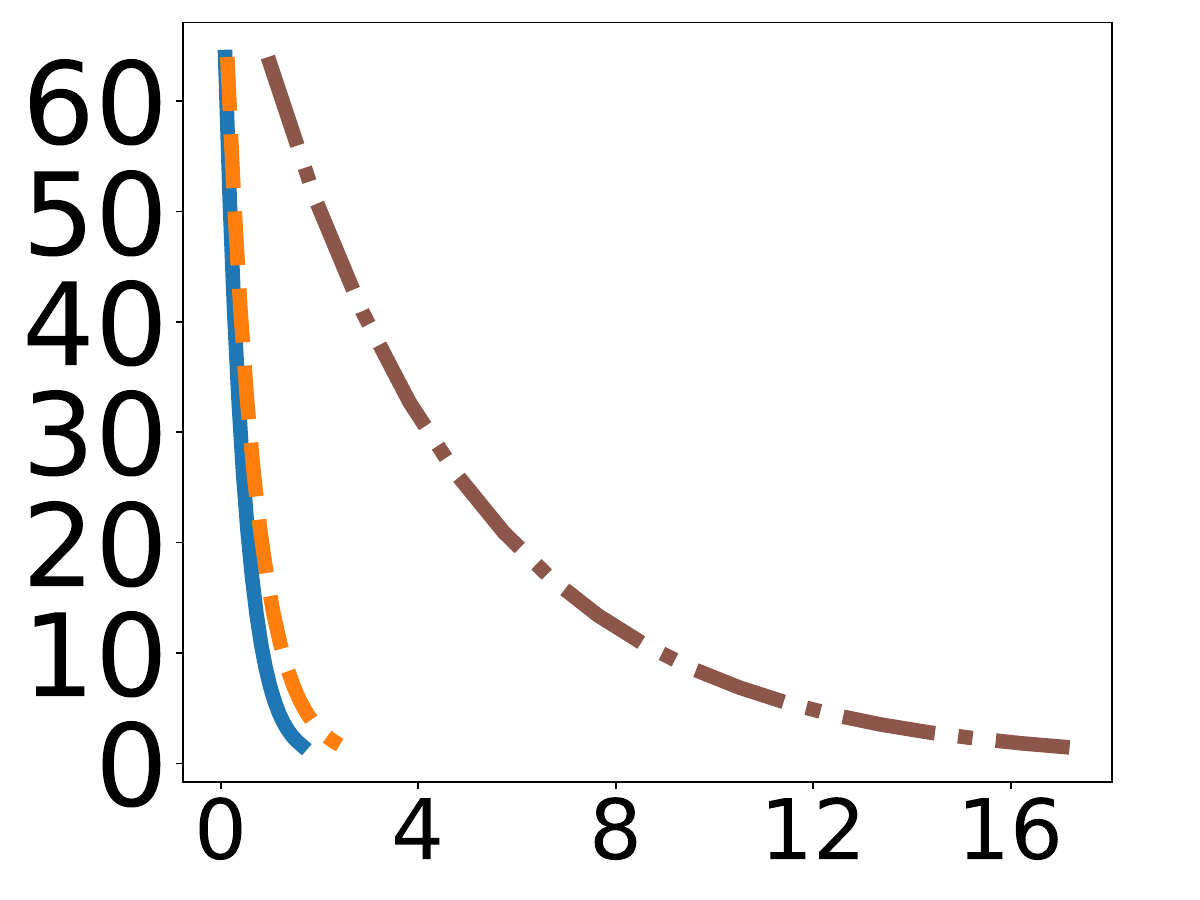}
            \vspace{-5mm}
                \caption{\textbf{ND}}
        \end{subfigure}
            \begin{subfigure}{0.17\linewidth}
                \includegraphics[width=\textwidth]{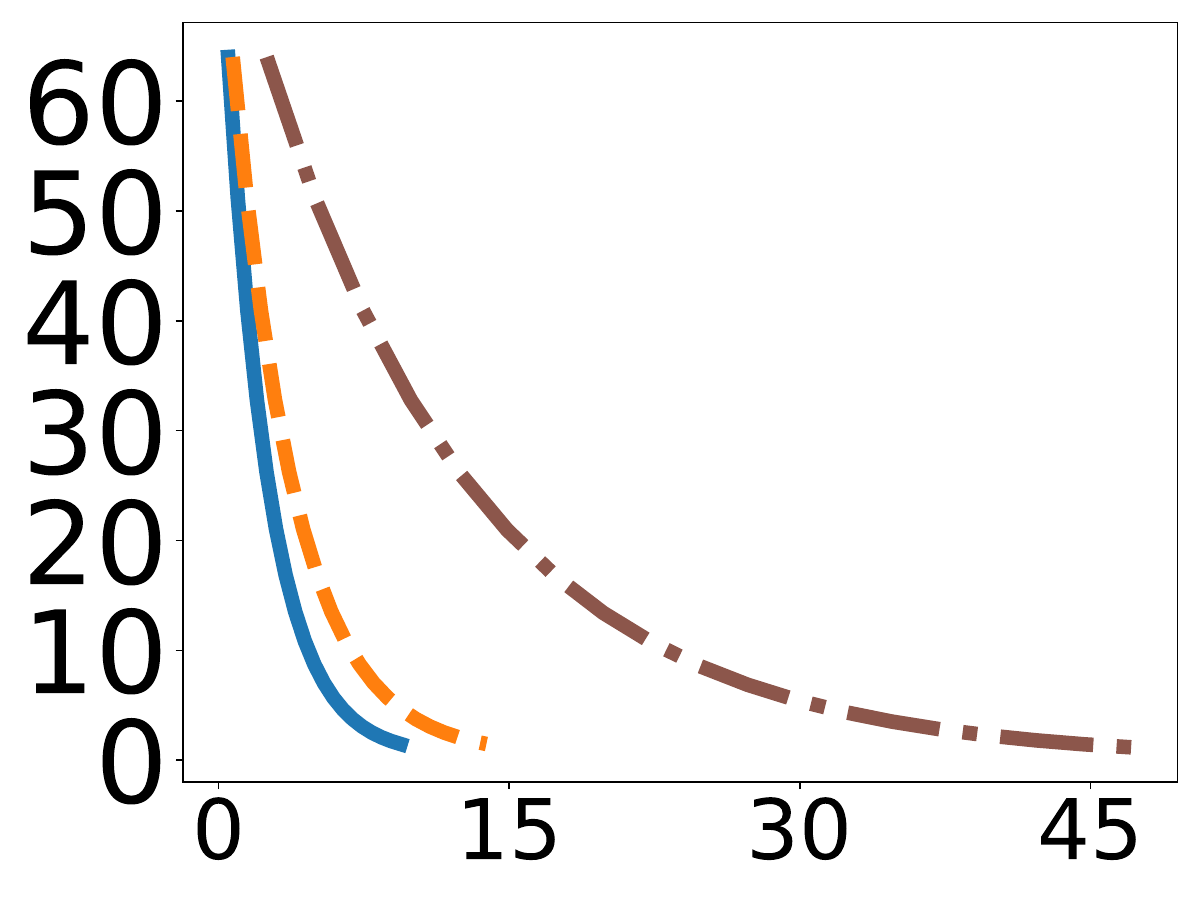}
                \vspace{-5mm}
                \caption{\textbf{Go}}
        \end{subfigure}
            \begin{subfigure}{0.17\linewidth}
                \includegraphics[width=\textwidth]{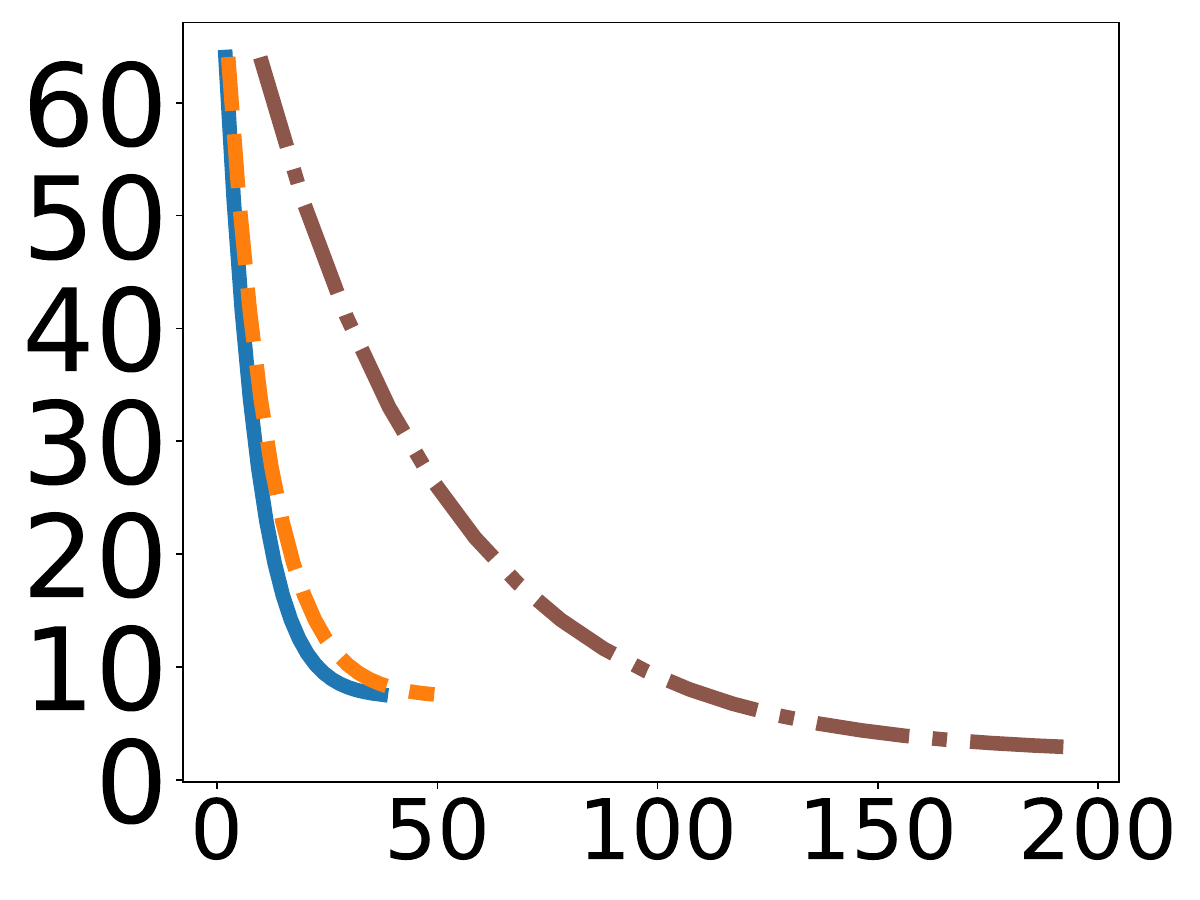}
                \vspace{-5mm}
                \caption{\textbf{To}}
        \end{subfigure}
            \begin{subfigure}{0.17\linewidth}
                \includegraphics[width=\textwidth]{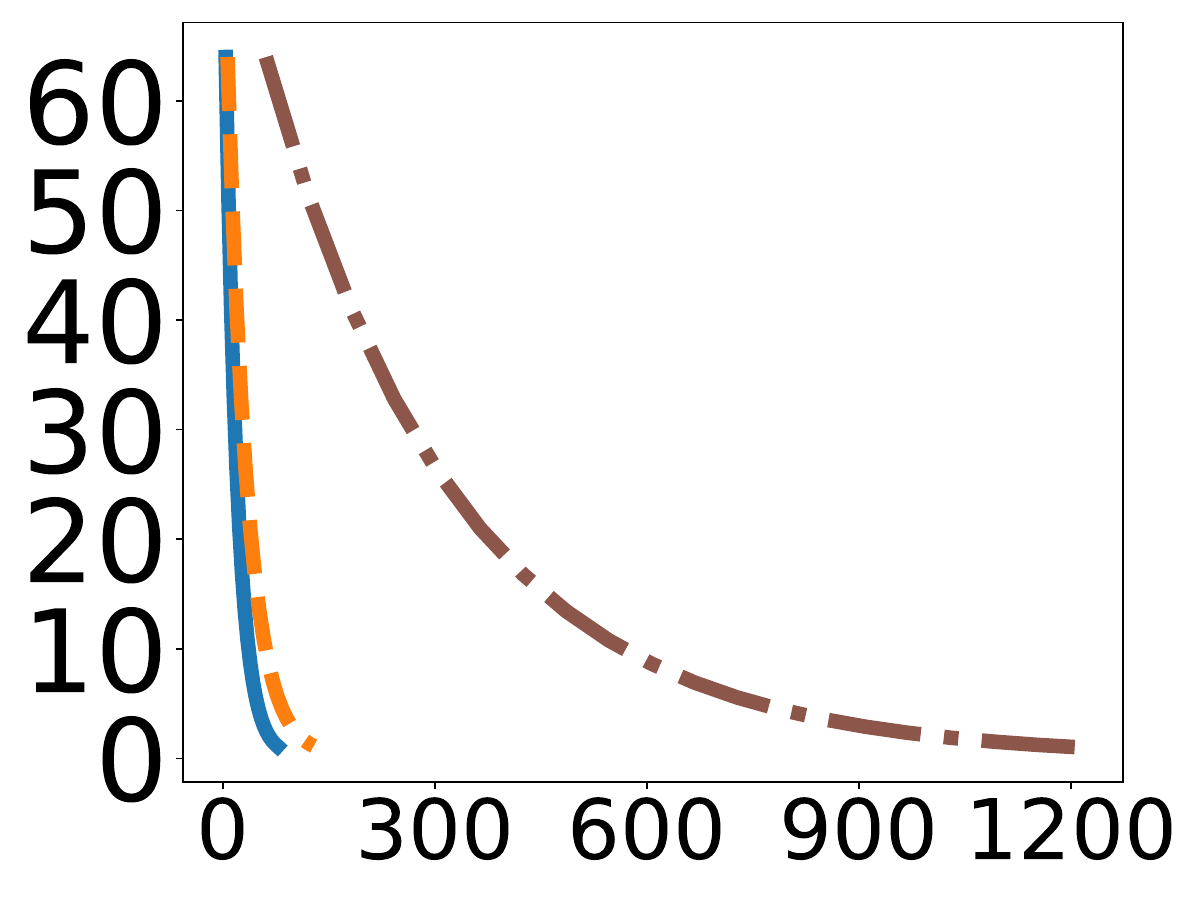}
                \vspace{-5mm}
                \caption{\textbf{Po}}
        \end{subfigure}
        \begin{subfigure}{0.17\linewidth}
            \includegraphics[width=\textwidth]{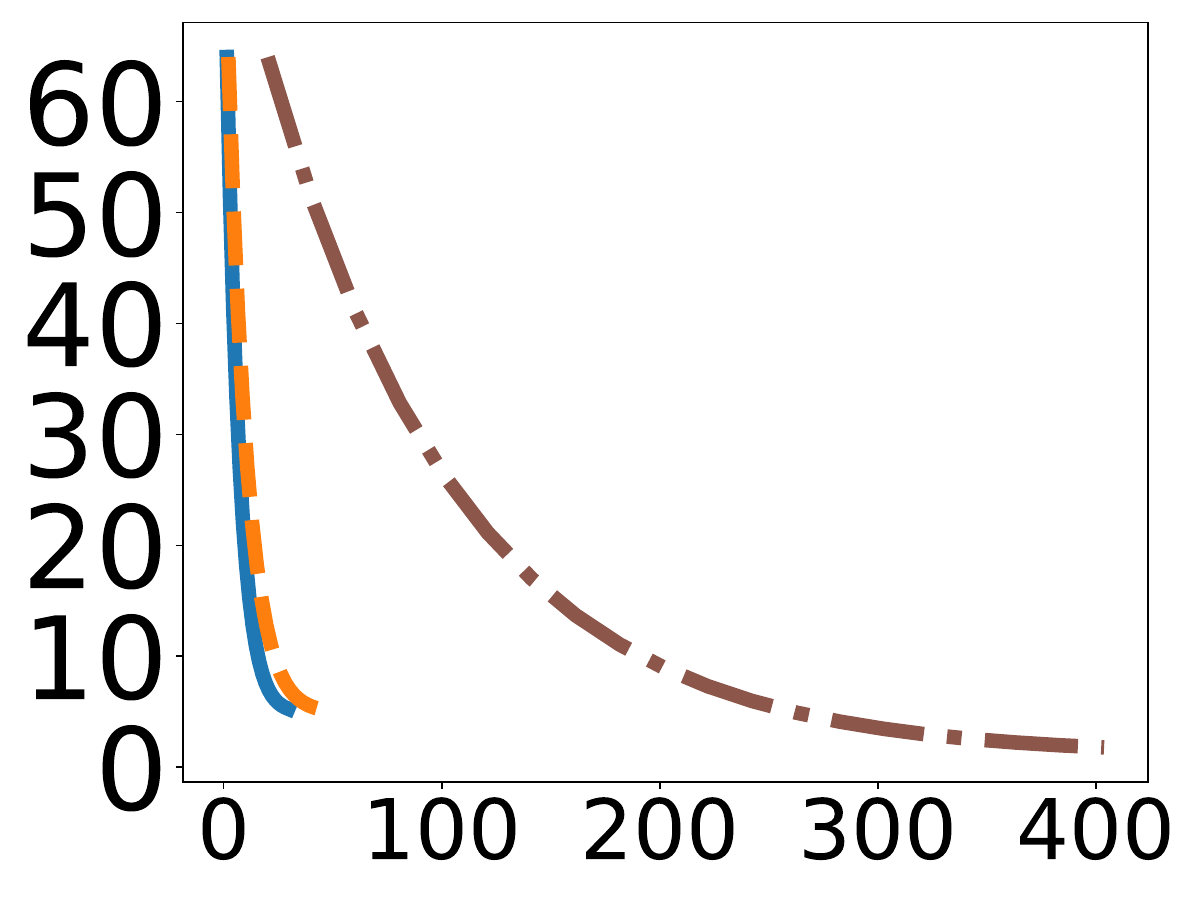}
            \vspace{-5mm}
            \caption{\textbf{Sk}}
        \end{subfigure}
        \begin{subfigure}{0.17\linewidth}
            \includegraphics[width=\textwidth]{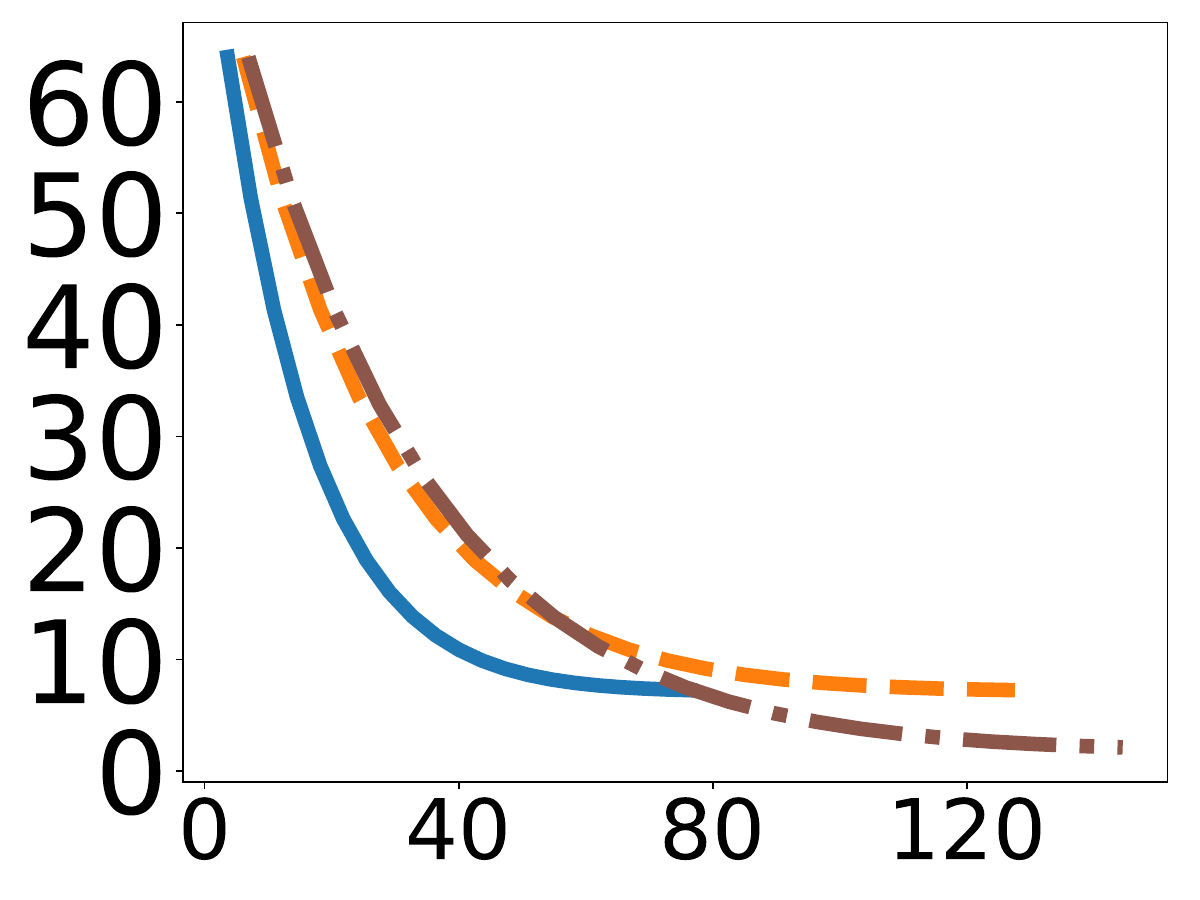}
            \vspace{-5mm}
            \caption{\textbf{Ta}}
        \end{subfigure}
        \begin{subfigure}{0.17\linewidth}
            \includegraphics[width=\textwidth]{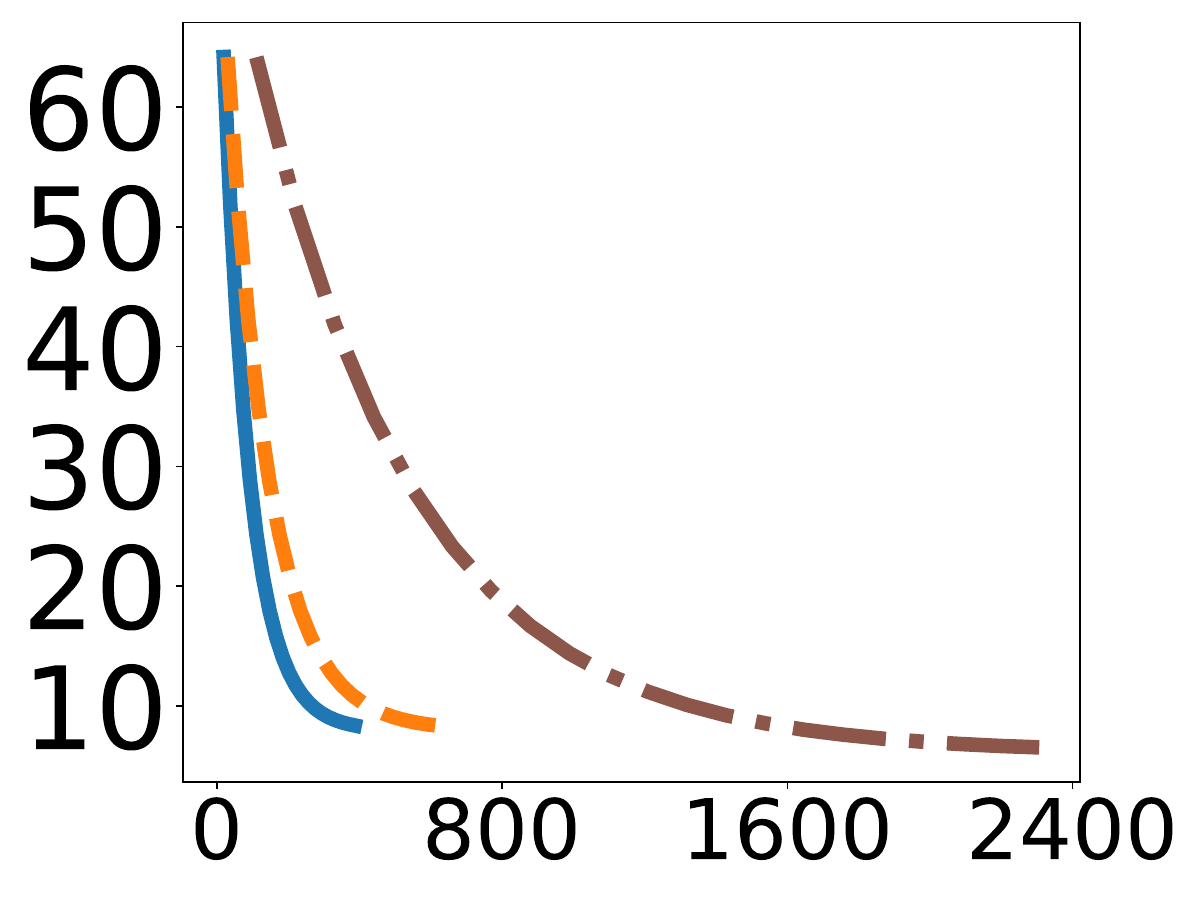}
            \vspace{-5mm}
            \caption{\textbf{Or}}
        \end{subfigure}
        \begin{subfigure}{0.17\linewidth}
            \includegraphics[width=\textwidth]{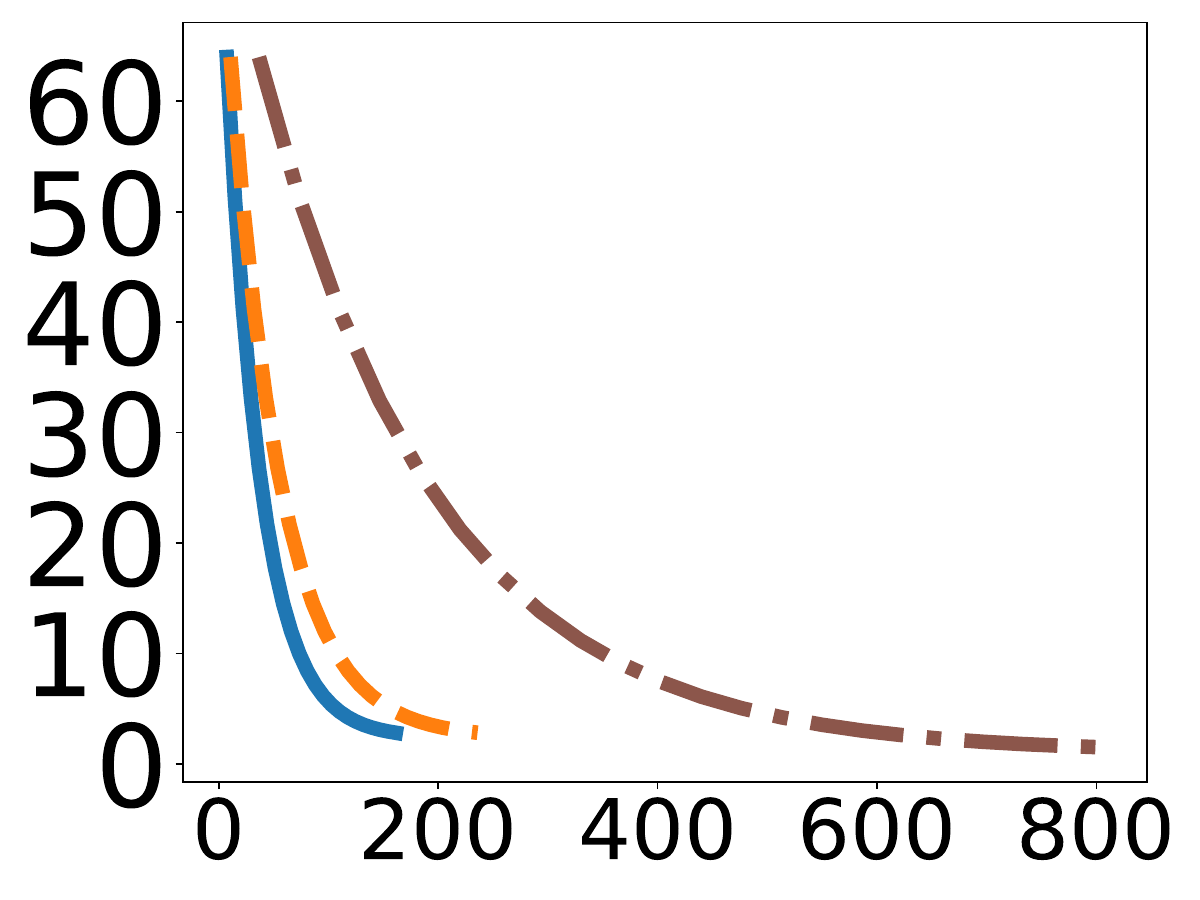}
            \vspace{-5mm}
            \caption{\textbf{LJ}}

        \end{subfigure}
    \vspace{-1mm}
    \caption{ Relative error (\%) vs. query time (second)}
    \label{fig:error_qtime}
    \vspace{-2mm}
\end{figure*}

\begin{figure*}
    \centering
    \begin{subfigure}{0.6\linewidth}
        \includegraphics[width=\textwidth]{figures/bar_legend.pdf}
    \end{subfigure}\\
        \begin{subfigure}{0.24\linewidth}
            \includegraphics[width=\textwidth]{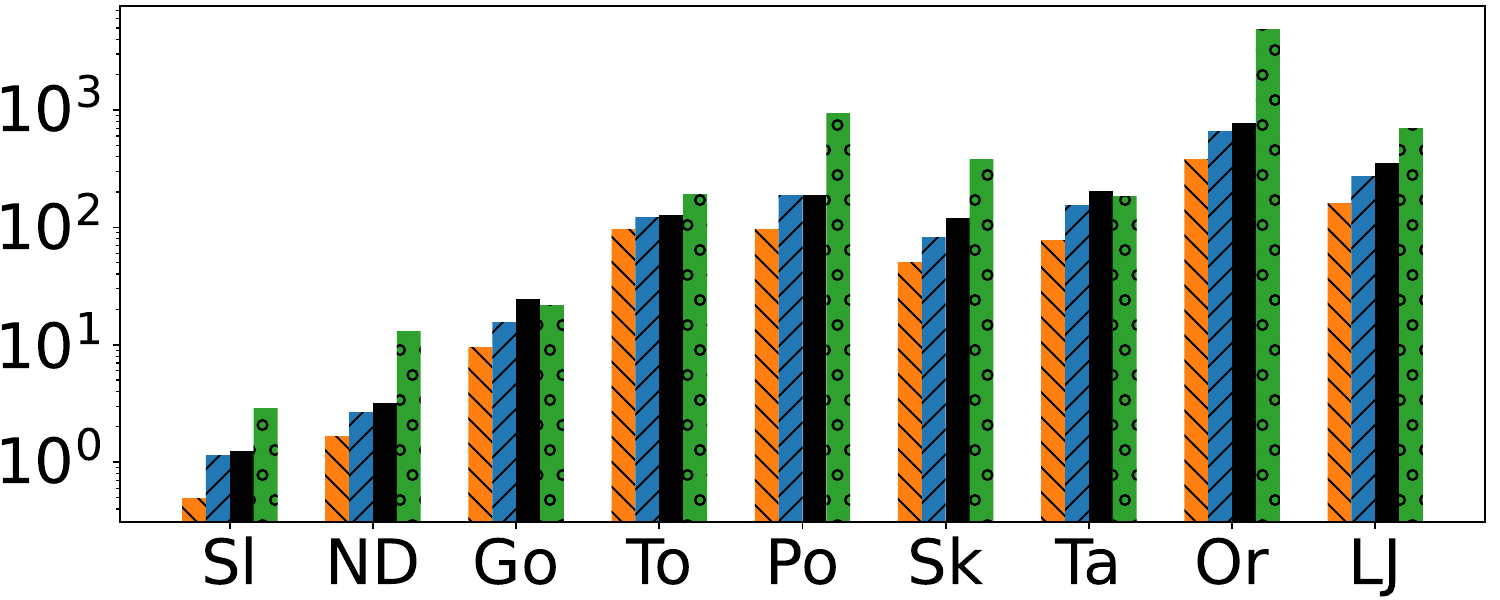}
            \vspace{-5mm}
            \caption{L-hop transition probability}
        \end{subfigure}
            \begin{subfigure}{0.24\linewidth}
                \includegraphics[width=\textwidth]{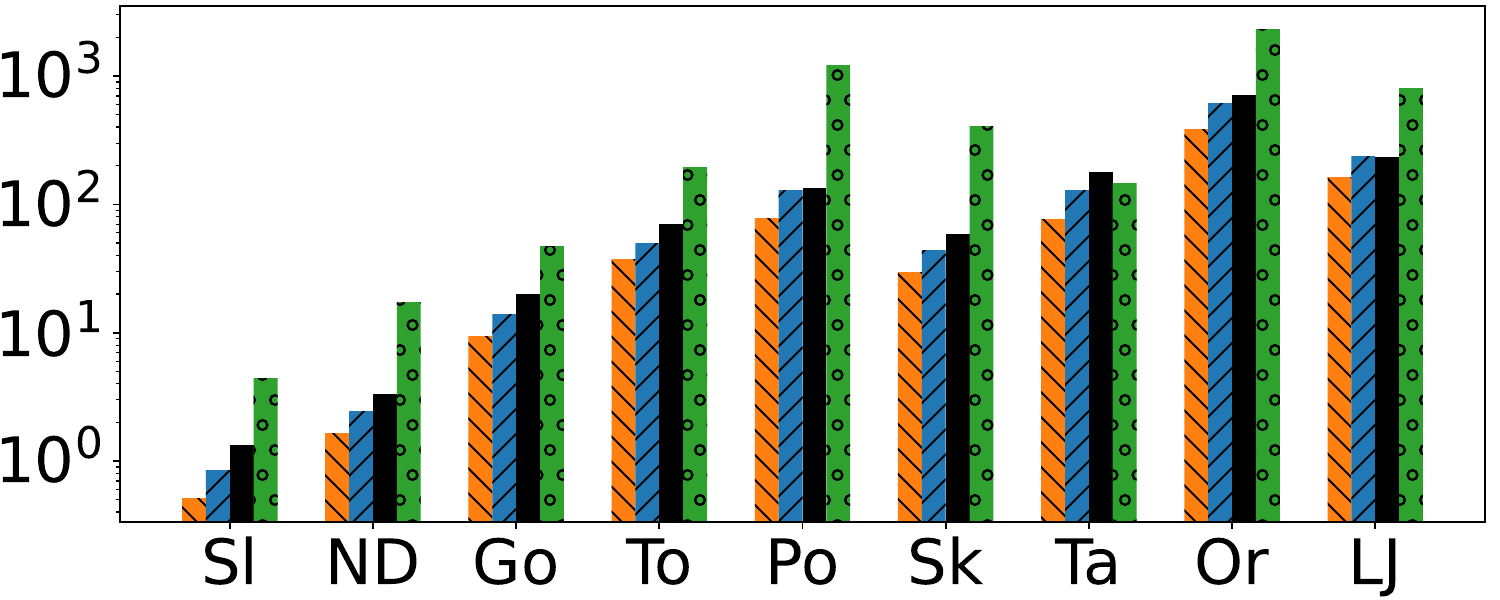}
                \vspace{-5mm}
                \caption{PageRank}
        \end{subfigure}
            \begin{subfigure}{0.24\linewidth}
                \includegraphics[width=\textwidth]{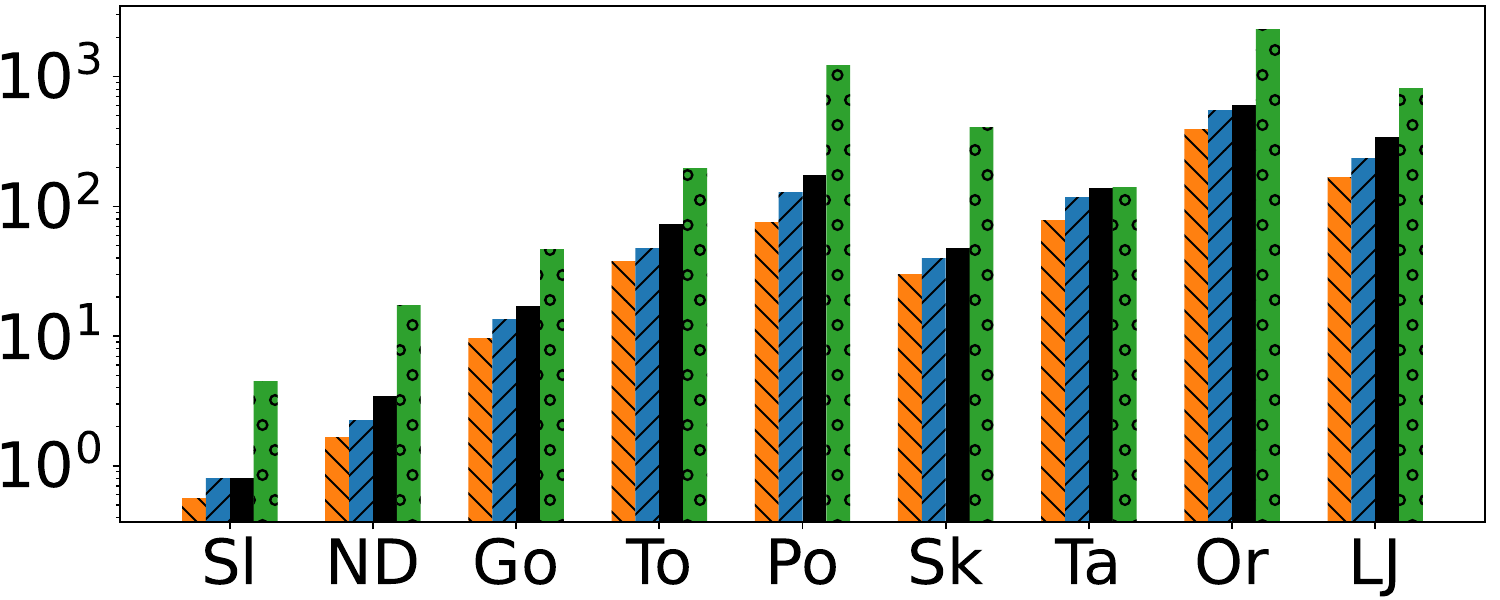}
                \vspace{-5mm}
                \caption{Personalized PageRank}
        \end{subfigure}
            \begin{subfigure}{0.24\linewidth}
                \includegraphics[width=\textwidth]{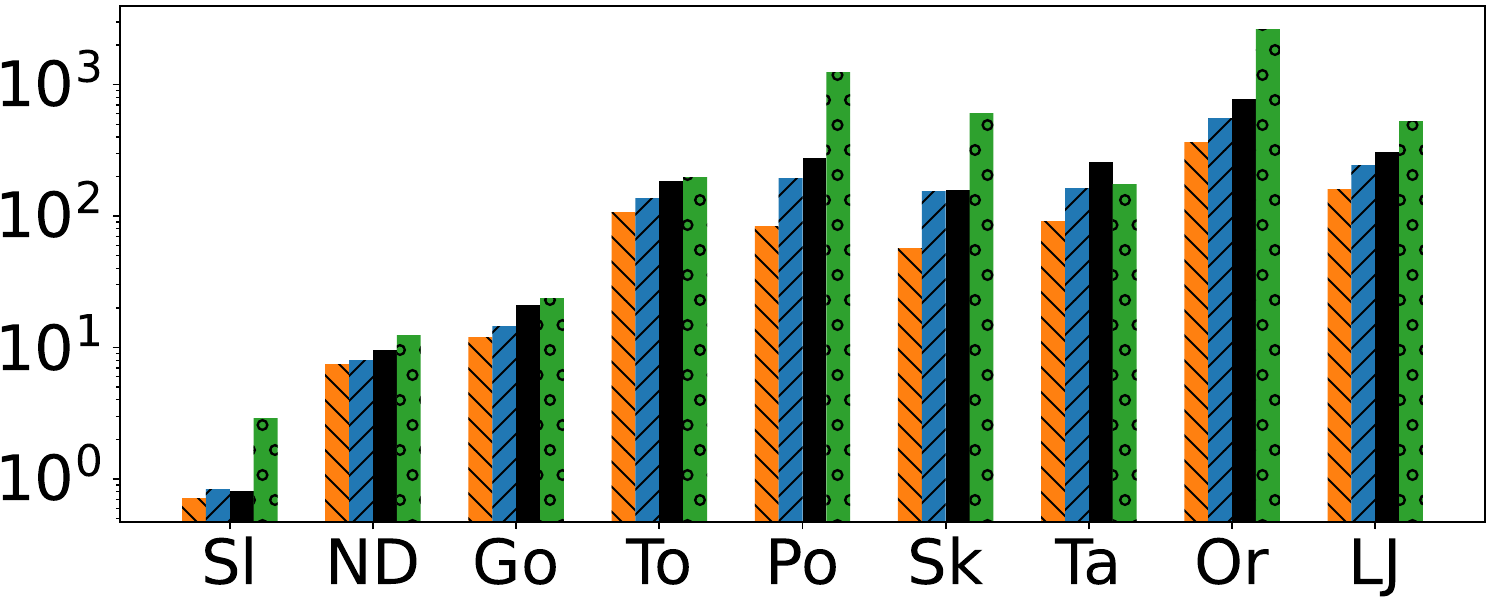}
                \vspace{-5mm}
                \caption{single-target PPR}
        \end{subfigure}
            \begin{subfigure}{0.24\linewidth}
                \includegraphics[width=\textwidth]{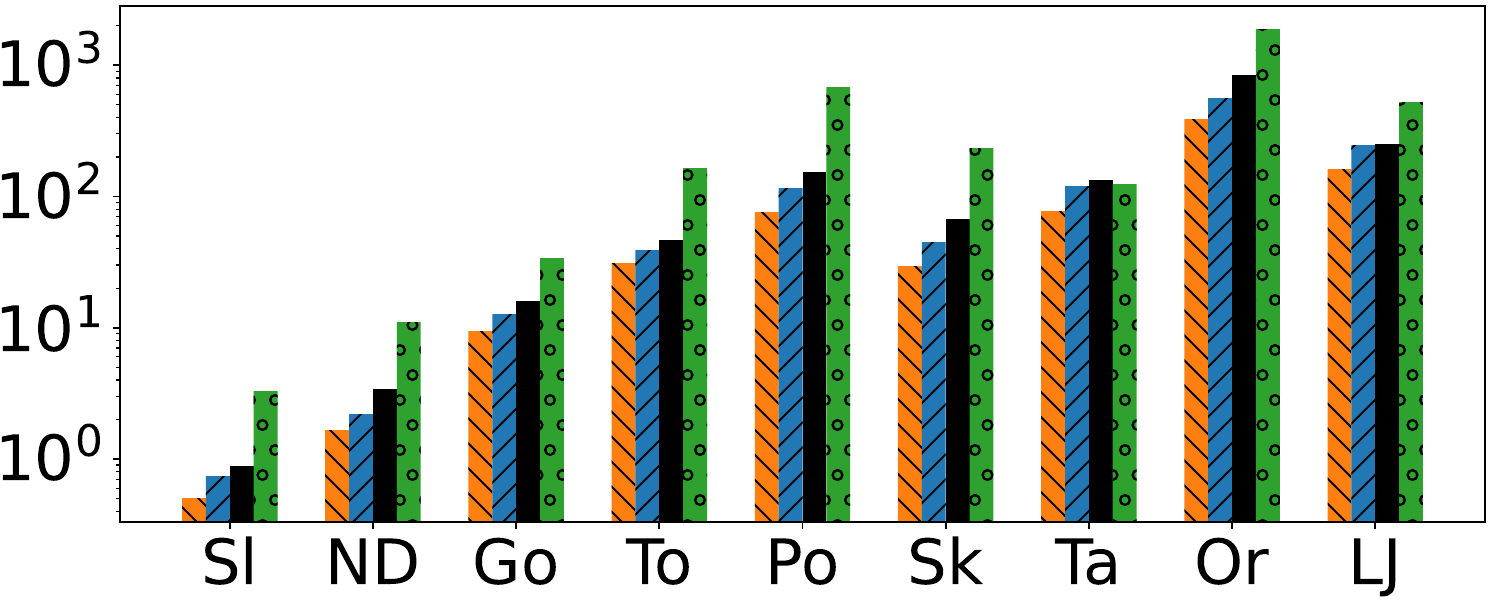}
                \vspace{-5mm}
                \caption{heat kernel PageRank}
        \end{subfigure}
        \begin{subfigure}{0.24\linewidth}
            \includegraphics[width=\textwidth]{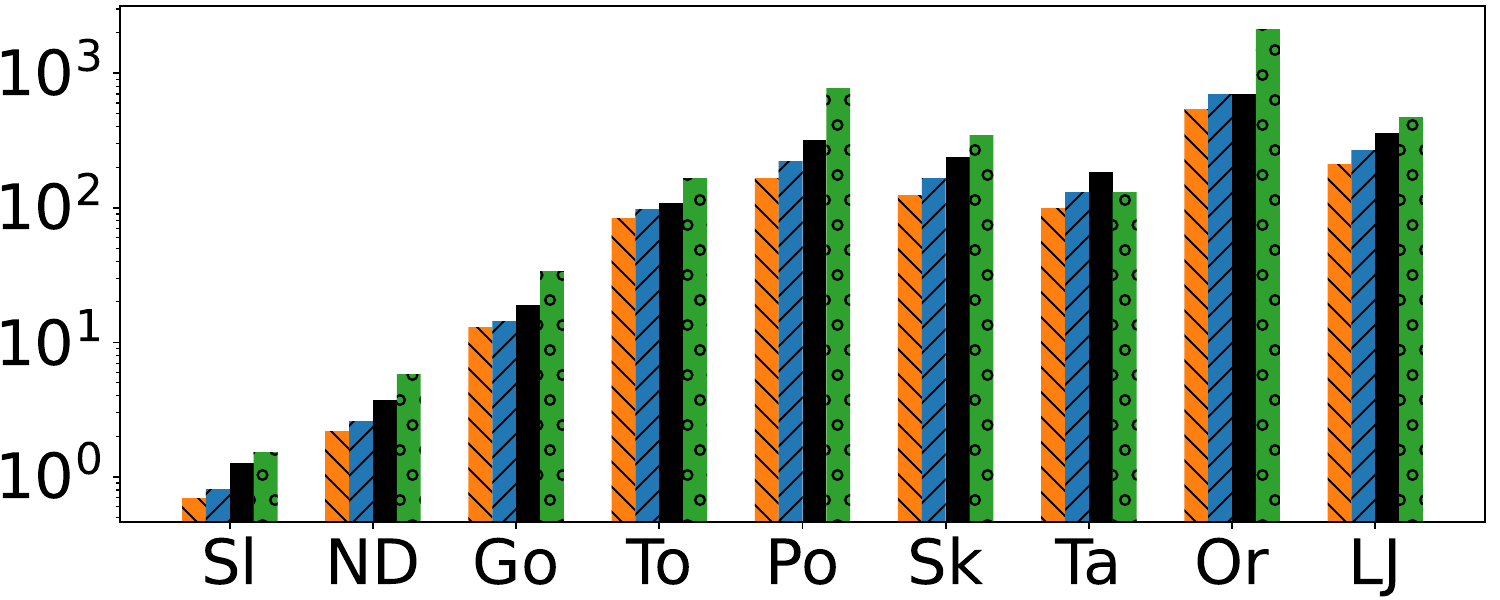}
            \vspace{-5mm}
            \caption{SGCN}
        \end{subfigure}
        \begin{subfigure}{0.24\linewidth}
            \includegraphics[width=\textwidth]{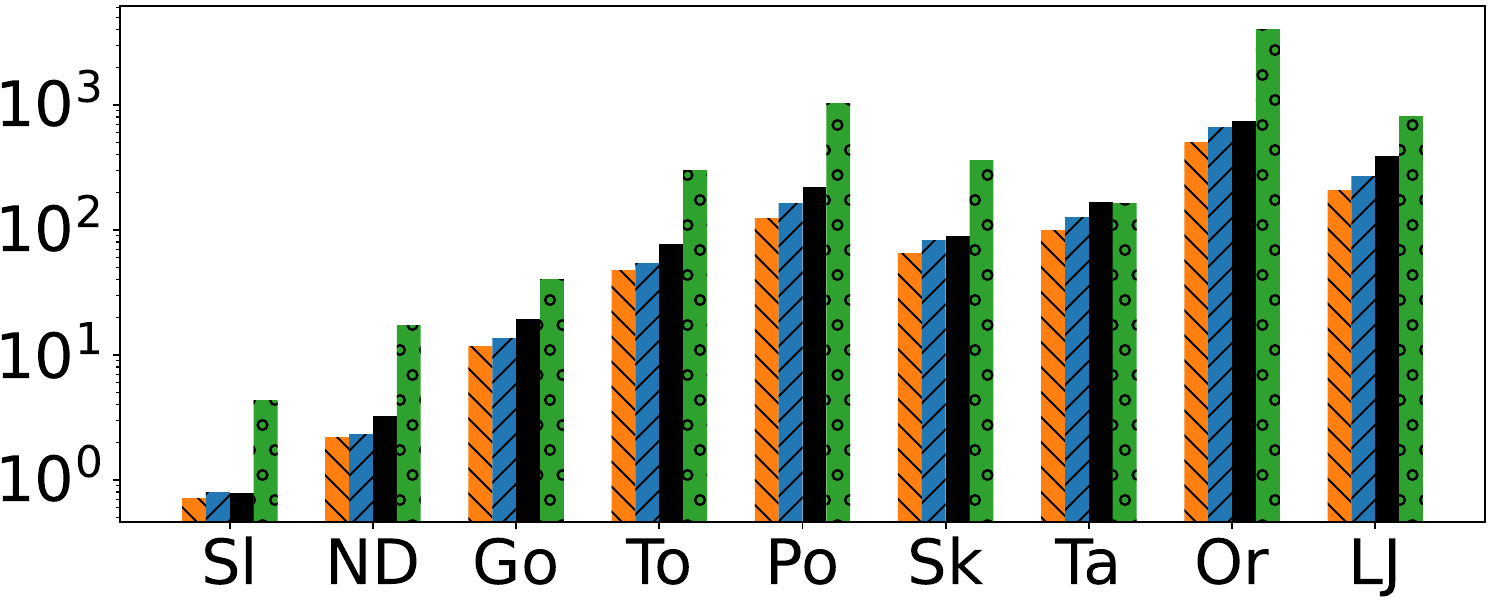}
            \vspace{-5mm}
            \caption{APPNP}
        \end{subfigure}
        \begin{subfigure}{0.24\linewidth}
            \includegraphics[width=\textwidth]{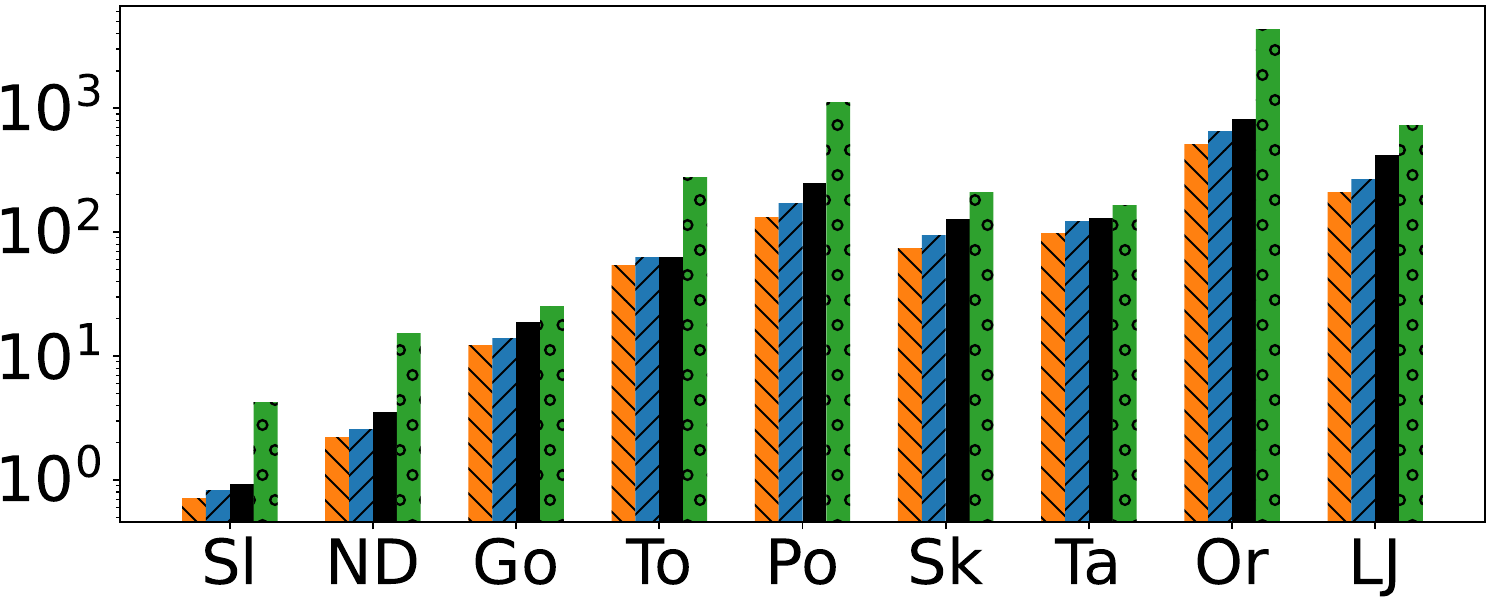}
            \vspace{-5mm}
            \caption{GDC}
        \end{subfigure}
    \vspace{-1mm}
    \caption{Average query time (second) on different applications}
    \label{fig:query_all}
    \vspace{-2mm}
\end{figure*}

\begin{figure}[h]
\centering
\begin{subfigure}{0.7\linewidth}
    \hspace{+2mm}\includegraphics[width=1\textwidth]{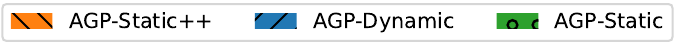}
\end{subfigure}\\
\begin{subfigure}{0.8\linewidth}
    \includegraphics[width=1\textwidth]{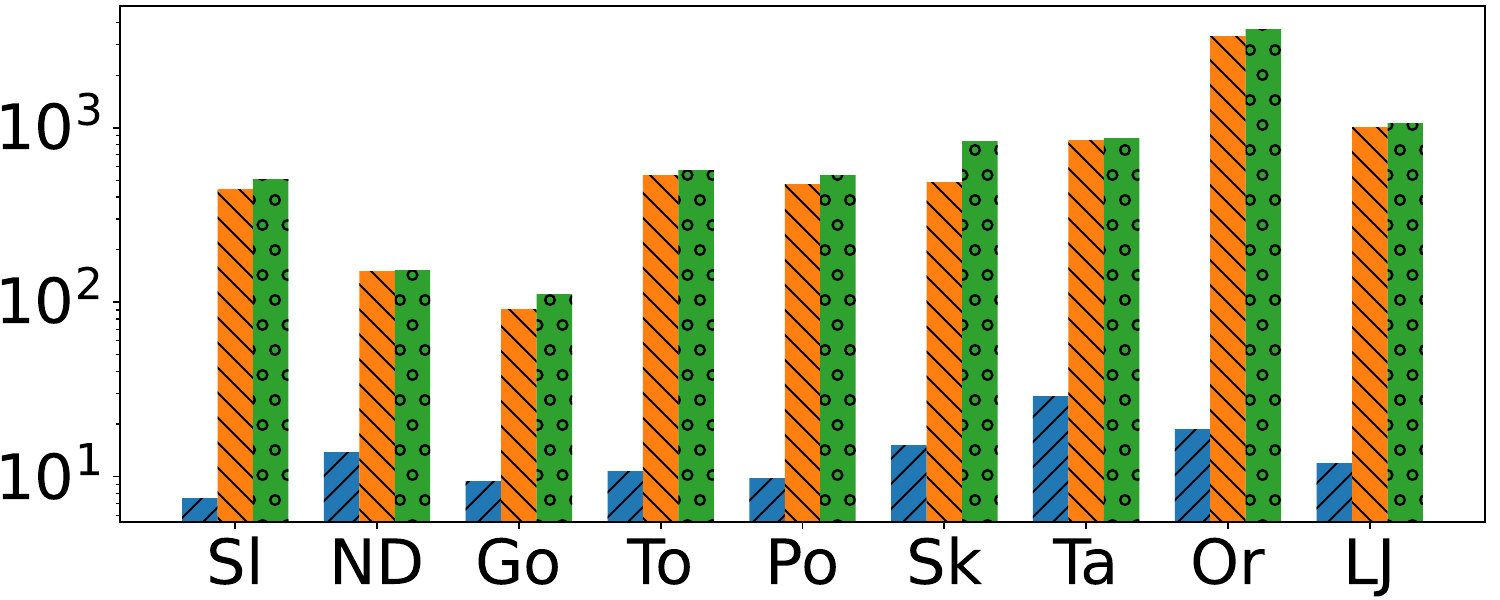}  
\end{subfigure}
\vspace{-4mm}
\caption{Average update time ($\times10^{-6}$ second)}
    \label{fig:update}
    \vspace{-4mm}
\end{figure}

\begin{table}[t]
\centering 
\caption{Comparison of Query Time, Average Update Time, and Average Number of Updates per Query Threshold}
\scalebox{0.65}{%
\begin{tabular}{l|c|c|c|c|c} 
\toprule 
 & \multicolumn{2}{c|}{\textbf{Query time}} & \multicolumn{2}{c|}{\textbf{Average update time ($\times 10^{-3}$)}}  & \multirow{2}{*}{\textbf{updates/query}}\\ \cline{2-5} 
& \emph{AGP-Static++} & \emph{AGP-Dynamic} & \emph{AGP-Static++} & \emph{AGP-Dynamic} &  \\ \hline
\texttt{Sl} ($\bar{d} = 12.13$) & 0.596 & 0.827 & $0.51$ & $0.01$ & 461 \\ \hline
\texttt{ND} ($\bar{d} = 6.69$) & 2.493 & 3.009 & $0.35$ & $0.03$ & 1617 \\ \hline
\texttt{Go} ($\bar{d} = 9.86$) & 10.633 & 13.854 & $1.02$ & $0.09$ & 3459 \\ \hline
\texttt{To} ($\bar{d} = 28.38$) & 56.823 & 70.018 & $30.93$ & $0.58$ & 435 \\ \hline
\texttt{Po} ($\bar{d} = 27.36$) & 100.400 & 157.439 & $25.40$ & $0.47$ & 2287 \\ \hline
\texttt{Sk} ($\bar{d} = 13.06$) & 53.745 & 82.031 & $19.79$ & $0.36$ & 1455 \\ \hline
\texttt{Ta} ($\bar{d} = 3.9$) & 85.588 & 130.488 & $8.58$ & $0.28$ & 5412 \\ \hline
\texttt{Or} ($\bar{d} = 76.22$) & 427.653 & 614.773 & $913.36$ & $4.67$ & 206 \\ \hline
\texttt{LJ} ($\bar{d} = 17.69$) & 176.794 & 248.747 & $96.62$ & $1.08$ & 753 \\ \bottomrule 
\end{tabular} 
}
\label{tab:qvu}
\end{table}

\begin{figure*}[th]
    \centering
    \begin{subfigure}{0.5\linewidth}
        \includegraphics[width=\textwidth]{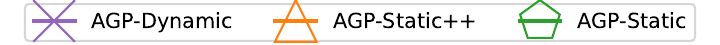}
    \end{subfigure}\\ \vspace{1mm}
        \begin{subfigure}{0.17\linewidth}
           \includegraphics[width=\textwidth]{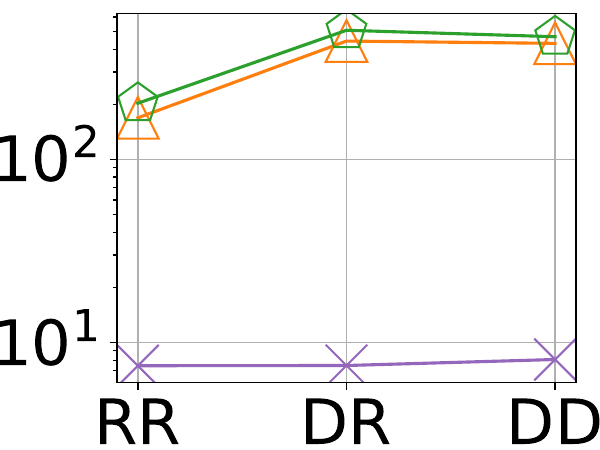}
            \caption{\textbf{Sl}}
        \end{subfigure}
            \begin{subfigure}{0.17\linewidth}
                \includegraphics[width=\textwidth]{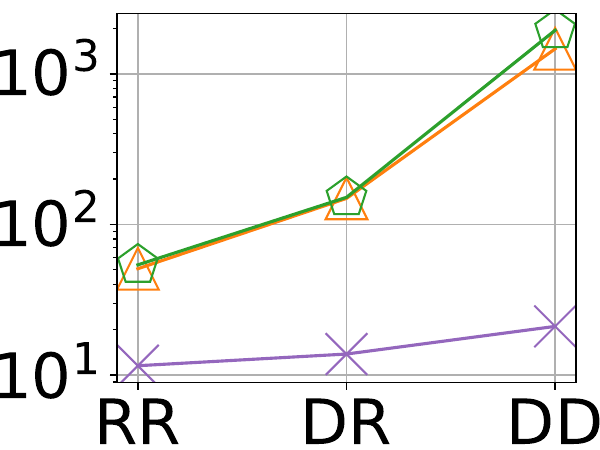}
                \caption{\textbf{ND}}
        \end{subfigure}
            \begin{subfigure}{0.17\linewidth}
                \hspace{-1mm}\includegraphics[width=\textwidth]{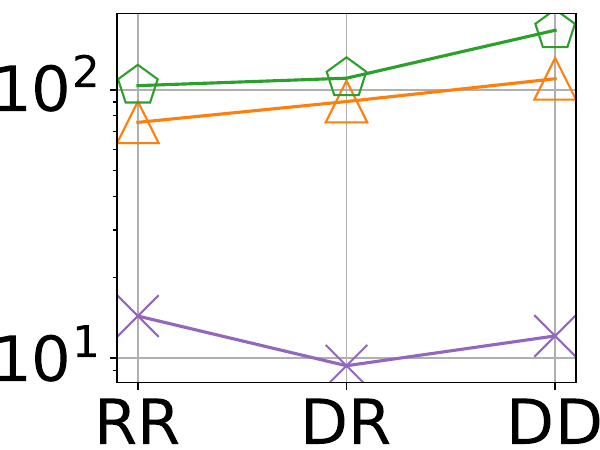}
                \caption{\textbf{Go}}
        \end{subfigure}
            \begin{subfigure}{0.17\linewidth}
                \hspace{-1mm}\includegraphics[width=\textwidth]{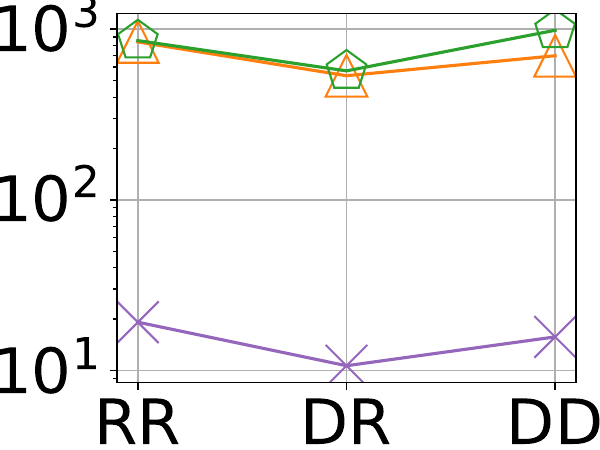}
                \caption{\textbf{To}}
        \end{subfigure}
            \begin{subfigure}{0.17\linewidth}
                \hspace{-2mm}\includegraphics[width=\textwidth]{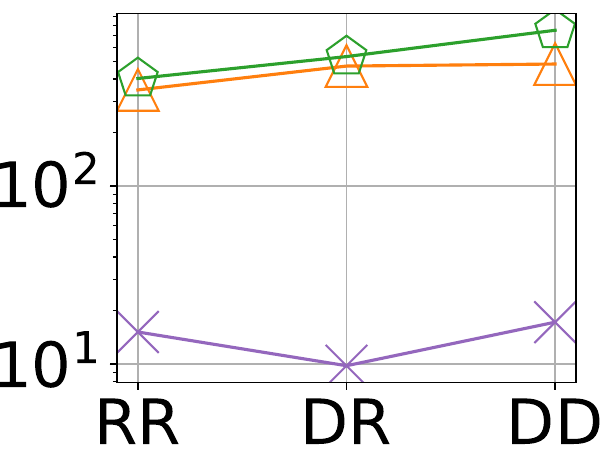}
                \caption{\textbf{Po}}
        \end{subfigure}
        \begin{subfigure}{0.17\linewidth}
            \hspace{-3mm}\includegraphics[width=\textwidth]{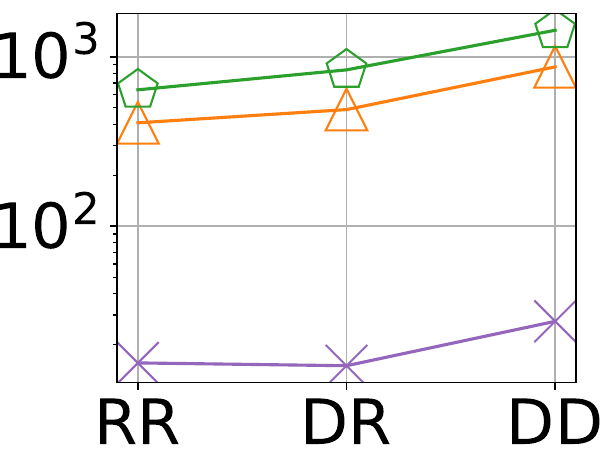}
            \caption{\textbf{Sk}}
        \end{subfigure}
        \begin{subfigure}{0.17\linewidth}
            \hspace{-3mm}\includegraphics[width=\textwidth]{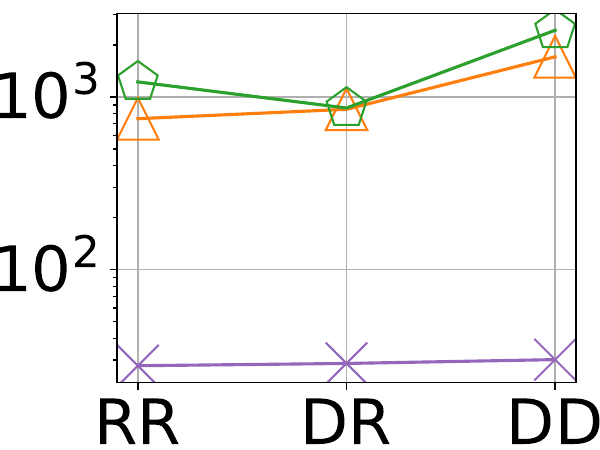}
            \caption{\textbf{Ta}}
        \end{subfigure}
        \begin{subfigure}{0.17\linewidth}
            \hspace{-3mm}\includegraphics[width=\textwidth]{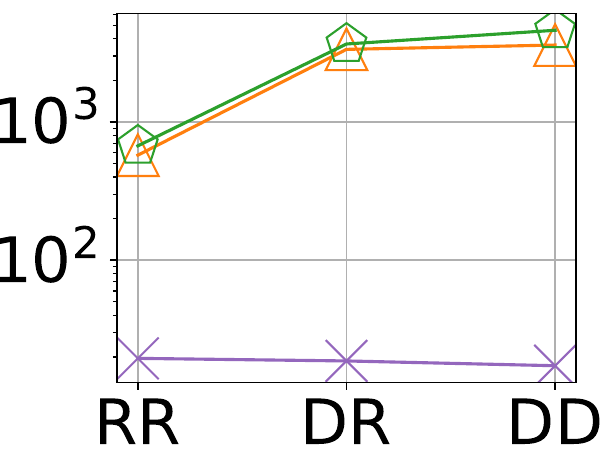}
            \caption{\textbf{Or}}
        \end{subfigure}
        \begin{subfigure}{0.17\linewidth}
            \hspace{-3mm}\includegraphics[width=\textwidth]{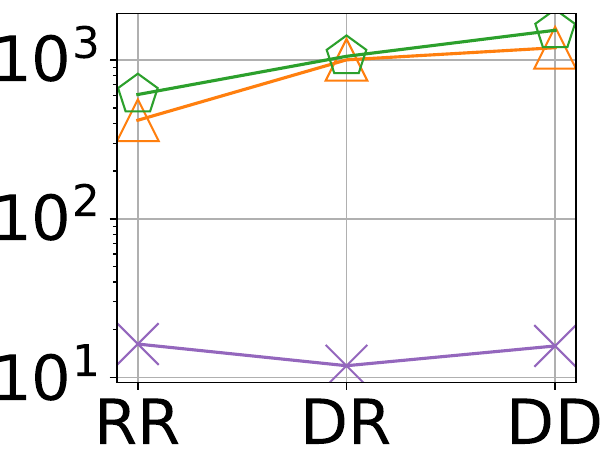}
            \caption{\textbf{LJ}}

        \end{subfigure}
    \vspace{-3mm}
    \caption{Average update time ($\times10^{-6}$ sec) vs. update pattern}
    \label{fig:ud_varying_full}
\end{figure*}

\begin{figure*}[th]
        \begin{subfigure}{0.17\linewidth}
            \includegraphics[width=\textwidth]{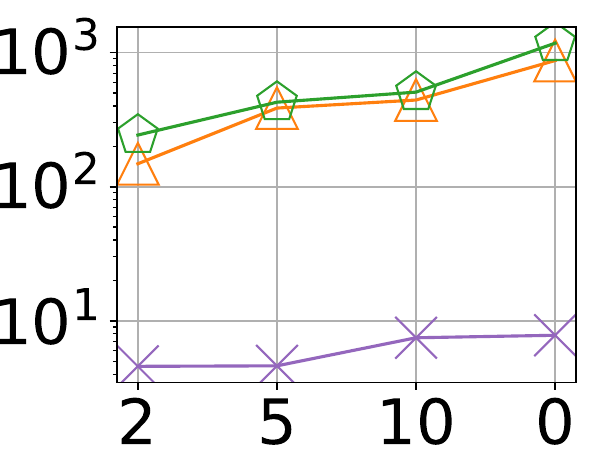}
            \caption{\textbf{Sl}}
        \end{subfigure}
            \begin{subfigure}{0.17\linewidth}
                \includegraphics[width=\textwidth]{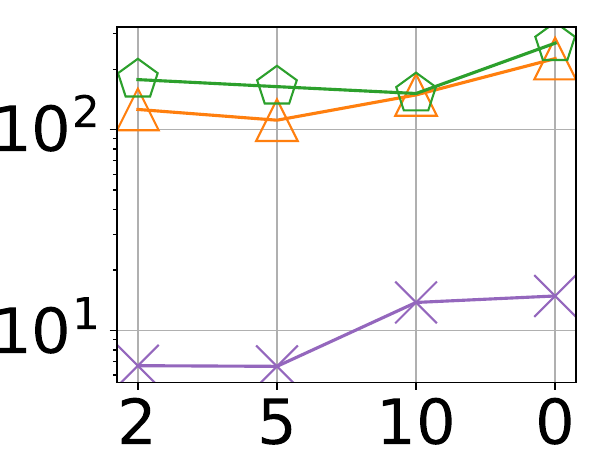}
                \caption{\textbf{ND}}
        \end{subfigure}
            \begin{subfigure}{0.17\linewidth}
                \hspace{-1mm}\includegraphics[width=\textwidth]{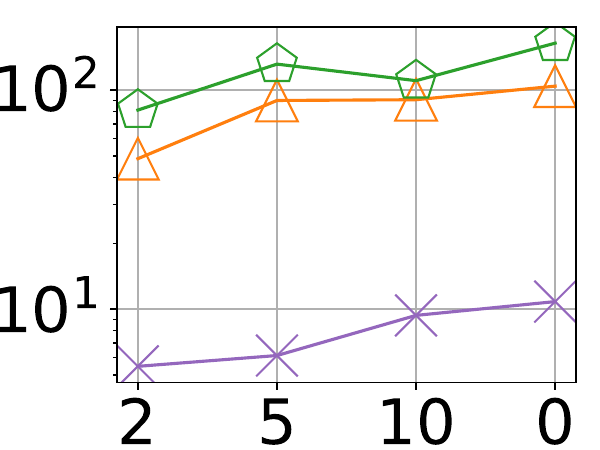}
                \caption{\textbf{Go}}
        \end{subfigure}
            \begin{subfigure}{0.17\linewidth}
                \hspace{-1mm}\includegraphics[width=\textwidth]{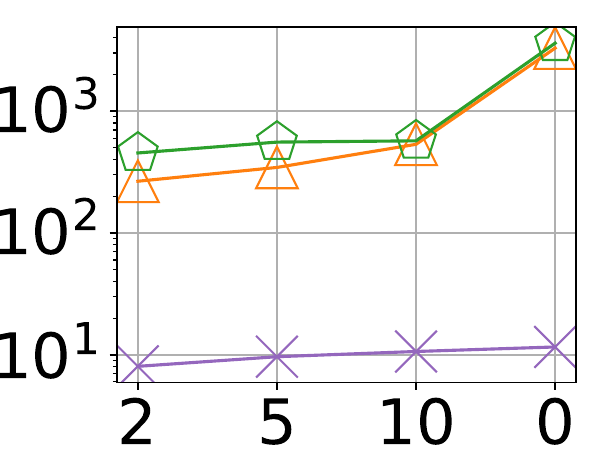}
                \caption{\textbf{To}}
        \end{subfigure}
            \begin{subfigure}{0.17\linewidth}
                \hspace{-2mm}\includegraphics[width=\textwidth]{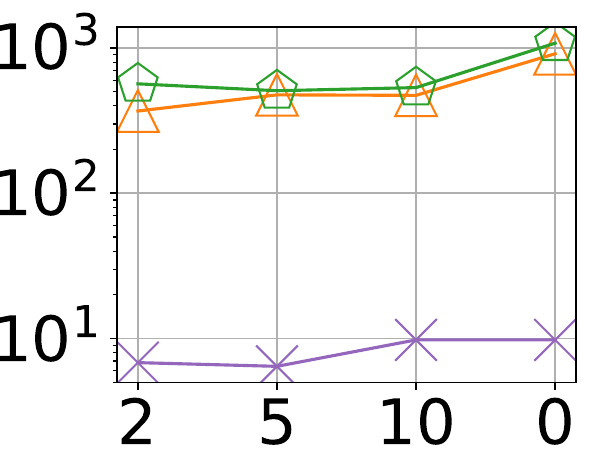}
                \caption{\textbf{Po}}
        \end{subfigure}
        \begin{subfigure}{0.17\linewidth}
            \hspace{-2mm}\includegraphics[width=\textwidth]{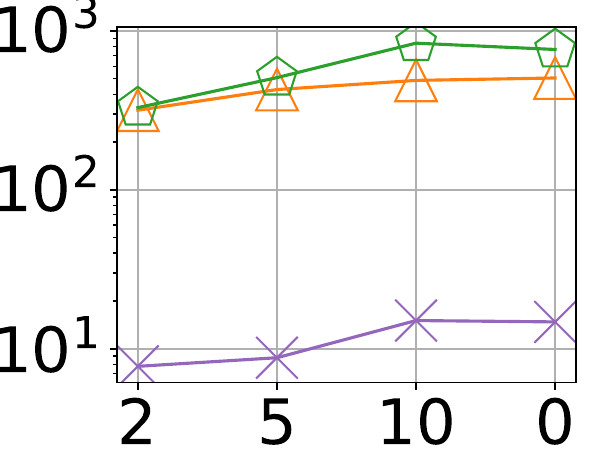}
            \caption{\textbf{Sk}}
        \end{subfigure}
        \begin{subfigure}{0.17\linewidth}
            \hspace{-2mm}\includegraphics[width=\textwidth]{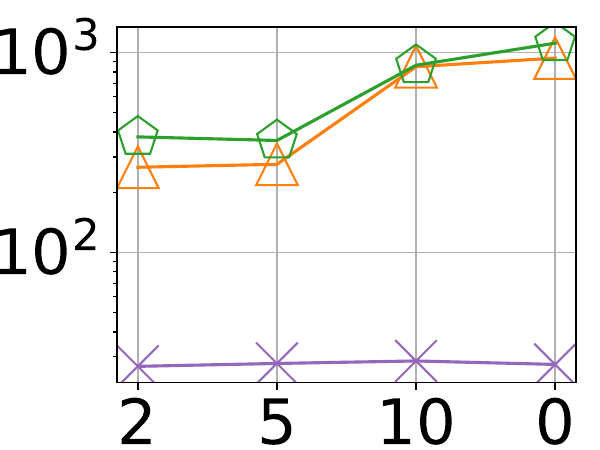}
            \caption{\textbf{Ta}}
        \end{subfigure}
        \begin{subfigure}{0.17\linewidth}
            \hspace{-2mm}\includegraphics[width=\textwidth]{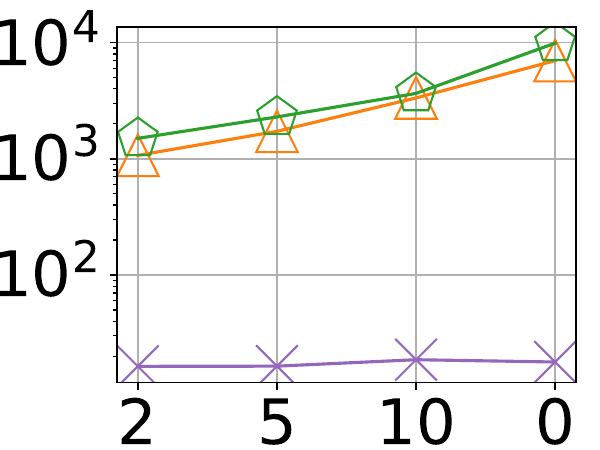}
            \caption{\textbf{Or}}
        \end{subfigure}
        \begin{subfigure}{0.17\linewidth}
            \hspace{-2mm}\includegraphics[width=\textwidth]{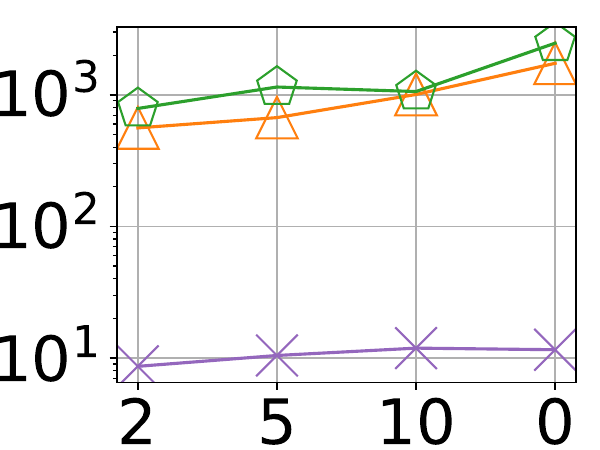}
            \caption{\textbf{LJ}}

        \end{subfigure}
    \vspace{-3mm}
    \caption{Average update time ($\times10^{-6}$ second) vs. $\eta$}
    \label{fig:eta_varying_full}
\end{figure*}

\vspace{1mm}
\noindent\textbf{Parameter Setting.} We evaluate all applications in Table~\ref{tab:example} using their respective parameter settings. Following~\cite{wang2021approximate}, we set $\alpha = 0.2$ for random walk, $t = 4$ for Poisson distribution. 
For applications where $\mathbf{x}$ is a one-hot vector, we construct it by selecting $j$ uniformly at random from $[1,n]$ and setting $\mathbf{x}_j=1$. For GNN applications, we generate $n$ values following a uniform distribution with their sum equal to 1 and assign them to $\mathbf{x}$. We set $\delta=\frac{1}{n}$, $c = 0.1$ and $L=O(\log n)$ for all queries.

\vspace{1mm}
\noindent\textbf{Results.} For each dataset, we run five queries using five randomly generated instances of $\mathbf{x}$, except for PageRank, where $\mathbf{x}=[1/n, \ldots, 1/n]^T$ remains fixed. The average runtime of all applications on each dataset is reported in Figure~\ref{fig:query}. As discussed in Section~\ref{sec:query_ana}, our tighter bounds lead to improvements in query time across all datasets. Notably, on \texttt{Po}, our \algosta\ speeds up \algoagp\ by 9.74$\times$. 
For \algodyn, it picks a larger $p^*$ in subset sampling, and hence  
generates a larger candidate sample set for rejection sampling. 
Nonetheless, it still outperforms \algoagp\ by up to 6.21$\times$ (also on \texttt{Po}). 
Meanwhile, despite its theoretical superiority, \algodpss~performs worse than \algosta; this is mainly due to the larger hidden constants introduced by the more complex structure of DPSS. The results for each individual application demonstrate similar results, which are shown in Figure~\ref{fig:query_all}. Across all applications, our method \algosta\ achieves the best performance, thanks to its tighter bound and the new subset sampling algorithm. \algodyn\ exhibits slightly higher query time due to its use of a larger $p^*$ in each bucket. \algodpss\ performs slightly worse, primarily because of the large hidden constant in its query complexity.

\noindent \textbf{Query time vs. error.} 
Recall that the algorithms run $L$ steps recursively to approximate $\boldsymbol{\pi}$.
Next, we measure the relative errors and the running time consumed over these $L$ iterations. 
The ground truth is gained by running the deterministic algorithm for $L'$ steps until $\boldsymbol{\hat{\pi}}_{L'}(u) - \boldsymbol{\hat{\pi}}_{L'-1}(u) < 10^{-9}$ for all $u \in V$.
The average relative errors (ARE's) for all vertices vs. the running time on PageRank are reported in Figure~\ref{fig:error_qtime}. 
We observe that all algorithms have achieved the target relative error bound, which is 10\% as $c = 0.1$ within $L$ steps. 
Specifically, on \texttt{ND}, \texttt{Go}, \texttt{Po}, and \texttt{LJ}, the ARE's are below 3\%. 
Since $\varepsilon$ is set to be different values, recall that $\varepsilon = O(\frac{\delta}{L})$ in
our AGP-Static++ and AGP-Dynamic, while $\varepsilon = O(\frac{\delta}{L^2})$ in AGP-Static.
We observe that our methods incur slightly higher errors than AGP-Static after $L$ steps. 
This is because with a smaller $\varepsilon$, AGP-Static performs more deterministic propagation (with higher running time) and hence potentially introduces less error. 
Nonetheless, when considering the query time, our algorithms reach low relative errors significantly faster, except on  \texttt{Ta}.
In particular, as shown in Figure~\ref{fig:error_qtime} (i), on the largest graph, \texttt{LJ}, AGP-Static++ achieved an ARE of 3.085\% in 187 seconds, whereas AGP-Static required 565 seconds to achieve 3.537\% ARE.
However, on \texttt{Ta}, 
its small average degree 
($\bar{d} = 3.90$ as shown in Table~\ref{tab:dataset})
limits the efficiency gain from the randomized propagations for queries, 
resulting in comparable performance across all three algorithms.

\subsection{Updating Efficiency}
\label{sec:update_effi}

\noindent\textbf{Competitors.} We compare our \algodyn\ and \algosta\  with \algoagp. \algodpss\ uses the same update method as \algosta\ but with a larger hidden constant, and is omitted here.

\vspace{1mm}
\noindent\textbf{Update Generation.} 
We randomly generate a sequence of edge insertions and deletions for each dataset.
We examine the impact of the insertion-to-deletion ratio $\eta$ by assigning the probabilities of an insertion and a deletion as $\frac{\eta}{1+\eta}$ and $\frac{1}{1+\eta}$, respectively.
To delete an edge, we uniformly sample an existing edge and delete it.
To insert an edge, we adopt three strategies.
\textbf{Random-Random (RR)}: A non-existent edge is inserted by randomly selecting two unconnected vertices.
\textbf{Degree-Random (DR)}: A vertex $u$ is selected with probability $\frac{d_u}{2m}$, where $d_u$ is the degree of $u$ and $m$ is the current number of edges. Then, a vertex $v$ not yet connected to $u$ is chosen uniformly at random.
\textbf{Degree-Degree (DD)}: Vertex $u$ is selected as in DR. Then, vertex $v$ is selected from the vertices not yet connected to $u$, with probability proportional to $\frac{d_v}{2m}$.

By default, we set the insertion-to-deletion ratio to 10:1 ($\eta = 10$) and adopt the \textbf{DR} strategy, reflecting the common real-world scenario where insertions are more frequent than deletions, and users tend to follow high-degree nodes (celebrity) in social networks. We create $2m$ updates for each dataset and each $\eta$ and update pattern, where $m$ is the number of edges of the input graph.

\vspace{1mm}
\noindent\textbf{Results.}  As shown in Figure~\ref{fig:update}, our \algodyn\ outperforms \algosta\  and \algoagp\ across all datasets in update efficiency. 
Compared to \algosta\ and \algoagp, the speedup ranges from 8× (on \texttt{Go}) to 177× (on \texttt{Or}). 
This trend roughly matches the average degree of each dataset as shown in Table~\ref{tab:dataset}, which is consistent with the theoretical improvement from the $O(d_u)$ update time of existing solutions to our amortized $O(1)$ update time.
The results for \algosta\ and \algoagp\ indicate that updates are more costly on graphs with a larger average degree, which is expected since a change in $d_u$ can incur an $O(d_u)$ cost. 
Additionally, \algoagp\ maintains a sorted linked list which pays an extra $O(\log n)$ update time, hence is slower. 

\vspace{1mm}

\noindent\underline{Impact of Update Pattern.} As further shown in Figure~\ref{fig:eta_varying_full}, overall, the update times for \algosta\ and \algoagp\ increase as updates become more skewed (from RR to DR to DD). This is expected, as updates on higher-degree vertices incur a larger cost (i.e., $O(d_u)$). However, this trend does not always hold for \algodyn. When the degree of a vertex increases, the threshold $2 \cdot \tilde{d}_u$ (or $\frac{1}{2} \cdot \tilde{d}_u$) also increases, making it harder to trigger the $O(d_u)$ cost in \algodyn. 
Notably, \algodyn\ outperforms the baseline methods on all datasets for all update patterns by up to 207$\times$ (DD on \texttt{Or}).

\vspace{1mm}
\noindent\underline{Impact of Insertion-to-Deletion Ratio.} Similarly, as shown in Figure~\ref{fig:eta_varying_full}, when the ratio of insertion increases, the update times for \algosta\ and \algoagp\ increase ($\eta = 0$ means no deletions), while our \algodyn\ has a more stable trend and outperforms \algoagp\ by up to 553$\times$ on \texttt{Or} when $\eta = 0 $. Additionally, on larger datasets, our \algoagp\ performs more stably. This is because there are more updates ($2m$) on larger datasets, which justify our amortized cost.

\vspace{1mm}
{\noindent\underline{Impact of Update-to-Query Frequency Ratio.}
From the above experiments,
we can see that \algodyn~is less efficient than \algosta in query time while it outperforms \algosta~in update cost.
To further study the trade off,
we compare the update and query times of \algodyn~and \algosta~in Table~\ref{tab:qvu}. 
The results show that \algodyn~is at most 36\% slower than \algosta~(i.e., $1.36 \times$ the time). 
We further examine the update-to-query frequency ratio under which \algodyn~outperforms \algosta. 
This ratio depends on the average degree $\bar{d}$ of the dataset. 
For instance, in \texttt{Ta} with $\bar{d}=3.9$, a relatively high update-to-query ratio is required 
for \algodyn~to gain an advantage. 
In contrast, in \texttt{Or} with $\bar{d}=76.22$, only 206 updates per query are sufficient 
for \algodyn~to become more efficient. 
On large graphs such as \texttt{LJ}, which contains over 1 billion edges, 
\algodyn~outperforms \algosta~with as few as 754 updates per query. 
Since updates are usually much more frequent than queries in practice, \algodyn~remains highly competitive.
}

\begin{figure}
\begin{subfigure}{0.333\linewidth}
    \includegraphics[width=1\textwidth]{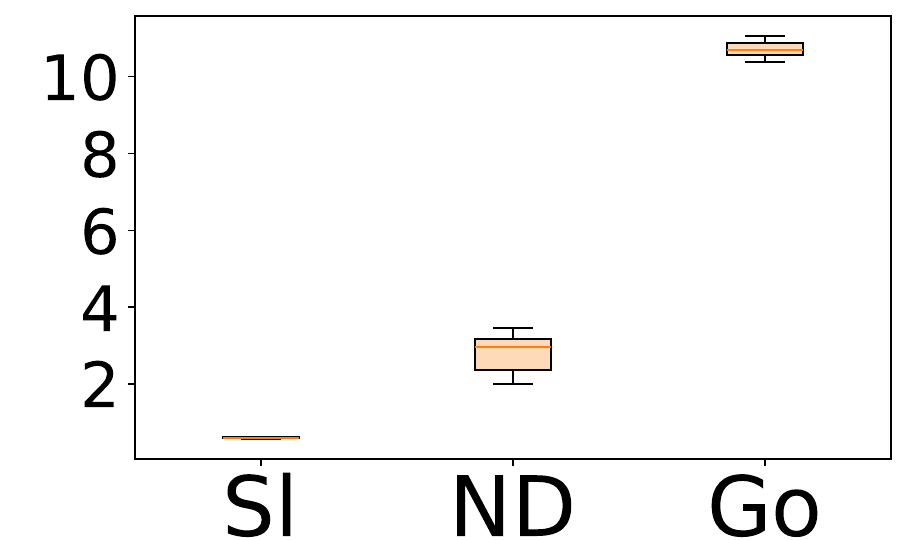}
\end{subfigure}%
\begin{subfigure}{0.333\linewidth}
    \includegraphics[width=1\textwidth]{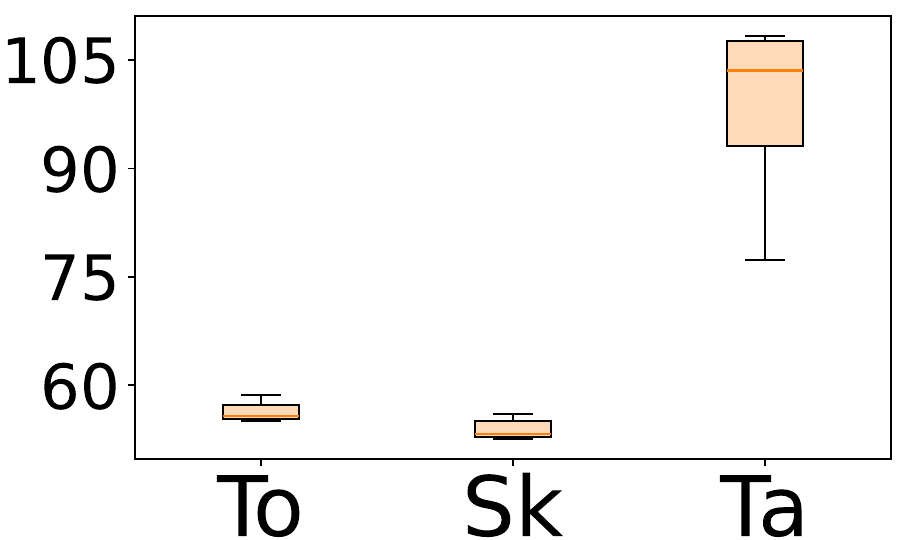}
\end{subfigure}%
\begin{subfigure}{0.333\linewidth}
    \includegraphics[width=1\textwidth]{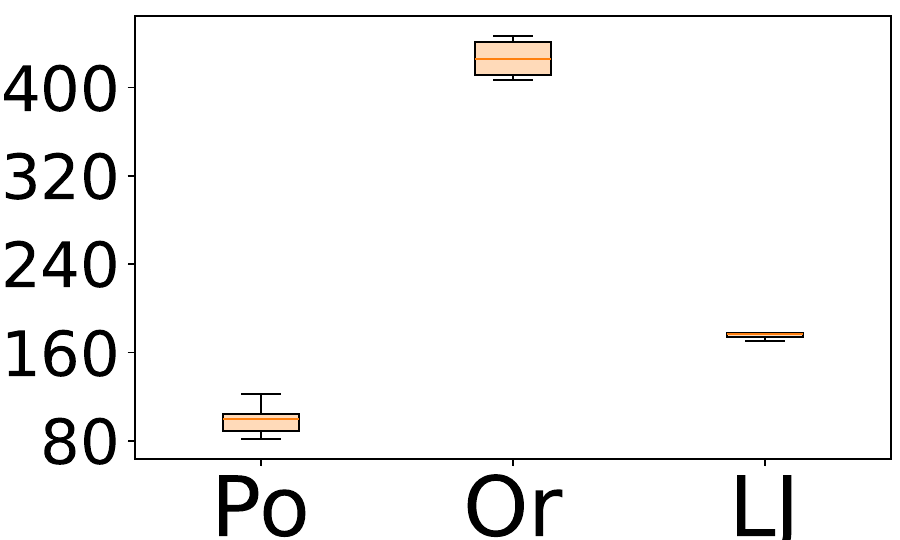}
\end{subfigure}
\vspace{-4mm}
    \caption{Query time (sec) on random $a$ and $b$ of \algosta}
    \label{fig:abd}
\end{figure}
\begin{figure}
\begin{subfigure}{0.333\linewidth}
    \includegraphics[width=1\textwidth]{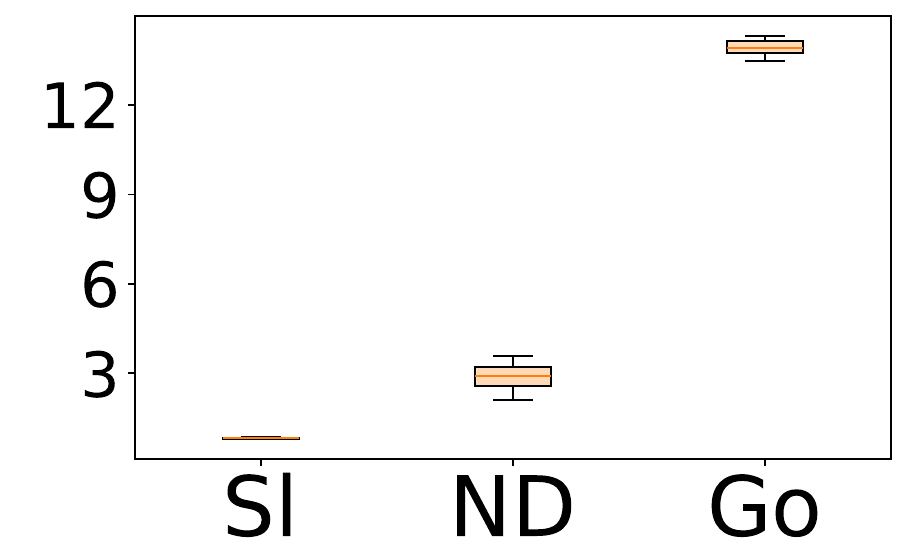}
\end{subfigure}%
\begin{subfigure}{0.333\linewidth}
    \includegraphics[width=1\textwidth]{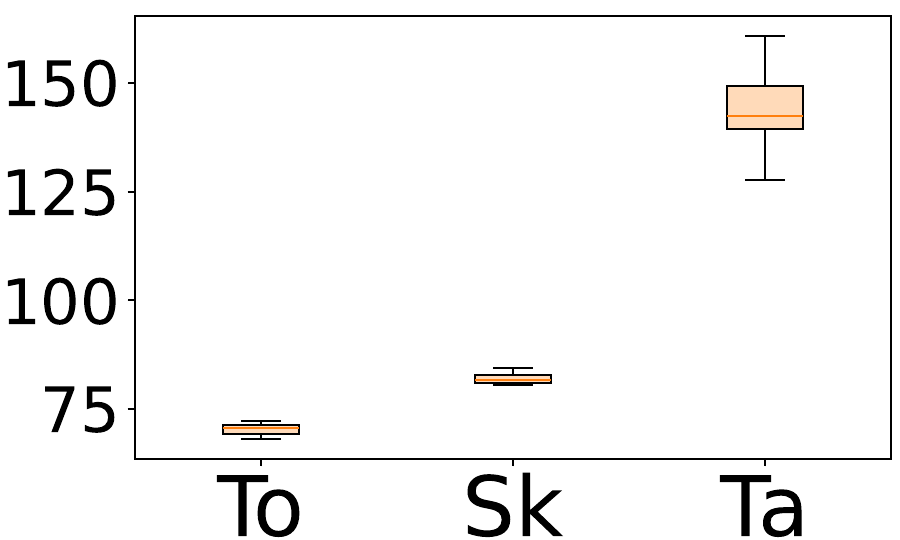}
\end{subfigure}%
\begin{subfigure}{0.333\linewidth}
    \includegraphics[width=1\textwidth]{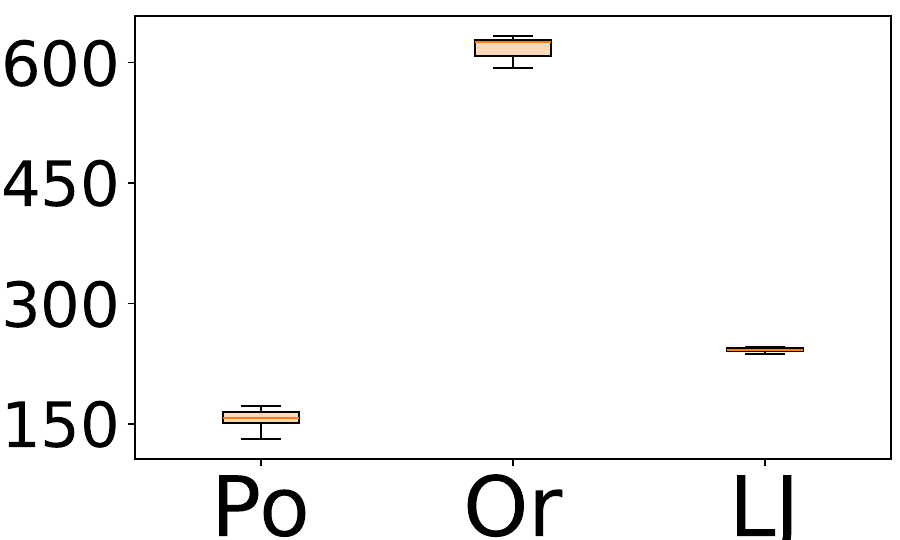}
\end{subfigure}
\vspace{-4mm}
    \caption{Query time (sec) on random $a$ and $b$ of \algodyn}
    \label{fig:aba+}
\end{figure}
\begin{figure}
\begin{subfigure}{0.333\linewidth}
    \includegraphics[width=1\textwidth]{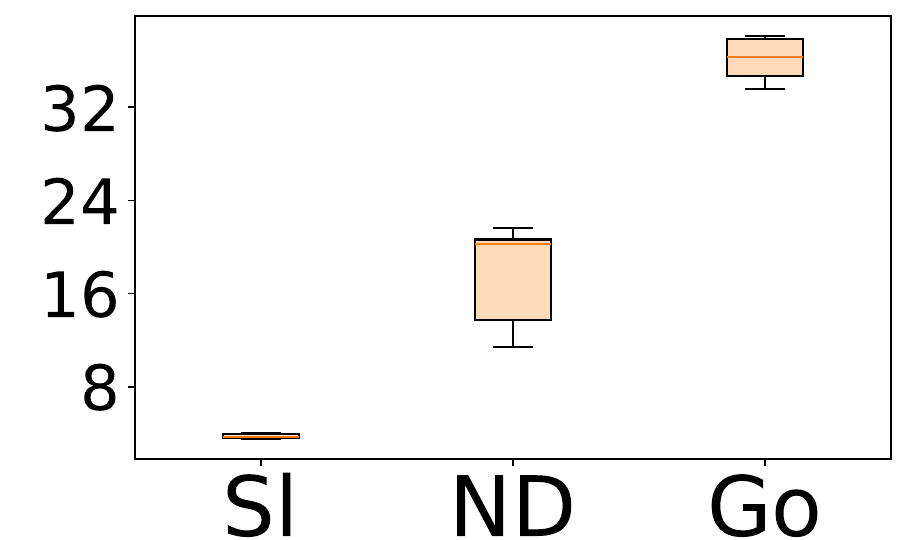}
\end{subfigure}%
\begin{subfigure}{0.333\linewidth}
    \includegraphics[width=1\textwidth]{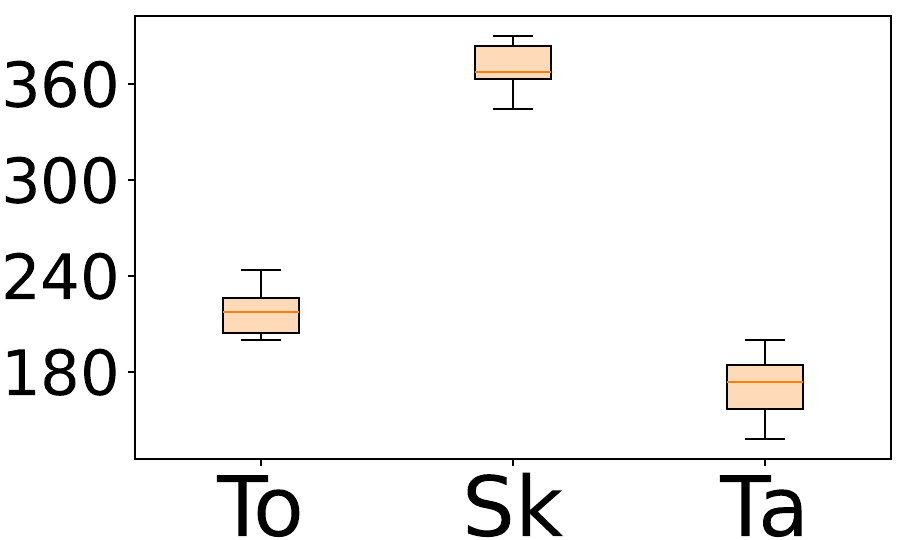}
\end{subfigure}%
\begin{subfigure}{0.333\linewidth}
    \includegraphics[width=1\textwidth]{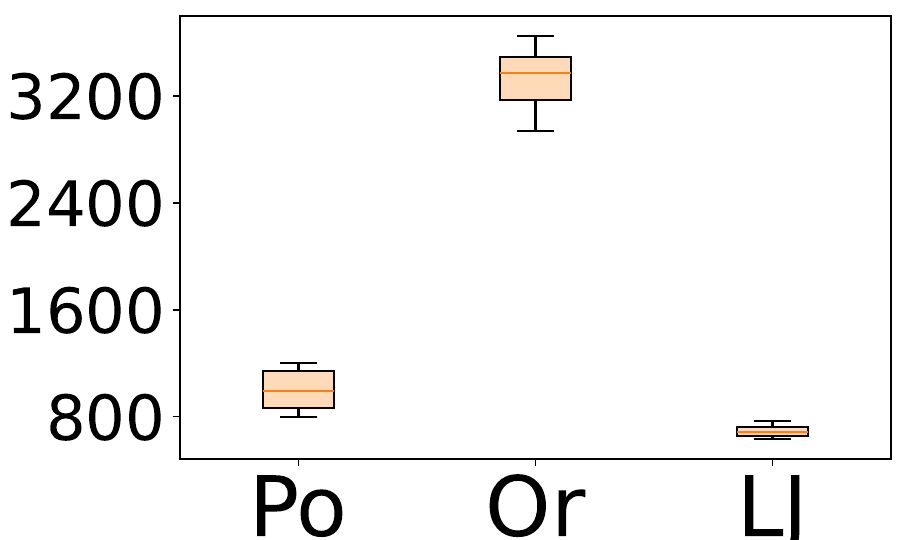}
\end{subfigure}
\vspace{-4mm}
    \caption{Query time (sec) on random $a$ and $b$ of \algoagp}
    \label{fig:aba}
\end{figure}

\vspace{4mm}
\subsection{Arbitrary $a$ and $b$} 
For each dataset, we execute 100 queries where $a$ and $b$ are drawn from the uniform distribution over $[0,1]$, subject to the constraint $a + b \geq 1$. We set $w_i = \alpha(1-\alpha)^i$ with $\alpha = 0.2$, and $\mathbf{x}$ to a one hot vector generated uniformly at random.
The results for \algodyn\ are shown in Figure~\ref{fig:abd}. As expected, the query times vary across different $(a, b)$ combinations, since according to Theorem~\ref{theorem:query}, $a$ and $b$ impact the output size and, consequently, the query complexity.
Nevertheless, as shown in Figure~\ref{fig:abd}, \algodyn\ consistently handles all queries within the same order of magnitude across datasets, with standard deviations ranging from 1.6\% to 33\% of the average query time (the highest variation is observed on \texttt{ND}).
The performance of \algosta\ and \algoagp\ exhibits similar trends but with slower response times, which are shown in Figure~\ref{fig:aba+} and Figure~\ref{fig:aba}, respectively. \algodpss\ cannot support $a$ and $b$ given on the fly without rebuilding the data structures, and hence it is omitted.

\begin{figure}
\begin{subfigure}{0.35\linewidth}
    \includegraphics[width=1\textwidth]{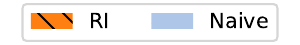}
\end{subfigure}\\
\begin{subfigure}{0.46\linewidth}
    \includegraphics[width=1\textwidth]{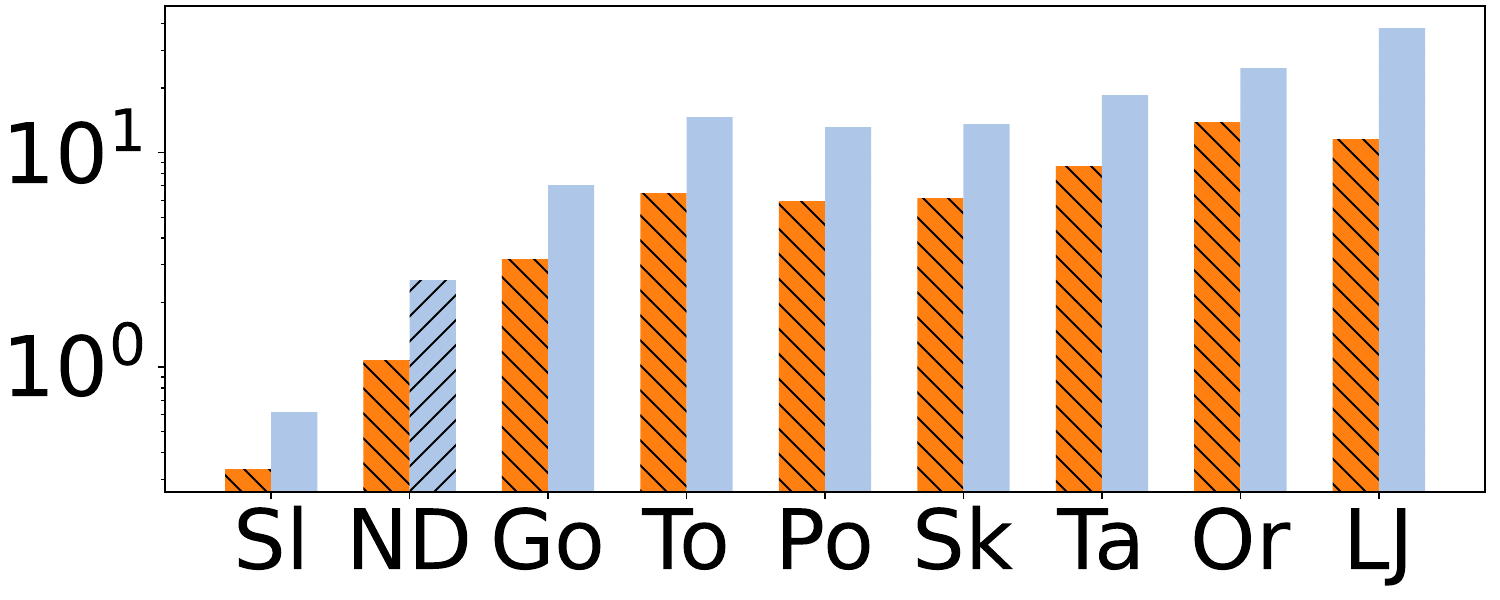}
    \vspace{-5mm}
    \caption{Random $\mathbf{x}$}
    \label{fig:init}
\end{subfigure}
\begin{subfigure}{0.46\linewidth}
    \includegraphics[width=1\textwidth]{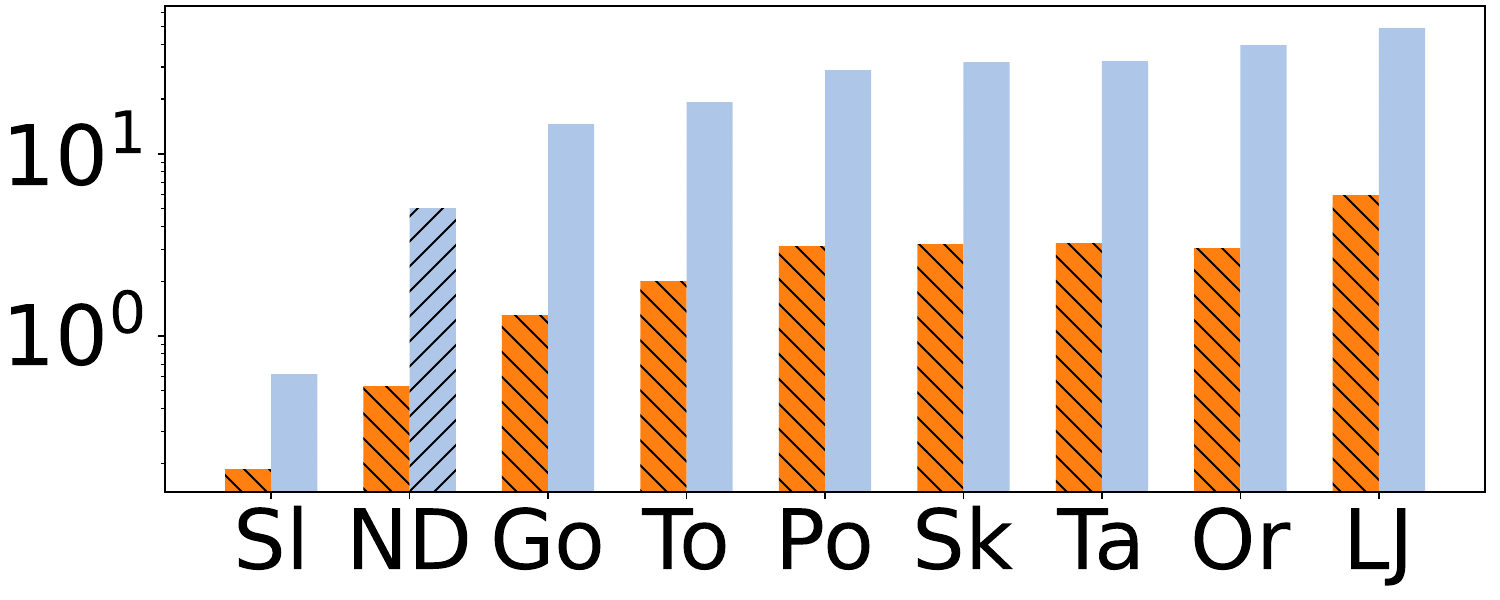}
    \vspace{-5mm}
    \caption{PageRank}
    \label{fig:pr}
\end{subfigure}
\vspace{-2mm}
    \caption{Average initialization time ($\times10^{-3}$ second) }
\vspace{-2mm}
\end{figure}

\begin{figure*}[ht]
    \centering
    \begin{subfigure}{0.15\linewidth}
        \includegraphics[width=\textwidth]{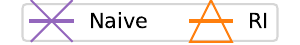}
    \end{subfigure}\\
        \begin{subfigure}{0.18\linewidth}
            \includegraphics[width=\textwidth]{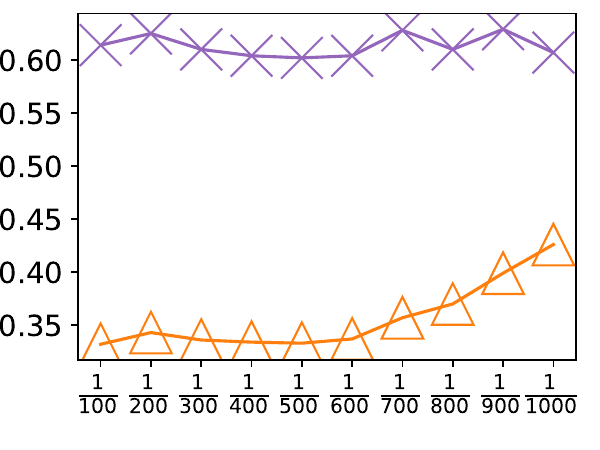}
            \caption{\textbf{Sl}}
        \end{subfigure}
            \begin{subfigure}{0.18\linewidth}
                \includegraphics[width=\textwidth]{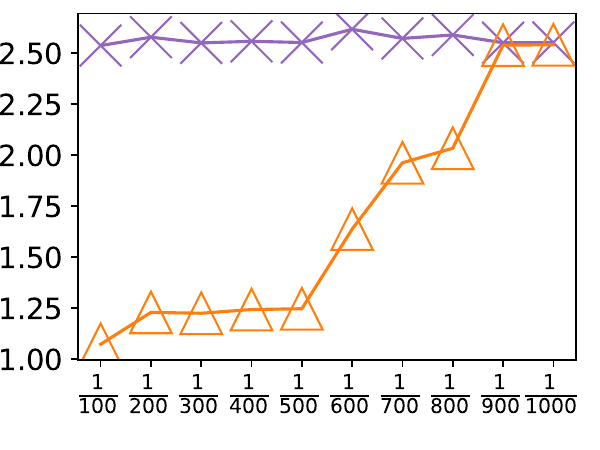}
                \caption{\textbf{ND}}
        \end{subfigure}
            \begin{subfigure}{0.18\linewidth}
                \includegraphics[width=\textwidth]{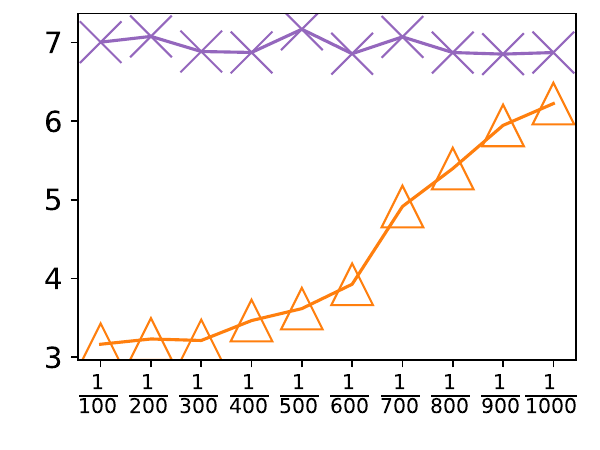}
                \caption{\textbf{Go}}
        \end{subfigure}
            \begin{subfigure}{0.18\linewidth}
                \includegraphics[width=\textwidth]{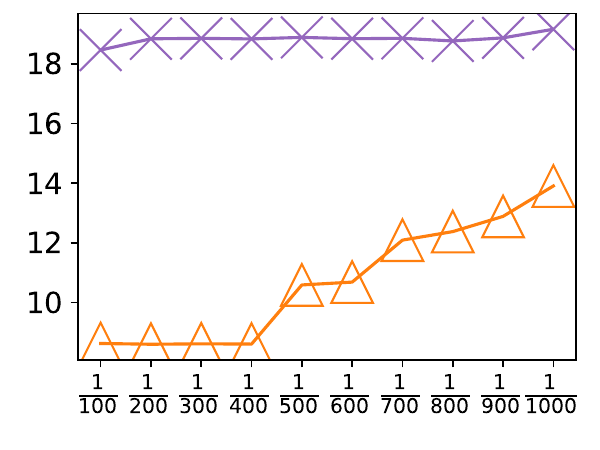}
                \caption{\textbf{To}}
        \end{subfigure}
            \begin{subfigure}{0.18\linewidth}
                \includegraphics[width=\textwidth]{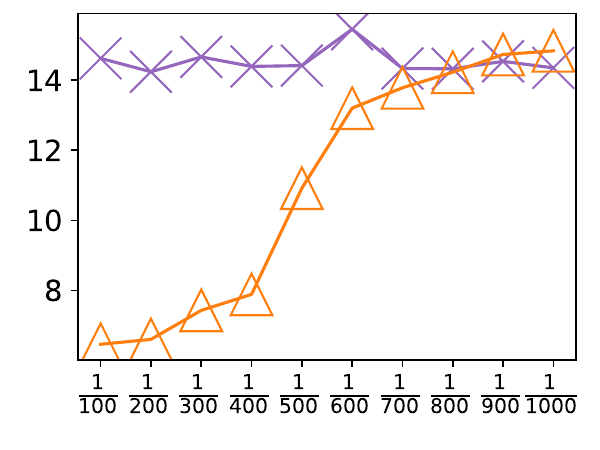}
                \caption{\textbf{Po}}
        \end{subfigure}
     \centering
    
            \begin{subfigure}{0.18\linewidth}
            \includegraphics[width=\textwidth]{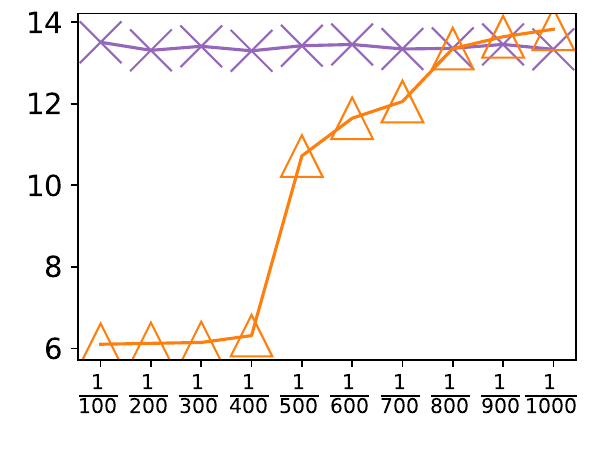}
            \caption{\textbf{Sk}}
        \end{subfigure}
        \begin{subfigure}{0.18\linewidth}
            \includegraphics[width=\textwidth]{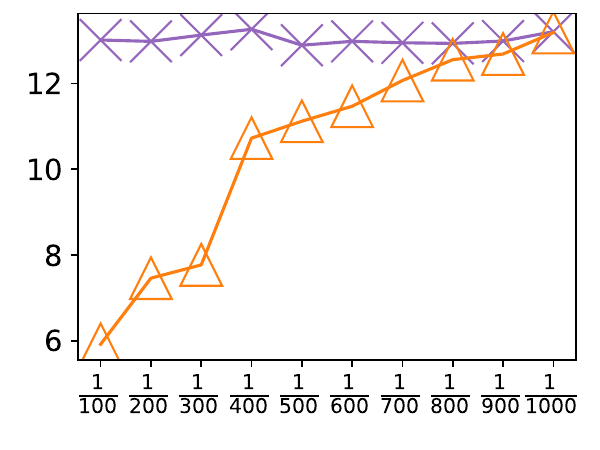}
            \caption{\textbf{Ta}}
        \end{subfigure}
        \begin{subfigure}{0.18\linewidth}
            \includegraphics[width=\textwidth]{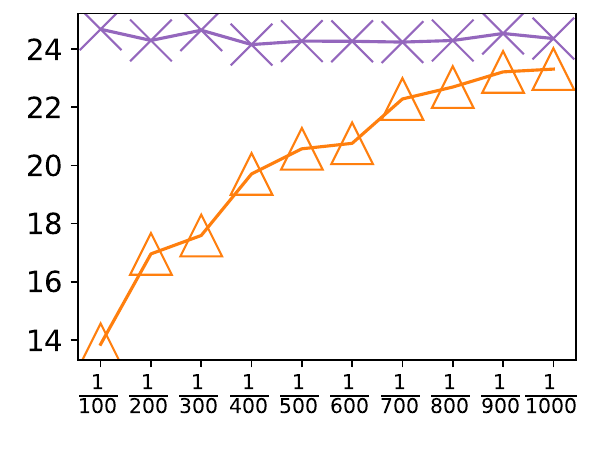}
            \caption{\textbf{Or}}
        \end{subfigure}
        \begin{subfigure}{0.18\linewidth}
            \includegraphics[width=\textwidth]{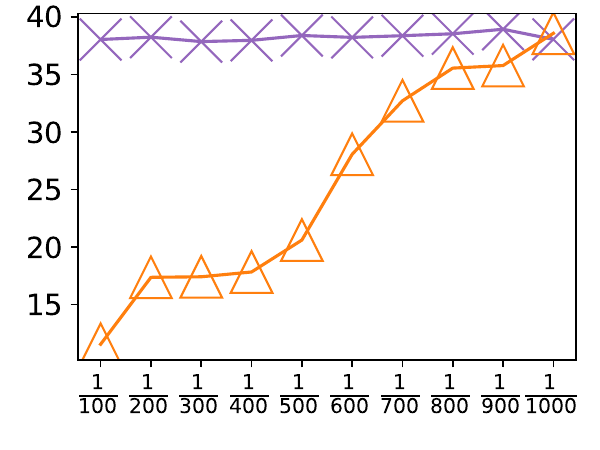}
            \caption{\textbf{LJ}}
        \end{subfigure}
    \caption{Average initialization time ($\times10^{-3}$ second) vs. $\delta$}
    \label{fig:delta_varying_full}
\end{figure*}

\subsection{Initialization Efficiency}
\label{ap:init}
\noindent\textbf{Competitors.} We compare our randomized algorithm (denoted as \emph{RI}) with the naive deterministic initialization approach.

\vspace{1mm}
\noindent\textbf{Generate $\mathbf{x}$.} We generate $\mathbf{x}$ in the form of $P$ and $S$ in three steps. First, generate the group number $|P|$ in $[1, \frac{1}{\varepsilon}]$ uniformly at random. Second, for each group, generate group size $|P_i|$ following uniform distribution with $\sum_{i=1}^{|P|}|P_i| = n$. Last, for each group, generate a value $s_i$ following uniform distribution satisfying $\sum_{i=1}^{|P|}s_i = 1$, and then set the value of each group $S_i = \frac{s_i}{|P_i|}$.

For each dataset, we generate 100 random vectors $\mathbf{x}$ and set $\delta = \frac{1}{100}$.
As shown in Figure~\ref{fig:init}, \emph{RI} consistently outperforms the \emph{Naive} approach across all datasets, achieving up to a 3.29$\times$ speedup on \texttt{LJ}.
We also evaluate our algorithm in the context of the PageRank application, where $\mathbf{x} = \left[ \frac{1}{n}, \frac{1}{n}, \dots, \frac{1}{n} \right]^T$.
In this setting, our randomized initialization method exhibits even greater advantages, yielding at least a 2$\times$ speedup on \texttt{Sl} and up to an 11$\times$ improvement on \texttt{Or}, without compromising the approximation quality.
These results highlight the practical benefits of our randomized initialization strategy. 

As shown in Figure~\ref{fig:delta_varying_full}, when $\delta$ decreases, the performance of \emph{RI} degrades as expected, since a tighter error bound necessitates more direct assignments rather than sampling.
Nevertheless, even with $\delta = \frac{1}{1000}$, \emph{RI} still achieves up to a 1.42$\times$ speedup.

\vspace{1mm}
\section{Conclusion}
We revisited the Approximate Graph Propagation problem, which generalizes different graph propagation applications through a unified formulation. For the static case of the problem without graph updates, we derived a tighter query time complexity bound, improving over the state of the art by a factor up to $O(\log^2 n)$ and resulting in a 10$\times$ increase in query efficiency empirically. 
For the dynamic case with graph updates, our proposed \algodyn\ algorithm supports flexible query parameters given on-the-fly and attains an amortized $O(1)$ update time, accelerating update efficiency by up to 177$\times$ empirically. 

\newpage

\bibliographystyle{ACM-Reference-Format}
\bibliography{ref}

\end{document}